\newtheorem{theorem}{Theorem}[section]
\crefname{theorem}{Theorem}{Theorems}
\newaliascnt{lemma}{theorem}
\newtheorem{lemma}[lemma]{Lemma}
\crefname{lemma}{Lemma}{Lemmas}
\newaliascnt{proposition}{theorem}
\newtheorem{proposition}[proposition]{Proposition}
\crefname{proposition}{Proposition}{Propositions}
\newaliascnt{corollary}{theorem}
\newtheorem{corollary}[corollary]{Corollary}
\crefname{corollary}{Corollary}{Corollaries}
\newaliascnt{fact}{theorem}
\newtheorem{fact}[fact]{Fact}
\crefname{fact}{Fact}{Facts}
\newaliascnt{definition}{theorem}
\newtheorem{definition}[definition]{Definition}
\crefname{definition}{Definition}{Definitions}
\newaliascnt{remark}{theorem}
\newtheorem{remark}[remark]{Remark}
\crefname{remark}{Remark}{Remarks}
\newaliascnt{conjecture}{theorem}
\crefname{conjecture}{Conjecture}{Conjectures}
\newaliascnt{claim}{theorem}
\crefname{claim}{Claim}{Claims}
\newaliascnt{question}{theorem}
\crefname{question}{Question}{Questions}
\newaliascnt{exercise}{theorem}
\crefname{exercise}{Exercise}{Exercises}
\newaliascnt{example}{theorem}
\crefname{example}{Example}{Examples}
\newaliascnt{notation}{theorem}
\crefname{notation}{Notation}{Notations}
\newaliascnt{problem}{theorem}
\crefname{problem}{Problem}{Problems}
\def\E{\mathbb E}
\newcommand{\N}{\mathbb N}
\newcommand{\R}{\mathbb R}
\begin{document}

\title{A Graph-Theoretic Approach to Multitasking}

\author{
  Noga Alon
  \thanks{Tel Aviv University,
  \texttt{nogaa@post.tau.ac.il}}
  \and
  Jonathan D. Cohen
  \thanks{Princeton University
  \texttt{jdc@princeton.edu}}
 \and
    Bisawdip Dey
\thanks{Princeton University
  \texttt{biswadip@princeton.edu}}
   \and
  Tom Griffiths
  \thanks{University of California, Berkeley
 \texttt{tom\_griffiths@berkeley.edu}}
  \and
  Sebastian Musslick
  \thanks{Princeton University
  \texttt{musslick@princeton.edu}}
  \and
  Kayhan {\"O}zcimder
  \thanks{Princeton University
  \texttt{ozcimder@exchange.princeton.edu}}
  \and
    Daniel Reichman
  \thanks{University of California, Berkeley
  \texttt{daniel.reichman@gmail.com}}
    \and
  Igor Shinkar
  \thanks{University of California, Berkeley
  \texttt{igor.shinkar@gmail.com}}
  \and
  Tal Wagner
 \thanks{MIT, CSAIL,
  \texttt{talw@mit.edu}}
  }

\maketitle

\begin{abstract}
 A key feature of neural network architectures is their ability to support the simultaneous
interaction among large numbers of units in the learning and
processing of representations. However, how the richness of such interactions
trades off against the ability of a network to simultaneously carry out multiple
independent processes -- a salient limitation in many domains of human cognition --
remains largely unexplored. In this paper we use a graph-theoretic analysis of network
architecture to address this question, where
tasks are represented as edges in a bipartite graph $G=(A \cup B, E)$.
We define a new measure of multitasking capacity of such networks, based on the assumptions that tasks that \emph{need} to be multitasked rely on independent resources, i.e., form a matching,
and that tasks \emph{can} be multitasked without interference if they form an induced matching.
Our main result is an inherent tradeoff between the multitasking capacity and the average degree of the network that holds \emph{regardless of the network architecture}.
These results are also extended to networks of depth greater than $2$. On the positive side, we demonstrate that networks that are random-like (e.g., locally sparse) can have desirable multitasking properties.
Our results shed light into the parallel-processing limitations of neural systems
and provide insights that may be useful for the analysis and design of parallel architectures.

\end{abstract}

\newpage
\tableofcontents
\newpage

%%%%%%%%%%%%%%%%%%%%%%%%%%%%%%%%%%%%%%%%%%%%%%%%%%%%%%%%%%%%%%%
\section{Introduction}
%%%%%%%%%%%%%%%%%%%%%%%%%%%%%%%%%%%%%%%%%%%%%%%%%%%%%%%%%%%%%%%

One of the primary features of neural network architectures is their ability to support parallel distributed processing. The decentralized nature of biological and artificial nets results in greater robustness and fault tolerance when compared to serial architectures such as Turing machines. On the other hand, the lack of a central coordination mechanism in neural networks can result in interference between units (neurons) and such interference effects have been demonstrated in several settings such as the analysis of associative memories \cite{amit1985storing} and multitask learning \cite{mccloskey1989catastrophic}. Understating the source of such interference and how it can be prevented has been a major focus of recent research (see, e.g., \cite{kirkpatrick2017overcoming} and the references therein).

Recently, a graph-theoretic model has suggested that interference effects may explain the limitations of the human cognitive system in multitasking: the ability to carry out multiple independent processes at the same time. This model consists of a simple 2-layer feed-forward network represented by a bipartite graph $G=(A \cup B,E)$ wherein the vertex set is partitioned into two disjoint sets of nodes $A$ and $B$, representing the inputs and the outputs of tasks respectively.
An edge $(a,b) \in E$ corresponds to a directed pathway from the input layer to the output layer in the network that is taken to represent a cognitive process (or task) that maps an input to an output \cite{neisser1967}. In more abstract terms, every vertex in $a \in A$ is associated with a set of inputs $I_a$, every vertex in $B$ is associated with a set of outputs $O_b$ and the edge $(a,b)$ is associated with a function $f_{a,b}:I_a\rightarrow O_b$ \footnote{The function $f_{a,b}$ is hypothesized to be implemented by a gate used in neural networks such as sigmoid or threshold gate.}.
In this work, we also consider deeper architectures with $r>2$ layers, where edges correspond to mappings between nodes from consecutive layers and a path $P$ from the input (first) layer to the output (last) layer is simply the composition of the mappings on the edges in $P$. The model above is quite general and simple modifications of it may apply to other settings. For example, we can assume the vertices in $A$ are senders and vertices in $B$ are receivers and that a task associated with an edge $e=(a,b)$ is transmitting information from $a$ to $b$ along a communication channel $e$.

Given a 2-layer network, a \emph{task set} is a set of edges $T \subseteq E$. A key assumption made in \cite{feng2014multitasking} that we adopt as well is that all task sets that need to be multitasked in parallel form a \emph{matching}, namely, no two edges in $T$ share a vertex as an endpoint. This assumption reflects a limitation on the parallelism of the network that is similar to the Exclusive Read Exclusive Write (EREW) model in parallel RAM, where the tasks cannot simultaneously read from the same input or write to the same output. Similarly, for depth $r>2$ networks, task sets correspond to \emph{node disjoint} paths from the input layer to the output layer. For simplicity, we shall focus from now on the depth 2 case with $|A|=|B|=n$.

In \cite{Musslick2016a,feng2014multitasking} it is suggested that concurrently executing two tasks associated with two (disjoint) edges $e$ and $f$ will result in interference if $e$ and $f$ are connected by a third edge $h$. The rationale for this interference assumption stems from the distributed operation of the network that may result in the task associated with $h$ becoming activated automatically once its input and output are operating, resulting with interference with the tasks associated with $e$ and $f$. Therefore, \cite{Musslick2016a,feng2014multitasking} postulate that all tasks within a task set $T$ can be performed in parallel without interferences only if the edges in $T$ form an \emph{induced} matching. Namely, no two edges in $T$ are connected by a third edge. Interestingly, the induced matching condition also arises in the communication setting \cite{birk1993uniform,alon2012nearly,chlamtac1985broadcasting}, where it is assumed that messages between senders and receivers can be reliably transmitted if the edge set connecting these nodes forms an induced matching. Following the aforementioned interference model, \cite{Musslick2016a,feng2014multitasking} define the multitasking capability of a bipartite network $G$ as the maximum cardinality of an induced matching in $G$.

The main message of \cite{Musslick2016a,feng2014multitasking} is that there is a fundamental tradeoff in
neural network architectures like the human brain between the \emph{efficiency} of shared representations
, and the independence of representations that supports concurrent multitasking (this tradeoff is termed ``multitasking versus multiplexing''). In graph-theoretic terms, it is suggested that as the average degree $d$ (``efficiency of representations''--larger degree corresponds to more economical and efficient use of shared respresentations)
of $G$ increases, the ``multitasking ability'' should decay in $d$. In other words, the cardinality of the maximal induced matching should be upper bounded by $f(d)n$ with $\lim_{d \rightarrow \infty} f(d)=0$. This prediction was tested and supported on certain architectures by numerical simulations in \cite{Musslick2016a,feng2014multitasking}.
Establishing such as a tradeoff is of interest, as it can identify limitations of artificial nets that rely on shared representations and aid in designing systems that attain an optimal tradeoff. Furthermore, such a tradeoff is also of significance for cognitive neuroscience as it can shed some light on the source of the striking limitation of the human cognitive system to execute control demanding tasks simultaneously.

%The main message of \cite{Musslick2016a,feng2014multitasking} is that there is a fundamental tradeoff in
%neural network architectures like the human brain between the \emph{efficiency} of shared representations,
%and the independence of representations that supports concurrent multitasking. This tradeoff is termed ``multitasking versus multiplexing''.
%In graph theoretic terms, it is suggested that as the average degree $d$ of a graph increases (``efficiency of representations" -- larger degree corresponds to more economical and efficient %use of shared representations),
%the ``multitasking ability" of the corresponding graph should decay. That is, the cardinality of the maximal induced matching should be upper bounded by $f(d)n$ with $\lim_{d \rightarrow %\infty} f(d)=0$.
%Such as a tradeoff is of interest, as it can identify limitations of artificial nets that rely on shared representations and aid in designing systems that attain an optimal tradeoff.
%This prediction was tested and supported on certain architectures by numerical simulations in \cite{Musslick2016a,feng2014multitasking}.

Identifying the multitasking capacity of $G=(A \cup B,E)$ with the size of its maximal induced matching has two drawbacks.
First, the fact that there is some, possibly large, set of tasks that can be multitasked does not preclude the existence of a (possibly small) set of critical tasks that greatly interfere with each other (e.g., consider the case in which a complete bipartite graph $K_{d,d}$ occurs as a subgraph of $G$. This is illustrated in Figure~\ref{fig:multitask_example}). Second, it is easy to give examples of graphs (where $|A|=|B|=n$) with arbitrarily large average degree that nonetheless contain an induced matching of size $n/2$.
For example, there are $d$-regular bipartite graphs with $n$ vertices on each side that contain an induced matching
of size $n/2$ even when $d=\Omega(n)$ (For example, one can take two copies of a dense bipartite graph $F$ and connect these two copies with a perfect matching-see Figure~\ref{fig:multitask_example} for an illustration).
Hence, it is impossible to upper bound the multitasking capacity of every network with average degree $d$ by $f(d)n$ with $f$ vanishing as the average degree $d$ tends infinity.
Therefore, the generality of the suggested tradeoff between efficiency and concurrency is not clear under this definition.
\begin{figure}
    \centering
    \includegraphics[scale=0.6]{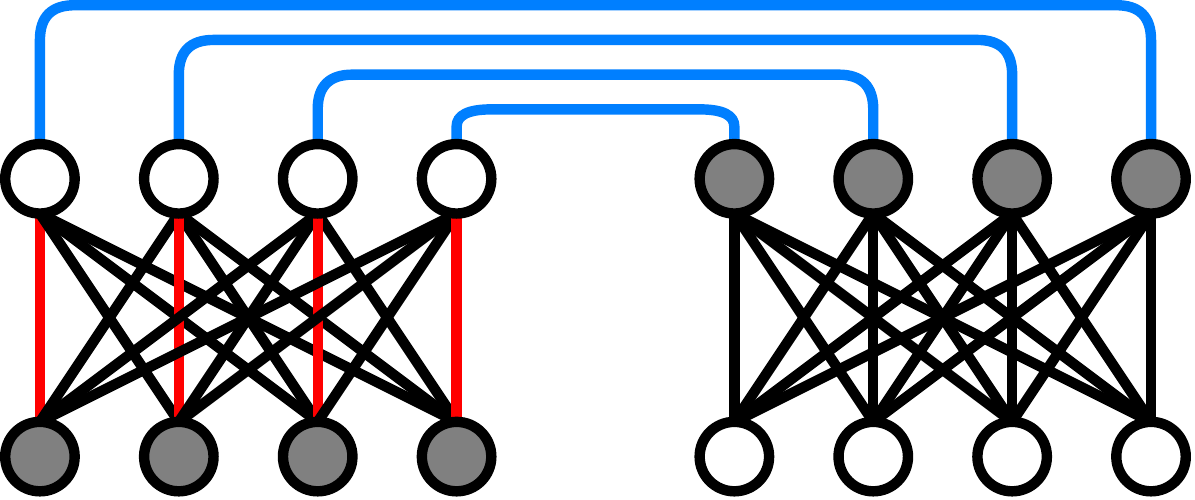}
    \caption{In the depicted bipartite graph, the node shading represents the bipartition. The blue edges form an induced matching, which represents a large set of tasks that can be multitasked. However, the red edges form a matching in which the largest induced matching has size only $1$. This represents a set of tasks that greatly interfere with each other.}
    \label{fig:multitask_example}
\end{figure}

Our main contribution is a novel measure of the multitasking capacity that is aimed at solving the first problem, namely networks with ``high'' capacity that contain a task set whose edges badly interfere with one another.
In particular, for a parameter $k$ we consider \emph{every} matching of size $k$,
and ask whether every matching $M$ of size $k$ contains a large \emph{induced} matching $M'\subseteq M$. This motivates the following definition (see Figure~\ref{fig:hypercube} for an illustration).

\begin{definition}
Let $G=(A \cup B,E)$ be a bipartite graph with $|A|=|B|=n$, and let $k \in \N, k \leq n$ be a parameter.
We say that $G$ is a $(k,\alpha(k))$-multitasker
if for every matching $M$ in $G$ of size $|M| = k$,
there exists an induced matching $M' \subseteq M$ such that
\[
    |M'| \geq \alpha(k) |M|.
 \]
We will say that a graph $G$ is an $\alpha$-multitasker if it is $(k,\alpha)$-multitasker for all $k=1,\dots,n$.

The parameter $\alpha \in (0,1]$ measures the multitasking capabilities of $G$,
and the larger $\alpha$ is the better multitasker $G$ is considered.
We call the parameter $\alpha(k) \in (0,1]$ the \emph{multitasking capacity} of $G$ for matchings of size $k$.
\end{definition}

Our definition generalizes without much difficulty to networks of depth $r>2$, where instead of matchings, we consider first to last
node disjoint paths, and instead of induced matchings we consider induced paths, i.e., a set of disjoint paths such that no two nodes belonging to different paths are adjacent.

Observe that our measure is related to the previously mentioned measure of the cardinality of an induced matching.
That is, if $G$ is an $(n,\alpha(n)$-multitasker for a large $\alpha(n)$, then $G$ contains a large induced matching.

\begin{figure}
    \centering
    \begin{subfigure}[b]{0.3\textwidth}
        \includegraphics[width=0.7\textwidth]{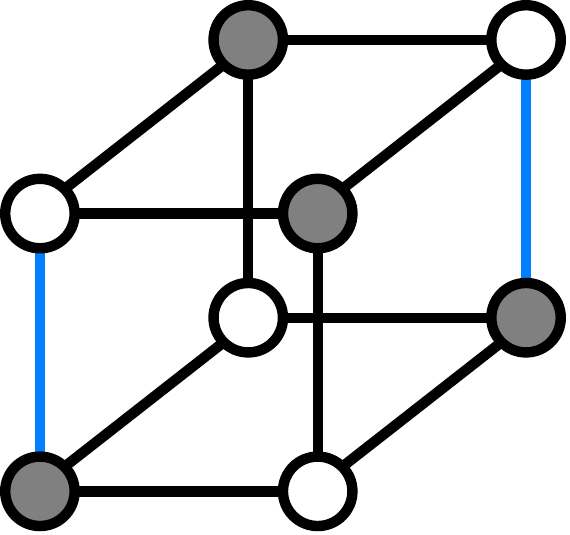}
    \end{subfigure}
 	\quad
    \begin{subfigure}[b]{0.3\textwidth}
        \includegraphics[width=0.7\textwidth]{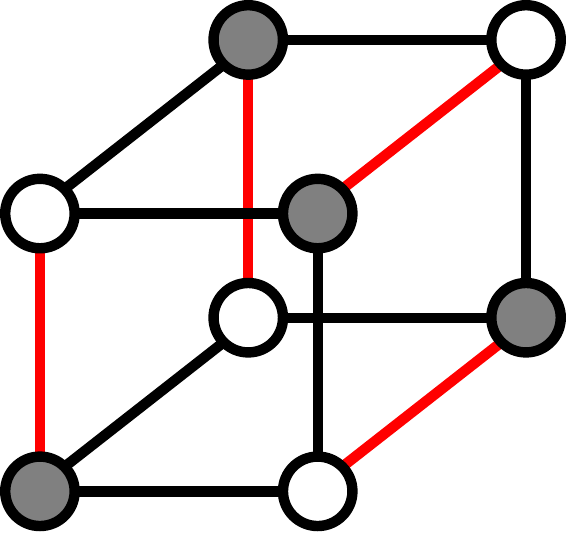}
    \end{subfigure}
    \caption{The hypercube on $8$ nodes. The node shading represents the bipartition. On the left, the blue edges form an induced matching of size $2$. On the right, the red edges form a matching of size $4$ whose largest induced matching has size $1$, and hence the multitasking capacity of the hypercube is at most $1/4$.}
    \label{fig:hypercube}
\end{figure}

The main question we shall consider here is what kind of tradeoffs one should expect between $\alpha, d$ and $k$. In particular, are there networks with
large average degree that achieve a multitasking capacity bounded away from $0$, especially, if $k$ is not too large? Which network architectures give rise to good multitasking behavior? Should we expect ``multitasking vs. multiplexing'': namely, $\alpha(k)$ tending to zero with $d$ for all graphs of average degree $d$? While our definition of multitasking capacity is aimed at resolving the problem of small task sets that can be poorly multitasked, it turns out to be also related also to the ``multitasking vs. multiplexing'' phenomena. Furthermore, our graph-theoretic formalism also gives insights as to how network depth and interferences are related.

%%%%%%%%%%%%%%%%%%%%%%%%%%%%%%%%%%%%%%%%%%%%%%%%%%%%%%%%%%%%%%%
\subsection{Our results}
%%%%%%%%%%%%%%%%%%%%%%%%%%%%%%%%%%%%%%%%%%%%%%%%%%%%%%%%%%%%%%%
We provide some answers to the questions raised above.
Our main contribution is in establishing a \emph{tradeoff} between multitasking capacity of a graph
and the its edge density that hold for \emph{arbitrary} networks.

We divide the presentation of the results into two parts.
The first part discusses the case of $d$-regular graphs,
and the second part discusses general graphs.
%%%%%%%%%%%%%%%%%%%%%%%%%%%%%%%%%%%%%%%%%%%%%%%%%%%%%%%%%%%%%%%
\paragraph{The $d$-regular case:}
%%%%%%%%%%%%%%%%%%%%%%%%%%%%%%%%%%%%%%%%%%%%%%%%%%%%%%%%%%%%%%%
Let $G = (A \cup B, E)$ be a bipartite $d$-regular graph with $n$ vertices on each side.
Considering the case of $k=n$, i.e., maximal possible induced matchings that are contained in a \emph{perfect matching},
we show that if a $d$-regular graphs is an $(n,\alpha(n))$-multitasker, then $\alpha(n)=O(1/\sqrt{d})$.
Our upper bound on $\alpha(n)$ establishes an inherent limitation on the multitasking capacity of any network.
That is, for any task set of size $n$ it holds that $\alpha(n)$ must tend to $0$ as the degree grows.
In fact, we prove that degree of the graph $d$ constrains the multitasking capacity also for task sets of smaller sizes.
Specifically, for $k$ that is sufficiently larger than $\Omega(n/d)$ it holds that $\alpha(k)$ tends to $0$ as $d$ increases.
We summarize these results in the following theorem.
\begin{theorem}\label{thm:d reg upper bound}
There is a constant $\gamma \in \R_+$ such that the following holds.
Let $G=(A \cup B,E)$, be a $d$-regular bipartite graph with $|A|=|B|=n$.
\begin{enumerate}
\item If $n/d^{1/4} \leq k  \leq n$, then $\alpha(k) \leq O(\frac{n}{k\sqrt{d}})$.
In particular, there exists a perfect matching in $G$ that does not contain an induced matching of size larger than $O(n/\sqrt{d})$.
\item If $n/d^{1/3} \leq k \leq n/d^{1/4}$, then $\alpha(k) \leq O(k/n)$.
\item If $\gamma n/d \leq k \leq n/d^{1/3}$ then $\alpha(k) \leq O(\sqrt{\frac{n}{kd}})$.
\end{enumerate}
\end{theorem}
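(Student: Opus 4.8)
The plan is to prove all three bounds by the probabilistic method: for each relevant $k$ I want to exhibit a \emph{single} matching $M$ of size $k$ whose largest induced sub-matching is small. The starting point is a convenient reformulation. Given a matching $M=\{a_ib_i:i\in[k]\}$, form its ``conflict graph'' $H_M$ on vertex set $[k]$, where $i\sim j$ iff some edge of $G$ joins the edges $a_ib_i$ and $a_jb_j$; then an induced sub-matching of $M$ is exactly an independent set of $H_M$. If, moreover, $M$ is the restriction to a set $S_0\subseteq A$ of a perfect matching $\sigma$ (so $b_i=\sigma(a_i)$), one checks easily that a subset $S\subseteq S_0$ induces a matching iff $\sigma$ maps each $a\in S$ into its \emph{private neighbourhood} $\mathrm{pn}_S(a):=N(a)\setminus N(S\setminus\{a\})$. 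The crucial bookkeeping fact is that the sets $\{\mathrm{pn}_S(a)\}_{a\in S}$ are pairwise disjoint subsets of $B$, so $\sum_{a\in S}|\mathrm{pn}_S(a)|\le n$.

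For the anchor case $k=n$ I would take $\sigma$ to be a uniformly random perfect matching of $G$ (one exists since $G$ is $d$-regular bipartite) and bound the expected number of induced sub-matchings of size $s$. Fixing $S$ with $|S|=s$, the event that $S$ induces a matching under $\sigma$ is the union, over injections $\tau\colon S\to B$ with $\tau(a)\in\mathrm{pn}_S(a)$, of the events $\{\sigma\supseteq\{(a,\tau(a)):a\in S\}\}$, and there are exactly $\prod_{a\in S}|\mathrm{pn}_S(a)|$ such $\tau$. Using the key estimate that a uniformly random perfect matching of a $d$-regular bipartite graph is spread out, i.e. $\Pr[N\subseteq\sigma]\le(C/d)^{|N|}$ for every matching $N$ and an absolute constant $C$, this probability is at most $(C/d)^{s}\prod_{a\in S}|\mathrm{pn}_S(a)|\le(Cn/(sd))^{s}$ by the arithmetic–geometric mean inequality together with $\sum_{a}|\mathrm{pn}_S(a)|\le n$. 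Multiplying by $\binom{n}{s}\le(en/s)^{s}$ and asking the product to drop below $1$ forces $s=O(n/\sqrt d)$. Hence some perfect matching has no induced sub-matching larger than $O(n/\sqrt d)$; restricting that one matching to any $k$ of its edges immediately yields $\alpha(k)\le O(n/(k\sqrt d))$ for all $n/d^{1/4}\le k\le n$, which is Part 1.

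For Parts 2 and 3 I would run the same first-moment computation but take $M$ to be the restriction of the random perfect matching $\sigma$ to a uniformly random $k$-subset $S_0\subseteq A$ (equivalently, pass to a random vertex-induced subgraph with $\approx k$ vertices per side, which behaves like a $(kd/n)$-regular graph). Now the expected count of size-$s$ induced sub-matchings acquires an extra factor $\Pr[S\subseteq S_0]\approx(k/n)^{s}$, and closing the bound forces $s=\Theta(\sqrt{nk/d})$, valid once $k\gtrsim n/d$ — below that threshold the private neighbourhoods can no longer be guaranteed nonempty and the method degenerates, which is the source of the cutoff $k\ge\gamma n/d$. Feeding in $\sqrt{nk/d}$ gives $\alpha(k)\le O(\sqrt{n/(kd)})$, which is Part 3; in the middle range $n/d^{1/3}\le k\le n/d^{1/4}$ one has $\sqrt{n/(kd)}\le O(k/n)$, recovering Part 2. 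The exponents $1/3$ and $1/4$ appear precisely as the crossover points where $\sqrt{nk/d}$, $k^2/n$, and the constant $n/\sqrt d$ change which one is dominant, and where the admissible range of $|N|$ in the spreading estimate runs out so that the weaker sub-sampling bound of Part 1 has to take over.

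The hard part will be the probabilistic estimate $\Pr[N\subseteq\sigma]\le(C/d)^{|N|}$ for a uniformly random perfect matching, in particular when $|N|$ is large (which is exactly the regime $n\gg d^{3/2}$). The route I would take is to telescope over the edges of $N$ and bound each conditional edge-inclusion probability in the residual bipartite graph, invoking the van der Waerden / Egorychev–Falikman lower bound and Brégman's upper bound for permanents; the delicacy is that the residual graph is no longer regular, so one must argue that deleting a bounded number of vertices cannot create a ``bottleneck'' (a nearly tight Hall set) that concentrates the matching, and one must keep the permanent ratio free of spurious $e^{\Theta(n\,\mathrm{polylog}(d)/d)}$ prefactors. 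A secondary, more routine obstacle is checking that the random induced subgraph used in the last two regimes has the degree concentration and (near-)perfect matching needed for the first-moment estimate to go through at size exactly $k$.
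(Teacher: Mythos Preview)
Your approach to Part~1 is correct and is essentially a repackaging of the paper's argument. The expected number of induced sub-matchings of size $s$ in a uniformly random perfect matching is precisely the average $L$-degree in the auxiliary bipartite graph $H(G,\alpha,n)$ that the paper sets up, and both arguments bound it by combining Schrijver's lower bound on the number of perfect matchings with Br\'egman's upper bound on the number of perfect matchings extending a fixed partial matching. Your ``spreading'' estimate $\Pr[N\subseteq\sigma]\le(C/d)^{|N|}$ is nothing more than the ratio of these two permanents; in the regime $|N|\sim n/\sqrt d$ needed for Part~1 the Stirling prefactor $(2\pi d)^{(n-|N|)/(2d)}$ is harmlessly absorbed into $C$ because $|N|\gg (n\log d)/d$. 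No telescoping or bottleneck analysis is required here.

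For Parts~2 and~3, however, your route departs from the paper's and has a genuine gap exactly at the prefactor you yourself flag. Subsampling a random perfect matching to a random $k$-set and reusing the spreading estimate requires the bound $\Pr[N\subseteq\sigma]\le(C/d)^{|N|}$ at $|N|=s\approx\sqrt{nk/d}$; at the bottom of the Part~3 range $k=\gamma n/d$ this is $s\approx\sqrt\gamma\,n/d$, which is \emph{not} $\gg(n\log d)/d$ when $\gamma$ is an absolute constant. The Br\'egman/Schrijver ratio then carries an unavoidable $e^{\Theta(n\log d/d)}$ factor, and your first-moment count only closes for $k\gtrsim n(\log d)^2/d$, not $k\ge\gamma n/d$. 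You identify this as ``the hard part'' but offer no mechanism to remove it, and I do not believe the telescoping-with-permanents plan can: the $e^{\Theta(n\log d/d)}$ slack is intrinsic to the Br\'egman--Schrijver pair. The paper sidesteps the issue entirely by never passing through perfect matchings for small $k$: it lower-bounds the number of $k$-matchings directly via Csikv\'ari's proof of the Lower Matching Conjecture, and upper-bounds the degree on the $R$ side by the crude $\binom{nd}{(1-\alpha)k}$. This comparison produces $\alpha\le\max\{O(k/n),\,O(\sqrt{n/(kd)})\}$ with no logarithmic loss, and the $O(k/n)$ branch (which becomes Part~2) arises from a secondary $(2e)^{-O(k^2/n)}$ term in the matching count rather than from any permanent-ratio slack.
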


For a certain range of parameters our results are tight.
Specifically, when considering task sets of size $n$ our result is tight up to logarithmic factors,
as we provide a construction of a $d$-regular graph where every matching of size $n$ contains an induced matching of size
$\Omega(\frac{1}{\sqrt{d \log d}})$. See \Cref{thm:d reg alpha = 1/sqrt(d log d)} for details.

For arbitrary values of $k \leq n$ it is not hard to see that every $d$-regular graph achieves $\alpha(k)\geq\frac1{2d}$.
We show that this naive bound can be asymptotically improved upon, by constructing
an $\alpha$-multitaskers with $\alpha=\Omega(\frac{\log d}{d})$.
The construction is based on bipartite graphs which have good spectral expansion properties.
See \Cref{thm:expander} for details.

Considering bounded values of $k$ we show that is it possible
to achieve multitasking capacity bounded away above $0$,
when measured on task sets of bounded size (up to $k$).
The best multitasking capacity one can hope for is $\alpha=1/2$ (see Remark~\ref{rem:alpha_half}),
and we construct $(k,1/2)$-multitaskers for all $k \leq O(\log_d(n))$.
See \Cref{thm:large girth} for details.

We also consider networks of depth $r>2$ \footnote{We think of $r$ as a constant independent of $n$ and $d$ as tending to infinity with $n$.}. We generalize our ideas for depth 2 networks by upperbounding the multitasking capacity of arbitrary $d$-regular networks of depth $r$ by $O(\frac{r}{d^{1-1/r}})$. In particular, we show that such networks must contain a family $S$ of paths of size $n$ such that every set of induced paths contained in $S$ has size at most $O(\frac{r}{d^{1-1/r}})n$. Observe that this shows that for tasks sets of size $n$, network of depth $2 < r \ll d$ incur interference which is strictly worse than depth $2$ networks. We believe that it is also the case that interference gets worst with $r$
(namely that interference worsens as $r$ increases to $r+1$ for $r>2$), although whether this is indeed the case is an open problem.

%%%%%%%%%%%%%%%%%%%%%%%%%%%%%%%%%%%%%%%%%%%%%%%%%%%%%%%%%%%%%%%
\paragraph{The irregular case:}
%%%%%%%%%%%%%%%%%%%%%%%%%%%%%%%%%%%%%%%%%%%%%%%%%%%%%%%%%%%%%%%

Next we turn to arbitrary, not necessarily regular, graphs.
We show that for an arbitrary bipartite graph with $n$ vertices on each side and \emph{average} degree $d$ its multitasking capacity
$\alpha(n)$ is upper bounded by $O\left(\frac{\log n}{d}\right)^{1/3}$.
That is, when the average degree is concerned,
the multitasking capacity of a graph tends to zero,
provided that the average degree of a graph is larger than $\log(n)$.

\begin{theorem}\label{thm:upper bound avgdeg}
There is a constant $\gamma \in \R_+$ such that the following holds.
Let $G=(A \cup B,E)$, be a bipartite graph of average degree $d$ with $|A|=|B|=n$.
If $G$ is an $\alpha$-multitasker then $\alpha \leq O((\frac{\log n}{d})^{1/3}))$.

\end{theorem}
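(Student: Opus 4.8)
Since the property of being an $\alpha$-multitasker is monotone under passing to subgraphs — a matching of $H\subseteq G$ is a matching of $G$, and a submatching induced in $G$ is \emph{a fortiori} induced in $H$ — it suffices to produce a subgraph $H$ of $G$ and a matching $M$ of $H$ whose largest submatching that is induced in $H$ has size $O\big((\log n/d)^{1/3}\big)|M|$. If $d$ is at most a large constant times $\log n$, then $(\log n/d)^{1/3}=\Omega(1)$ and the trivial bound $\alpha\le 1$ already suffices, so assume $d$ is larger than that. The plan is then to realize $H$ as a $d'$-regular bipartite graph with two equal sides and invoke \Cref{thm:d reg upper bound}(1): for such an $H$ with $m$ vertices per side one has $\alpha_H(m)=O(1/\sqrt{d'})$, and this is $O\big((\log n/d)^{1/3}\big)$ as soon as $d'=\Omega\big((d/\log n)^{2/3}\big)$. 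Thus the whole statement reduces to an extraction fact: \emph{a bipartite graph with $n$ vertices on each side and average degree $d$ contains a $d'$-regular bipartite subgraph with equal sides for some $d'=\Omega\big((d/\log n)^{2/3}\big)$.}

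To prove the extraction I would first tame the degree sequence by a dyadic pigeonhole: split $A$ into the $O(\log n)$ classes $A_j=\{a:2^j\le\deg_G a<2^{j+1}\}$, let $A'$ be the class whose incident edges are most numerous, and set $D=2^j$ accordingly. Then the edges incident to $A'$ number $\Omega(dn/\log n)$, which forces both $D=\Omega(d/\log n)$ and $|A'|=\Omega(dn/(D\log n))$. Letting $B'=N_G(A')$ and $H_0=G[A'\cup B']$, every vertex of $A'$ has $H_0$-degree in $[D,2D)$. Repeating the dyadic step on the $B'$-side of $H_0$ (its degrees there range over an interval of length at most $n$, so $O(\log n)$ classes again) isolates a set $B''\subseteq B'$ on which the $H_0$-degrees all lie in $[D',2D')$ for some $D'=\Omega(d/\log^2 n)$, carrying $\Omega(dn/\log^2 n)$ edges. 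On $A'\cup B''$ one now has a genuinely dense bipartite graph, one-sidedly (in fact nearly two-sidedly) almost-regular; the final step is to delete a few excess edges and, by shrinking the larger side (choosing, e.g., a random equal-size subset and concentration), obtain an exactly $d'$-regular bipartite subgraph with equal sides, where $d'$ is a suitable power of the degree parameters surviving the cleaning. Bookkeeping the edges destroyed at each pass is what produces the stated polynomial dependence $d'=\Omega\big((d/\log n)^{2/3}\big)$; the appearance of a fractional power rather than $d/\mathrm{polylog}(n)$ is precisely the cost of forcing the two sides to have the same size.

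The main obstacle is this last regularization against a lopsided adversary: if $G$ concentrates its edges in a dense but very unbalanced part (a near-biclique $K_{p,q}$ with $p\ll q$ is the extreme case — it has no regular subgraph of degree exceeding $p$ on equal parts), one cannot preserve the full degree $d$ when equalizing the sides, only a fractional power of it, and one must argue that what survives is still dense enough — in particular large enough per side — that \Cref{thm:d reg upper bound} may be applied to it with $k$ equal to its part size. Making the alternating-deletion/trimming analysis robust to all degree profiles, and certifying the lower bound on $d'$, is the crux; the remaining ingredients (subgraph monotonicity, the dyadic pigeonhole, and the closing appeal to the regular case) are routine.
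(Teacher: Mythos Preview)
Your reduction-to-regular plan diverges from the paper and, as it stands, has a real gap in the extraction step.

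The paper never manufactures an exactly regular subgraph. Instead it proves a separate irregular bound (\Cref{thm:avgdeg_lowerbound}): for any bipartite graph with average degree $d$ and maximum degree $\Delta$, an $\alpha$-multitasker must have $\alpha=O(\Delta^{1/3}/d^{2/3})$. The proof mirrors the counting argument in the regular case, replacing the Lower Matching Conjecture by a simple greedy lower bound on the number of $q$-matchings with $q=\Theta(\alpha n d/\Delta)$. It then extracts, via Pyber's method (\Cref{l42}), a subgraph with average degree $b=\Omega(d/\log n)$ and maximum degree at most $2b$; applying the irregular bound to that subgraph gives $\alpha=O(b^{-1/3})=O((\log n/d)^{1/3})$. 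So the paper only needs a \emph{near}-regular subgraph (bounded $\Delta/d$ ratio), not an exactly regular one.

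Your argument, by contrast, requires an exactly $d'$-regular bipartite subgraph with equal sides and $d'=\Omega((d/\log n)^{2/3})$, and the sketch you give does not establish this. The first dyadic pass fixes the $A$-side degrees in $[D,2D)$, but after the second pass selects $B''\subseteq B$, the $A'$-degrees \emph{into $B''$} are no longer controlled at all: some vertices of $A'$ may have all their neighbours in $B''$, others none. So on $A'\cup B''$ you have one-sided near-regularity, not two-sided, and the sentence ``delete a few excess edges and shrink the larger side'' is doing all the work. Even granting two-sided near-regularity, passing to an exactly regular subgraph of comparable degree on equal parts needs a genuine argument (a factor criterion or an iterated matching extraction as in the proof of \Cref{l42}); and the specific exponent $2/3$ in your target $d'$ is reverse-engineered from the desired conclusion rather than produced by the bookkeeping you describe. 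You yourself flag this regularization as ``the crux''; the paper's route simply avoids it by proving the $\Delta^{1/3}/d^{2/3}$ bound directly.
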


For dense graphs satisfying $d=\Omega(n)$ (which are studied in \cite{feng2014multitasking}),
we prove a stronger upper bound of $\alpha(n)=O(\frac{1}{\sqrt{n}})$ using the well known Szemer\'{e}di regularity lemma.
See \Cref{t44} for details.

We also show that there are multitaskers of average degree $\Omega(\log \log n)$, with $\alpha>1/3-\epsilon$.
Hence, in contrast to the regular case, for the multitasking capacity to decay with \emph{average} degree $d$,
we must assume that $d$ grows faster than $\log \log n$.
See \Cref{t51} and \Cref{t52} for the exact statements.
It is an interesting question whether there exists a multitasker with $\alpha>0$ independent of $n$, for average degree $\Theta(\log n)$,
which, if true is the largest average degree possible. This is left as an open problem.

Finally, for any $d \in \N$ we show a construction of a graph $G$ with average degree $d$
such that for every $0<\alpha<1/5$, $G$ is a $(k,\alpha)$-multitaskers for all $k \leq \Omega(n/d^{1+4\alpha})$.
Comparing this to the foregoing results, here we do not required that $d=O(\log \log n)$.
Allowing larger values of $d$ allows for weaker multitasking: we obtain that the graph is a multitasker only with
respect to matchings whose size is at most $n/d^{1+4\alpha}$.
See \Cref{thm:G n d over n} for details.

%%%%%%%%%%%%%%%%%%%%%%%%%%%%%%%%%%%%%%%%%%%%%%%%%%%%%%%%%%%%%%%
\section{Preliminaries}
%%%%%%%%%%%%%%%%%%%%%%%%%%%%%%%%%%%%%%%%%%%%%%%%%%%%%%%%%%%%%%%

A matching $M$ in a graph $G$ is a set of edges $\{e_1,...,e_m\}$ such that no two edges in $M$ share a common vertex. If $G$ has $2n$ vertices and $|M|=n$, we say that $M$ is a perfect matching. By Hall Theorem, every $d$-regular graph with bipartition $(A,B)$ has a perfect matching. A matching $M$ is \emph{induced} if there are no two distinct edges $e_1,e_2$ in $M$, such that there is an edge connecting $e_1$ to $e_2$.
Given a graph $G=(V,E)$ and two disjoint sets $A,B\subseteq V$ we let $e(A,B)$ be the set of edges with one endpoint in $A$ and the other in $B$. For a subset $A$, $e(A)$ is the set of all edges contained in $A$. Given an edge $e \in E$, we define the graph $G/e$ obtained by contracting $e=(u,v)$ as the graph with a vertex set $(V\cup v_{e}) \setminus \{u,v\}$. The vertex $v_{e}$ is connected to all vertices in $G$ neighboring $u$ or $v$. For all other vertices $x,y \in V \setminus\{u,v\}$, they form an edge in
$G/e$ if and only if they were connected in $G$. Contracting a set of edges, and in particular contracting a matching, means contracting the edges one by one in an arbitrary order.

Given a subset of vertices $U\subseteq V$, the subgraph induced by $U$, denoted by $G[U]$ is the graph whose vertex set is $U$ and two vertices in $U$ are connected if and only if they are connected in $G$.
For a set of edges $E' \subseteq E$, denote by $G[E']$ the graph induced by all vertices incident to an edge in $E'$.
We will use the following simple observation throughout the paper.

\begin{lemma}\label{lem:contraction}
Let $M$ be a matching in $G$, and let $d_{avg}$ be the average degree of $G[M]$.
Suppose that we contract all edges in $M$ in $G[M]$. Then the resulting graph
$\widetilde{G}[M]$ has average degree at most $2d_{avg}-2$.
\end{lemma}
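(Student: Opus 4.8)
The plan is a direct double-counting argument; no clever idea is needed. Set $m := |M|$, and note that since $M$ is a matching its $m$ edges have $2m$ distinct endpoints, so the induced subgraph $G[M]$ has exactly $2m$ vertices. Write $e' := |E(G[M])|$ for its number of edges, so that by definition $d_{avg} = 2e'/(2m) = e'/m$ (the claim is vacuous when $m = 0$, so assume $m \ge 1$). The goal is then to show $\widetilde{G}[M]$ has at most $m$ vertices and at most $e' - m$ edges, which immediately gives average degree at most $2(e'-m)/m = 2 d_{avg} - 2$.

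First I would count vertices: each contraction of a single edge decreases the number of vertices by exactly one, and we perform $m$ contractions (one per edge of $M$), so $\widetilde{G}[M]$ has exactly $2m - m = m$ vertices. Next I would bound the edges: contract the edges of $M$ one at a time in some order $e_1, \dots, e_m$, and observe that each contraction deletes at least one edge. Indeed, since the edges of $M$ are pairwise disjoint, at the moment $e_i = (u_i, v_i)$ is contracted both endpoints $u_i, v_i$ are still present as distinct vertices and are joined by $e_i$; identifying them turns $e_i$ into a self-loop, which under the contraction operation defined above is absorbed and disappears. Further edges may also vanish when contraction identifies parallel edges, but the procedure never creates an edge, so after all $m$ contractions $|E(\widetilde{G}[M])| \le e' - m$.

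Combining the two counts, the average degree of $\widetilde{G}[M]$ is $\tfrac{2\,|E(\widetilde{G}[M])|}{m} \le \tfrac{2(e' - m)}{m} = \tfrac{2e'}{m} - 2 = 2 d_{avg} - 2$, as claimed. There is no real obstacle here; the only point requiring care is the middle step — namely that every edge of $M$ genuinely becomes a self-loop and is removed. This is precisely where the matching hypothesis is used: it guarantees that the two endpoints of any edge of $M$ are never merged together by an earlier contraction, so that edge is still present and is lost exactly when it is contracted.
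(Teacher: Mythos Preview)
Your proof is correct and is essentially the same counting argument as the paper's: both observe that $G[M]$ has $2|M|$ vertices and $d_{avg}|M|$ edges, and that after contracting the $|M|$ matching edges the resulting graph has $|M|$ vertices and at most $d_{avg}|M|-|M|$ edges. Your write-up is simply more explicit about why each contraction loses at least one edge.
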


\begin{proof}
$G[M]$ contains $2|M|$ vertices and $d_{avg}|M|$ edges. The result follows as $\widetilde{G}[M]$ has $|M|$ vertices and at most $d_{avg}|M|-|M|$ edges.
\end{proof}

An \emph{independent set} in a graph $G=(V,E)$ is a set of vertices that do not span an edge.
We will use the following well known fact attributed to Turan.

\begin{lemma}\label{lem:turan_independent_set}
Every $n$-vertex graph with average degree $d_{avg}$ contains an independent set of size at least $\frac{n}{d_{avg}+1}$.
\end{lemma}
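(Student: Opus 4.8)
The plan is to prove this via the probabilistic argument of Caro and Wei. First I would pick a uniformly random linear order (permutation) $\pi$ on the vertex set $V$, and define $I = \{ v \in V : v \text{ precedes every neighbor of } v \text{ in } \pi \}$. This set $I$ is automatically independent: if $u,v \in I$ were adjacent, then $u$ would precede $v$ in $\pi$ and $v$ would precede $u$, which is impossible. So it suffices to show that $|I|$ is large for some choice of $\pi$.

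Next I would compute the expected size of $I$ by linearity of expectation: $\E[|I|] = \sum_{v \in V} \Pr[v \in I]$. For a fixed vertex $v$ of degree $d_v$, the event $\{v \in I\}$ depends only on the relative order within $\pi$ of $v$ and its $d_v$ neighbors; this relative order is uniformly distributed among the $(d_v+1)!$ possibilities, and $v$ comes first with probability $1/(d_v+1)$. Hence $\E[|I|] = \sum_{v \in V} 1/(d_v+1)$, and since an integer-valued random variable must attain some value that is at least its expectation, $G$ contains an independent set of size at least $\sum_{v \in V} 1/(d_v+1)$.

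The final step converts this into the stated bound using convexity. The function $\phi(x) = 1/(x+1)$ is convex on $[0,\infty)$, so by Jensen's inequality $\frac1n \sum_{v \in V} \phi(d_v) \ge \phi\!\left(\frac1n \sum_{v \in V} d_v\right) = \phi(d_{avg}) = \frac{1}{d_{avg}+1}$; multiplying through by $n$ gives $\sum_{v \in V} 1/(d_v+1) \ge n/(d_{avg}+1)$, which is exactly the claim.

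There is essentially no hard step here; the only point requiring a moment's care is checking that $\phi$ is convex rather than concave, so that Jensen yields a lower bound on $\sum_v \phi(d_v)$ and not an upper bound — this is immediate since $\phi''(x) = 2/(x+1)^3 > 0$. A purely combinatorial alternative would be the greedy deletion argument (repeatedly move a minimum-degree vertex into $I$ and delete its closed neighborhood), but tracking how the average degree evolves across iterations is slightly fussier, so the random-permutation argument above is the cleanest route and the one I would write up.
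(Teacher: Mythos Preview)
Your proof is correct: the Caro--Wei random-permutation argument followed by Jensen's inequality is a clean and standard way to obtain the bound $n/(d_{avg}+1)$, and every step you outline is sound.

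Note, however, that the paper does not actually give a proof of this lemma at all --- it simply states it as a ``well known fact attributed to Turan'' and moves on. So there is no proof in the paper to compare against; your write-up would in fact supply what the paper omits. If anything, one could remark that the argument you give proves the slightly stronger Caro--Wei bound $\sum_v 1/(d_v+1)$ before specializing via convexity, which is a pleasant bonus.
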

The \emph{girth} of a graph $G$ is the length of the shortest cycle in $G$.

Let $G=(V,E)$ be a bipartite graph, $k$ an integer and $\alpha \in (0,1]$, a parameter. We define the $(\alpha,k)$-matching graph $H(G,\alpha,k)=(L,R,F)$ to
be a bipartite graph where $L$ is the set of all matchings of size $k$ in $G$, $R$ is the set of all induced matchings of size $\alpha k$ in $G$ and a vertex $v_M \in L$
(corresponding to matching $M$ of size $k$) is connected to a vertex $u_{M'}$ (corresponding to an induced matching $M'$ of size $\alpha k$) if and only if
$M' \subseteq M$. We omit $\alpha,k,G$ from the notation of $H$ when it will be clear from the context.
We will repeatedly use the following simple Lemma in upper bounding the multitasking capacity in graph families.
We refer to this Lemma as the induced matching Lemma.
\begin{lemma}\label{lem:induced_matching}
Suppose the average degree of a vertex in $L$ in the graph $H(G,\alpha,k)$ is strictly smaller than $1$.
If $G$ is a $(k,\alpha(k))$-multitasker, then $\alpha(k)< \alpha$.
\end{lemma}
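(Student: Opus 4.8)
The plan is to prove the contrapositive: assuming $G$ is a $(k,\alpha(k))$-multitasker with $\alpha(k)\ge\alpha$, I will show that \emph{every} vertex of $L$ in $H=H(G,\alpha,k)$ has degree at least $1$ in $H$, so that the average degree of a vertex in $L$ is at least $1$, contradicting the hypothesis.

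The first step is to unpack the multitasker hypothesis at a single vertex. Fix an arbitrary $v_M\in L$, i.e.\ an arbitrary matching $M$ of size $k$ in $G$ (the hypothesis that the average $L$-degree is defined and $<1$ presumes $L\neq\emptyset$). Since $G$ is a $(k,\alpha(k))$-multitasker, there is an induced matching $\widehat M\subseteq M$ with $|\widehat M|\ge\alpha(k)|M|=\alpha(k)k\ge\alpha k$. The second step is a trivial monotonicity observation: any subset of an induced matching is again an induced matching, and it is still contained in $M$; hence I may pick $M'\subseteq\widehat M$ with $|M'|$ equal to the size used to define $R$ (namely $\alpha k$, or $\lceil\alpha k\rceil$ under the rounding convention). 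Then $M'$ is an induced matching of the prescribed size with $M'\subseteq M$, so $u_{M'}\in R$ and $(v_M,u_{M'})\in F$, giving $\deg_H(v_M)\ge 1$.

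The final step is an averaging argument: summing over $L$ gives $|F|=\sum_{v_M\in L}\deg_H(v_M)\ge|L|$, so the average degree of a vertex in $L$, namely $|F|/|L|$, is at least $1$ — contradicting the assumption that it is strictly less than $1$. Hence $\alpha(k)<\alpha$. There is no real obstacle in this lemma; the only subtleties worth a sentence are the nonemptiness of $L$ and the rounding convention for $\alpha k$ (so that the sub-matching $M'$ really belongs to $R$). The content is simply the repackaging of ``$G$ multitasks well at level $(k,\alpha)$'' as the combinatorial statement ``$\deg_H(v_M)\ge 1$ for every $v_M\in L$'', which is convenient in the applications, where $|F|$ and $|L|$ are bounded by counting or union-bound estimates.
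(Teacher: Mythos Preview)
Your proof is correct and is essentially the contrapositive form of the paper's one-line argument (``average degree $<1$ gives an isolated vertex in $L$, hence a $k$-matching with no induced sub-matching of size $\alpha k$''). You are simply more explicit about the monotonicity step (a subset of an induced matching is induced) and the rounding of $\alpha k$, which the paper leaves implicit.
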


\begin{proof}
By the assumption, $L$ has a vertex of degree 0. Hence there exist a matching of size $k$ in $G$ not containing an induced matching of size $\alpha k$. As required.
\end{proof}

Throughout the paper we will need the following concentration inequalities known as Chernoff's bound.
\begin{lemma}\label{lmm:chernoff}
Let $X_1 \ldots X_n$ be $\{0,1\}$ independent random variables where for every $\Pr[X_i=1]=p$ for all $i=1,\dots, n$,
and let $X=\sum_{i=1}^n X_i$. Then, for all $\eta \in (0,1)$ it holds that
\begin{equation*}
    \Pr[X<(1-\eta)pn] < \exp(-\frac{\eta^2 p n}{2})
\end{equation*}
and
\begin{equation*}
    \Pr[X>(1+\eta)pn] < \exp(- \frac{\eta^2 pn}{2+\eta}) \enspace.
\end{equation*}
\end{lemma}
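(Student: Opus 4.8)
The plan is to prove both tail estimates by the classical exponential-moment (Chernoff) method: bound the deviation probability by the expectation of an exponential of $X$ via Markov's inequality, factorize using independence, and optimize the free parameter. Since the statement is completely standard one could instead simply cite a textbook, but the self-contained argument is short and I would spell it out. The one computation used in both parts is that, by independence together with $1+x\le e^{x}$, for every real $s$ one has $\E[e^{sX}]=\prod_{i=1}^{n}\E[e^{sX_i}]=\bigl(1+p(e^{s}-1)\bigr)^{n}\le\exp\!\bigl(pn(e^{s}-1)\bigr)$.

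\textbf{Lower tail.} Fix $t>0$. Since $\{X<(1-\eta)pn\}=\{e^{-tX}>e^{-t(1-\eta)pn}\}$, Markov's inequality and the bound above (with $s=-t$) give
\[
\Pr[X<(1-\eta)pn]\ \le\ e^{t(1-\eta)pn}\,\E[e^{-tX}]\ \le\ \exp\!\Bigl(pn\bigl(t(1-\eta)+e^{-t}-1\bigr)\Bigr).
\]
I would then minimize the exponent over $t>0$; it is minimized at $t=\ln\tfrac{1}{1-\eta}$, and substituting this value yields the bound $\exp\!\bigl(-pn\,\psi(\eta)\bigr)$ with $\psi(\eta)=(1-\eta)\ln(1-\eta)+\eta$. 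It then remains only to check the scalar inequality $\psi(\eta)\ge\eta^{2}/2$ on $(0,1)$, which is immediate from the power-series identity $\psi(\eta)=\sum_{k\ge2}\frac{\eta^{k}}{k(k-1)}$: the $k=2$ term is exactly $\eta^{2}/2$ and the rest are nonnegative.

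\textbf{Upper tail.} Symmetrically, fix $t>0$; since $\{X>(1+\eta)pn\}=\{e^{tX}>e^{t(1+\eta)pn}\}$, Markov and the bound above (with $s=t$) give
\[
\Pr[X>(1+\eta)pn]\ \le\ e^{-t(1+\eta)pn}\,\E[e^{tX}]\ \le\ \exp\!\Bigl(pn\bigl(e^{t}-1-t(1+\eta)\bigr)\Bigr).
\]
Now the exponent is minimized at $t=\ln(1+\eta)$, giving the bound $\exp\!\bigl(-pn\,\varphi(\eta)\bigr)$ with $\varphi(\eta)=(1+\eta)\ln(1+\eta)-\eta$, and I would finish by verifying $\varphi(\eta)\ge\eta^{2}/(2+\eta)$. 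This follows from the standard estimate $\ln(1+\eta)\ge\frac{2\eta}{2+\eta}$ for $\eta\ge0$ (comparing the two sides and their derivatives at $\eta=0$), since $(1+\eta)\cdot\frac{2\eta}{2+\eta}-\eta=\frac{\eta^{2}}{2+\eta}$.

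There is no real obstacle here: the entire argument is the boilerplate exponential-moment computation, and the only content is the two one-variable inequalities $\psi(\eta)\ge\eta^{2}/2$ and $\varphi(\eta)\ge\eta^{2}/(2+\eta)$, both routine. The one mild point is the \emph{strict} inequality demanded in the statement: when $p\in(0,1)$ the step $1+x\le e^{x}$ is invoked at $x=p(e^{\pm t}-1)\neq0$, where it is strict, so the displayed bounds become strict; and the degenerate cases $p\in\{0,1\}$ are trivial, as then $X$ is constant.
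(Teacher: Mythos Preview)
Your argument is correct and is exactly the standard exponential-moment derivation of Chernoff's bound; the two scalar inequalities $\psi(\eta)\ge\eta^{2}/2$ and $\varphi(\eta)\ge\eta^{2}/(2+\eta)$ are verified correctly, and your remark on strictness is fine. Note, however, that the paper does not actually prove this lemma at all: it is stated in the Preliminaries as a known concentration inequality and simply invoked later. So there is nothing to compare against---you have supplied a self-contained proof where the paper just cites the result.
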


%%%%%%%%%%%%%%%%%%%%%%%%%%%%%%%%%%%%%%%%%%%%%%%%%%%%%%%%%%%%%%%
\section{Upper bounds on the multitasking capacity}\label{sec:upper bounds}
%%%%%%%%%%%%%%%%%%%%%%%%%%%%%%%%%%%%%%%%%%%%%%%%%%%%%%%%%%%%%%%

%%%%%%%%%%%%%%%%%%%%%%%%%%%%%%%%%%%%%%%%%%%%%%%%%%%%%%%%%%%%%%%
\subsection{The regular case}
%%%%%%%%%%%%%%%%%%%%%%%%%%%%%%%%%%%%%%%%%%%%%%%%%%%%%%%%%%%%%%%
In this section we prove Theorem~\ref{thm:d reg upper bound}
that upper bounds the multitasking capacity
of arbitrary $d$-regular multitaskers.
We start the proof of Theorem~\ref{thm:d reg upper bound}
with the case $k=n$. The following theorem shows that $d$-regular $(k=n,\alpha)$-multitaskers must have $\alpha=O(1/\sqrt d)$.

\begin{theorem}\label{thm:upper_bound_n}
Let $G=(A \cup B,E)$, be a bipartite $d$-regular graph where $|A|=|B|=n$.
Then $G$ contains a perfect matching $M$ such that every induced matching
$M' \subseteq M$ has size at most $\frac{9 n}{\sqrt d}$.
\end{theorem}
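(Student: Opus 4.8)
The plan is to show, by double counting over the perfect matchings of $G$, that \emph{most} perfect matchings fail to contain a large induced sub-matching --- in fact the average number of induced sub-matchings of a suitable size $s$ over all perfect matchings is below $1$, so at least one perfect matching has none. We may assume $d \ge 82$, since otherwise $9n/\sqrt d \ge n$ and the statement is vacuous (any induced sub-matching of a perfect matching has size at most $n$). Let $s$ be the least integer with $s > 9n/\sqrt d$ and let $P$ denote the number of perfect matchings of $G$. Counting pairs $(M,M')$, where $M$ is a perfect matching of $G$ and $M' \subseteq M$ is an induced matching with $|M'| = s$, in two ways yields
\[
\sum_{M \text{ perfect}} \#\{\,M' \subseteq M : M' \text{ induced},\ |M'| = s\,\} \;=\; \sum_{\substack{M' \text{ induced}\\ |M'| = s}} \#\{\,M \text{ perfect}: M' \subseteq M\,\}.
\]
It suffices to prove the right-hand side is strictly smaller than $P$: then some perfect matching $M$ contains no induced sub-matching of size $s$, hence none of size $\ge s$, hence all of its induced sub-matchings have size at most $s-1 \le 9n/\sqrt d$.

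\textbf{Estimating the two sides.} For $P$ I would use the Egorychev--Falikman theorem (the van der Waerden permanent conjecture): scaling the biadjacency matrix of $G$ by $1/d$ gives a doubly stochastic matrix, so $P \ge d^n\, n!/n^n \ge (d/e)^n$. For the right-hand side I need two ingredients. First, the number of induced matchings of size $s$ is at most $\binom{n}{s}^2$, because an induced matching $M'$ is \emph{determined by its vertex set}: if $S,T$ are its endpoints in $A,B$ then any edge of $G[S\cup T]$ not in $M'$ would connect two distinct edges of $M'$, so $M' = E(G[S\cup T])$. This is the step that removes the factor $d^s$ an unrestricted matching count would cost. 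Second, for a fixed induced $M'$ of size $s$, a perfect matching of $G$ extends $M'$ iff its restriction to $G - V(M')$ is a perfect matching of that graph, which has parts of size $n-s$ and all degrees at most $d$; Bregman's theorem bounds its number of perfect matchings by $\prod_i (r_i!)^{1/r_i} \le (d!)^{(n-s)/d}$, using that $r \mapsto (r!)^{1/r}$ is increasing. Hence the right-hand side is at most $\binom{n}{s}^2 (d!)^{(n-s)/d}$.

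\textbf{Closing the inequality, and the main obstacle.} It then remains to verify $\binom{n}{s}^2 (d!)^{(n-s)/d} < (d/e)^n$. Taking logarithms and using $\binom{n}{s}\le (en/s)^s$, $d!\le e\sqrt d\,(d/e)^d$, and $n/s < \sqrt d/9$, the inequality reduces, after cancelling $(n-s)(\ln d -1)$ from both sides, to
\[
2s + 2s\ln\tfrac{n}{s} + \tfrac{n}{d}\bigl(\tfrac12\ln d + 1\bigr) \;<\; s(\ln d - 1),
\]
and since $\ln\tfrac{n}{s} < \tfrac12\ln d - \ln 9$ this follows once $\tfrac{n}{d}\bigl(\tfrac12\ln d+1\bigr) < s(\ln 81 - 3)$, which holds because $\ln 81 > 3$ and $(\tfrac12\ln d+1)/\sqrt d \le 1$ for every $d\ge 1$ while $s > 9n/\sqrt d$. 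I expect the main obstacle to be spotting the structural fact that an induced matching is pinned down by its vertex set (so that induced matchings of a given size are counted by $\binom{n}{s}^2$ rather than by something of the form $\binom{n}{s}^2 d^{\Theta(s)}$); once that is in hand everything else is routine, and the role of the constant $9$ is precisely to make $\ln 81 > 3$, so that the Bregman upper bound beats the van der Waerden lower bound.
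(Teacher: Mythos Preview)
Your proposal is correct and follows essentially the same approach as the paper: both double-count pairs (perfect matching, induced sub-matching of the target size), bound the number of induced matchings of size $s$ by $\binom{n}{s}^2$, bound the number of perfect-matching extensions of a fixed induced $s$-matching via Bregman by $(d!)^{(n-s)/d}$, and compare against the lower bound $(d/e)^n$ on the number of perfect matchings. The only cosmetic differences are that the paper packages the double count as an ``average degree in the auxiliary graph $H(G,\alpha,n)$'' argument and cites Schrijver's bound rather than the van der Waerden/Egorychev--Falikman permanent bound for the $(d/e)^n$ lower bound; the final arithmetic is organized slightly differently but is equivalent (in particular, your observation that the constant $9$ is what makes $\ln 81>3$ is exactly the content of the paper's choice $\alpha>2\sqrt{e^3/d}$).
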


For the proof, we need the following bounds on the number
of perfect matchings in $d$-regular bipartite graphs.
\begin{lemma}\label{lmm:num_perfect_matchings}
Let $G=(A,B,E)$, be a bipartite $d$-regular graph where $|A|=|B|=n$. Denote by $M(G)$ the number of perfect matchings in $G$. Then
\[
    \left(\frac{d}{e}\right)^n\le \left(\frac{(d-1)^{d-1}}{d^{d-2}}\right)^n\le M(G) \le (d!)^{n/d}.
\]
\end{lemma}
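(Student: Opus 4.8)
The plan is to prove the lower bound via the Van der Waerden / Egorychev–Falikman permanent inequality, and the upper bound via Bregman's theorem, both applied to the bipartite adjacency matrix of $G$.

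For the lower bound: let $A$ be the $n \times n$ bipartite adjacency matrix of $G$, so that $M(G) = \mathrm{perm}(A)$. Since $G$ is $d$-regular, every row and column of $A$ sums to $d$, hence $\frac{1}{d}A$ is doubly stochastic. By the Van der Waerden conjecture (Egorychev–Falikman theorem), $\mathrm{perm}\left(\frac{1}{d}A\right) \geq \frac{n!}{n^n}$, and therefore $M(G) = d^n \cdot \mathrm{perm}\left(\frac{1}{d}A\right) \geq d^n \cdot \frac{n!}{n^n} \geq d^n \cdot e^{-n} = (d/e)^n$, using $n! \geq (n/e)^n$. This already gives the weakest of the three lower bounds. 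For the sharper bound $\left(\frac{(d-1)^{d-1}}{d^{d-2}}\right)^n$, I would instead invoke the Schrijver–Valiant lower bound on the number of perfect matchings in $d$-regular bipartite graphs, which states exactly $M(G) \geq \left(\frac{(d-1)^{d-1}}{d^{d-2}}\right)^n$; the middle inequality $\left(\frac{d}{e}\right)^n \le \left(\frac{(d-1)^{d-1}}{d^{d-2}}\right)^n$ is then an elementary one-variable calculus check that $\frac{(d-1)^{d-1}}{d^{d-2}} = d\left(1 - \frac1d\right)^{d-1} \ge d/e$, which holds since $\left(1-\frac1d\right)^{d-1} \ge e^{-1}$ for all $d \ge 1$.

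For the upper bound: again with $A$ the bipartite adjacency matrix, Bregman's theorem (the Minc conjecture) gives $\mathrm{perm}(A) \leq \prod_{i=1}^n (r_i!)^{1/r_i}$ where $r_i$ is the $i$-th row sum. Since $G$ is $d$-regular, every $r_i = d$, so $M(G) = \mathrm{perm}(A) \leq \prod_{i=1}^n (d!)^{1/d} = (d!)^{n/d}$, as claimed.

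The main obstacle here is not really a proof difficulty but a matter of citing the right heavy machinery: the lower bound for general $d$ rests on the Schrijver–Valiant result (itself a consequence of refined permanent estimates), and the upper bound rests on Bregman's theorem; both are nontrivial but entirely standard. Once these are invoked, everything else — converting between $\mathrm{perm}$ and $M(G)$, the double-stochasticity normalization, and the elementary inequality $\left(1-\frac1d\right)^{d-1}\ge e^{-1}$ — is routine.
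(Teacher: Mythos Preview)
Your proposal is correct and matches the paper's approach: the paper does not give a self-contained proof but simply attributes the lower bound $\left(\frac{(d-1)^{d-1}}{d^{d-2}}\right)^n$ to Schrijver and the upper bound $(d!)^{n/d}$ to Bregman's proof of Minc's conjecture, exactly as you do. Your additional derivation of the weaker $(d/e)^n$ bound via Egorychev--Falikman and the explicit verification of $(1-1/d)^{d-1}\ge e^{-1}$ are fine extra detail, but not needed beyond what the paper cites.
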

The lower bound on $M(G)$ is due to Schrijver~\cite{schrijver1998counting}.
The upper bound on $M(G)$ is known as Minc's conjecture, which has been proven by Bregman~\cite{bregman1973some}.

\begin{proof}[Proof of Theorem~\ref{thm:upper_bound_n}]

Consider $H(G,\alpha,n)$, where $\alpha$ will be determined later.
Clearly $|R| \leq {n \choose \alpha n}^2 \leq (\frac{e}{\alpha})^{2\alpha n}$.
By the upper bound in Lemma~\ref{lmm:num_perfect_matchings},
every induced matching of size $\alpha n$ can be contained in at most $(d!)^{(1-\alpha)n/d}$ perfect matchings.
By the lower bound in Lemma~\ref{lmm:num_perfect_matchings},
$|L| \geq \left(\frac{d}{e}\right)^n$.
Therefore, the average degree of the the vertices in $L$ is at most
\[
  \frac{(\frac{e}{\alpha})^{2\alpha n} \cdot (d!)^{(1-\alpha)n/d}}{\left(\frac{d}{e}\right)^n} \leq
  \frac{(\frac{e}{\alpha})^{2\alpha n} \cdot (\sqrt{2\pi d}(\tfrac{d}{e})^d)^{(1-\alpha)n/d}}{\left(\frac{d}{e}\right)^n}
  = \left( \frac{e^3}{\alpha^2d} \cdot (2 \pi d)^{\frac{1-\alpha}{2 \alpha d}} \right)^{\alpha n}.
\]
Setting $\alpha > 2\sqrt{\frac{e^3}{d}}$ yields $\frac{e^3}{\alpha^2d}<\frac{1}{2}$,
and it can be verified that $(2 \pi d)^{\frac{1-\alpha}{2 \alpha d}} < 2$ for all such $\alpha$.
Therefore in this setting, the average degree of the vertices in $L$ is smaller than $1$, which
concludes the proof by Lemma~\ref{lem:induced_matching}.
This completes the proof of the theorem.
\end{proof}

We record the following simple observation, which is immediate from the definition.
\begin{proposition}\label{prp:mtasker_pertrub}
If $G$ is a $(k,\alpha)$-multitasker, then for all $1 < \beta \leq n/k$,
the graph $G$ is a $(\beta k,\frac{\alpha}{\beta})$-multitasker.
\end{proposition}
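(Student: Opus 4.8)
The plan is to reduce any matching of size $\beta k$ to a sub-matching of size exactly $k$ and then invoke the multitasker hypothesis on the smaller matching. Fix $\beta$ with $1 < \beta \le n/k$; then $\beta k$ is a legitimate matching size, since $\beta k \le n$. Let $M$ be an \emph{arbitrary} matching in $G$ with $|M| = \beta k$. The first step is to pick an arbitrary sub-collection $M_0 \subseteq M$ with $|M_0| = k$, which exists because $\beta k \ge k$. Note $M_0$ is itself a matching, being a subset of the matching $M$.

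The second step is simply to apply the assumption that $G$ is a $(k,\alpha)$-multitasker to $M_0$: this produces an induced matching $M' \subseteq M_0$ of $G$ with $|M'| \ge \alpha |M_0| = \alpha k$. Since $M_0 \subseteq M$, we also have $M' \subseteq M$, and $M'$ is an induced matching of $G$. Hence $M'$ witnesses the desired property for $M$, because
\[
   |M'| \ \ge\ \alpha k \ =\ \frac{\alpha}{\beta}\cdot \beta k \ =\ \frac{\alpha}{\beta}\,|M| .
\]
As $M$ was an arbitrary matching of size $\beta k$, this shows $G$ is a $(\beta k, \alpha/\beta)$-multitasker.

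There is no genuine obstacle here; the only things to keep straight are bookkeeping details — that $\beta k$ is a natural number at most $n$ (which is exactly what the range $1 < \beta \le n/k$ ensures, under the standing convention that matching sizes are integers), and that a sub-collection of a matching is again a matching so that the multitasker hypothesis may legitimately be applied to $M_0$. The content of the proposition is really a monotonicity statement: enlarging the prescribed matching size can only weaken the guaranteed induced-matching fraction, and at worst proportionally. This is precisely the kind of transfer between matching sizes that is convenient when comparing the bounds $\alpha(k)$ for different $k$.
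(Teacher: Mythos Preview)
Your proof is correct and is exactly the argument the paper has in mind; the paper does not even spell it out, calling the proposition ``a simple observation, which is immediate from the definition.'' Your reduction---restrict a $\beta k$-matching to a sub-matching of size $k$ and apply the hypothesis---is the intended one-line justification.
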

By combining Theorem~\ref{thm:upper_bound_n} with (the contrapositive of) Proposition~\ref{prp:mtasker_pertrub} we obtain the following immediate corollary.
\begin{corollary}\label{cor:upper_bound_high_range}
If $G$ is a $d$-regular $(k,\alpha)$-multitaskers with $n$ vertices on each side
and $k > n/\sqrt{d}$, then $\alpha \leq O(\frac{n}{k\sqrt{d}})$.
\end{corollary}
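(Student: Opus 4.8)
The plan is to obtain the corollary directly from Theorem~\ref{thm:upper_bound_n} together with the rescaling statement in Proposition~\ref{prp:mtasker_pertrub}; no new combinatorial argument is needed, and the only thing to check is that the parameter ranges line up.

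First I would dispose of the boundary case $k = n$. Here the asserted bound reads $\alpha \le O(1/\sqrt d)$, and this is immediate from Theorem~\ref{thm:upper_bound_n}: that theorem produces a perfect matching of $G$ whose largest induced sub-matching has size at most $9n/\sqrt d$, so $G$ cannot be an $(n,\alpha)$-multitasker for any $\alpha > 9/\sqrt d$.

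For the main range $n/\sqrt d < k < n$, I would set $\beta := n/k$. The hypothesis $k > n/\sqrt d$ gives $\beta < \sqrt d$, and $k < n$ gives $\beta > 1$, so $\beta$ lies in the admissible interval $(1, n/k]$ of Proposition~\ref{prp:mtasker_pertrub} (indeed $\beta = n/k$ is exactly the right endpoint, and $\beta k = n$ is an integer, so the proposition applies with no rounding issues). Now suppose, for contradiction, that $G$ is a $d$-regular $(k,\alpha)$-multitasker with $\alpha > 9n/(k\sqrt d)$. Applying Proposition~\ref{prp:mtasker_pertrub} with this $\beta$ shows that $G$ is a $(\beta k, \alpha/\beta) = (n, \alpha k/n)$-multitasker, and our assumption on $\alpha$ forces $\alpha k/n > 9/\sqrt d$. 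This contradicts Theorem~\ref{thm:upper_bound_n}, which rules out any $(n,\alpha')$-multitasker with $\alpha' > 9/\sqrt d$. Hence $\alpha \le 9n/(k\sqrt d) = O\!\left(\tfrac{n}{k\sqrt d}\right)$, as claimed. (Equivalently, one invokes the contrapositive of Proposition~\ref{prp:mtasker_pertrub}: if $G$ fails to be an $(n, \alpha k/n)$-multitasker, it fails to be a $(k,\alpha)$-multitasker.)

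The proof is essentially bookkeeping, so I do not expect a genuine obstacle. The one mild point requiring care is verifying that the scaling factor $\beta = n/k$ stays strictly above $1$ and strictly below $\sqrt d$, which is precisely where the hypothesis $n/\sqrt d < k < n$ enters, and checking that the endpoint value $\beta = n/k$ is permitted by Proposition~\ref{prp:mtasker_pertrub}; beyond that the argument is a one-line substitution into Theorem~\ref{thm:upper_bound_n}.
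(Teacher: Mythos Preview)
Your argument is correct and follows exactly the paper's approach: it derives the corollary by combining Theorem~\ref{thm:upper_bound_n} with (the contrapositive of) Proposition~\ref{prp:mtasker_pertrub} applied with $\beta = n/k$. The paper states this as an immediate consequence without spelling out the parameter check, but your more detailed verification of the ranges is fine.
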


Next, we prove that for smaller values of $k$ the multitasking capacity $\alpha(k)$
is upper bounded by $O \left( \max(\frac{n}{k}, \sqrt\frac{kd}{n}) \right)$.

\begin{theorem}\label{thm:upper_bound_k}
Let $G=(A \cup B,E)$ be a $d$-regular (bipartite) subgraph with $|A|=|B|=n$, and let $k < n/2$.
Then,$G$ contains a matching $M$ of size $k$, such that every induced matching
$M' \subset M$ has size $|M'| \leq \alpha k$ for $\alpha = \max\{\frac{4k}{n} , 9\sqrt\frac{n}{kd}\}$.
\end{theorem}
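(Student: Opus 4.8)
The plan is to run the same first-moment argument that establishes \Cref{thm:upper_bound_n}, but with the matching graph $H=H(G,\alpha,k)$ for a general $k<n/2$ in place of $k=n$. By the induced matching Lemma (\Cref{lem:induced_matching}) it suffices to pick a value of $\alpha$ that does not exceed $\max\{4k/n,\ 9\sqrt{n/(kd)}\}$ and for which the average degree of the side $L$ of $H$ (the size-$k$ matchings of $G$) is strictly below $1$; equivalently, for which $|R|\cdot T<|L|$, where $R$ is the set of induced matchings of size $\alpha k$ in $G$, $T=\max_{M'\in R}\#\{M:|M|=k,\ M'\subseteq M\}$ is the largest number of size-$k$ matchings extending a fixed induced $\alpha k$-matching, and $|L|$ is the number of size-$k$ matchings of $G$. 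Then \Cref{lem:induced_matching} yields a size-$k$ matching with no induced sub-matching of size $\alpha k$, which is exactly the claim.

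I would supply three estimates. \emph{Lower bound on $|L|$.} Build an ordered size-$k$ matching greedily one edge at a time: having fixed $i-1$ pairwise disjoint edges we have removed $2(i-1)$ vertices, each of degree $d$, so of the $nd$ edges of $G$ at least $nd-2(i-1)d=d(n-2i+2)$ remain available, and since $k<n/2$ this is positive throughout. Dividing by the $k!$ orderings of a fixed matching gives $|L|\ge \frac{d^k}{k!}\prod_{i=1}^{k}(n-2i+2)=(2d)^k\binom{\lfloor n/2\rfloor}{k}$ (with a harmless adjustment when $n$ is odd); note this argument is insensitive to whether $k>d$, unlike the cruder bound obtained by first fixing a size-$k$ subset of $A$. \emph{Upper bound on $|R|$.} Since $M'$ is an \emph{induced} matching, the subgraph of $G$ spanned by $V(M')$ equals $M'$ itself, so $M'$ is determined by its two endpoint sets; hence $|R|\le\binom{n}{\alpha k}^2$ — this is the step that, as in \Cref{thm:upper_bound_n}, avoids a wasteful $(\alpha k)!$ factor. \emph{Upper bound on $T$.} The $(1-\alpha)k$ edges of $M\setminus M'$ are specified by choosing their $A$-endpoints ($\le\binom{n}{(1-\alpha)k}$ ways) and then a neighbour of each ($\le d$ ways each), so $T\le\binom{n}{(1-\alpha)k}\,d^{(1-\alpha)k}$.

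Combining the three, the average degree of $L$ in $H$ is at most $\binom{n}{\alpha k}^2\binom{n}{(1-\alpha)k}d^{(1-\alpha)k}\big/\big((2d)^k\binom{\lfloor n/2\rfloor}{k}\big)$, and all that remains is to choose $\alpha$ making this below $1$. Taking logarithms and using $\binom{N}{j}\le(eN/j)^j$ together with $\binom{\lfloor n/2\rfloor}{k}\ge(n/(2k))^k$, the requirement collapses to an inequality in $\alpha$ of the shape ``$\alpha\log d$ exceeds $\log(n/k)$ plus a binary-entropy term in $\alpha$ plus bounded slack''. I expect this optimization to be the main obstacle, specifically extracting both branches of the stated bound from it. When $k$ is only a modest factor below $n$, one cannot replace $\binom{\lfloor n/2\rfloor}{k}$ by $(en/2k)^k$, the slack is of order $1$, and it forces $\alpha=\Theta(k/n)$ — the $4k/n$ branch; when $k$ is smaller one has $\binom{\lfloor n/2\rfloor}{k}\approx(en/2k)^k$, the computation degenerates into the rescaled $k=n$ calculation of \Cref{thm:upper_bound_n} and yields the $\sqrt{n/(kd)}$ branch. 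The explicit constants $4$ and $9$ are simply what survive the crude estimates above, and carefully tracking them — together with the cross-over near $k\approx n/d^{1/3}$ where the two branches meet — is the only genuinely fiddly part of the proof.
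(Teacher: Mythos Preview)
Your framework is exactly the paper's: bound the average $L$-degree in $H(G,\alpha,k)$ and invoke \Cref{lem:induced_matching}. The bound $|R|\le\binom{n}{\alpha k}^2$ and your bound on $T$ are essentially the paper's (the paper writes $T\le\binom{nd}{(1-\alpha)k}$, which after the usual $\binom{N}{j}\le(eN/j)^j$ estimate is the same quantity as your $\binom{n}{(1-\alpha)k}d^{(1-\alpha)k}$).

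The genuine difference is the lower bound on $|L|$. The paper invokes the Lower Matching Conjecture (Csikv\'ari's theorem, \Cref{lmm:num_k_matchings}) to get, after simplification, $|L|\ge\frac{1}{2\pi k}\bigl(\tfrac{end}{k}\bigr)^k(2e)^{-4k^2/n}$. Your greedy argument gives $|L|\ge(2d)^k\binom{\lfloor n/2\rfloor}{k}$, and with a careful Stirling estimate this is $\gtrsim\frac{1}{\sqrt{k}}\bigl(\tfrac{end}{k}\bigr)^k e^{-2k^2/n}$ --- at least as strong as the paper's bound. So your route is more elementary: it sidesteps Csikv\'ari entirely and still lands on the same optimization as in the paper's display, from which the two branches $4k/n$ and $9\sqrt{n/(kd)}$ drop out exactly as there.

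One concrete warning about the optimization. You write that you will use $\binom{\lfloor n/2\rfloor}{k}\ge(n/(2k))^k$; this is too crude by a factor of $e^k$, and with it the ``bounded slack'' you anticipate is in fact $\Theta(1/\alpha)$, not $O(1)$. Indeed, plugging $(n/(2k))^k$ in gives a ratio of the form $\bigl(\tfrac{e^2 n}{\alpha^2 kd}\bigr)^{\alpha k}\cdot\bigl(\tfrac{e}{1-\alpha}\bigr)^{(1-\alpha)k}$, and the stray $e^{(1-\alpha)k}$ forces $\alpha$ to be bounded below by an absolute constant, killing the $\sqrt{n/(kd)}$ branch. You must use the sharper Stirling estimate $\binom{n/2}{k}\ge c\,\bigl(\tfrac{en}{2k}\bigr)^k e^{-2k^2/n}/\sqrt{k}$ \emph{throughout}, not only ``when $k$ is smaller''; then the $e^k$ cancels, the remaining correction $e^{2k^2/n}$ is what produces the $4k/n$ branch (exactly as the paper's $(2e)^{4k^2/n}$ does), and the computation goes through.
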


In particular, this rules out the existence $d$-regular $(\omega(n/d),\alpha)$-multitaskers
for any constant multitasking capacity $\alpha>0$. To see this, take any $\epsilon>0$, and put
$d\geq\frac{4\cdot81}{\epsilon^3}$ and
$k=\frac{81}{\epsilon^2}\cdot\frac{n}{d}$ in the above theorem.
It implies that $\alpha\leq\epsilon$.

%\begin{corollary}\label{cor:bound_n_d}
%There is a constant $\gamma > 0$ such that
%if $G$ is a $d$-regular $(d,k,\alpha)$-multitaskers with $n$ vertices on each side
%and $k > Cn/d$ for $C > \gamma$, then $\alpha < 1/C^{\Omega(1)}$.
%\end{corollary}

In the proof of Theorem~\ref{thm:upper_bound_k} we use the following result on the number of matchings of size $k$ in $d$-regular bipartite graphs, known as the~\emph{Lower Matching Conjecture} and recently proven by Csikv\`{a}ri~\cite{csikvari2014lower}.

\begin{lemma}\label{lmm:num_k_matchings}
Let $G=(A \cup B,E)$, be a bipartite $d$-regular graph where $|A|=|B|=n$. Denote by $M_k(G)$ the number of matchings of size $k$ in $G$. Then
\[
M_k(G) \geq {n\choose k}^2\left(1-\frac{k}{nd}\right)^{nd-k} \left(\frac{kd}{n}\right)^k .
\]
\end{lemma}

In Appendix~\ref{section:num_k_matching} we derive from Lemma~\ref{lmm:num_k_matchings} the following bound.
\begin{corollary}\label{eq:num_k_matchings}
In the setting of Lemma~\ref{lmm:num_k_matchings}, if $k < n/2$, then
\[ M_k(G) \geq \left(\frac{end}{k}\right)^k \cdot \left(\frac{1}{2e}\right)^{4k^2/n} \cdot \frac{1}{2\pi k} . \]
\end{corollary}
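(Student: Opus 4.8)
The plan is to start from the Lower Matching Conjecture bound of Lemma~\ref{lmm:num_k_matchings} and simplify its right-hand side into the claimed form, handling its three factors separately.

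The one idea needed for the factor $\left(1-\frac{k}{nd}\right)^{nd-k}$ is the elementary inequality $1-x\ge e^{-x/(1-x)}$, valid for $0\le x<1$ (it is equivalent to $(1-x)\ln(1-x)+x\ge 0$, which one checks by differentiating). Applying it with $x=\frac{k}{nd}$, and noting that the exponent telescopes since $(nd-k)\cdot\frac{x}{1-x}=(nd-k)\cdot\frac{k}{nd-k}=k$, gives
\[
\left(1-\frac{k}{nd}\right)^{nd-k}\ \ge\ e^{-k}.
\]
This is exactly the estimate one wants: the two binomials will contribute about $\left(\frac{en}{k}\right)^{2k}$, which together with $\left(\frac{kd}{n}\right)^k$ equals $e^{2k}\left(\frac{nd}{k}\right)^{k}$, so multiplying by $e^{-k}$ yields precisely $\left(\frac{end}{k}\right)^k$. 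For the binomials I would use $n!\ge\sqrt{2\pi n}(n/e)^n$ and $m!\le\sqrt{2\pi m}(m/e)^m e^{1/(12m)}$ for $m\in\{k,n-k\}$, together with $\left(\tfrac{n}{n-k}\right)^{n-k}\ge e^{k-k^2/n}$ (from $\ln(1+t)\ge t-t^2/2$ and $k<n/2$), to obtain $\binom{n}{k}\ge\frac{1}{\sqrt{2\pi k}}\left(\frac{en}{k}\right)^{k}e^{-k^2/n}e^{-\frac{1}{12k}-\frac{1}{12(n-k)}}$. Squaring, substituting into the Lower Matching Conjecture bound, and performing the cancellation above produces
\[
M_k(G)\ \ge\ \frac{1}{2\pi k}\left(\frac{end}{k}\right)^{k}e^{-2k^2/n}e^{-\frac{1}{6k}-\frac{1}{6(n-k)}}.
\]
Since $4\ln(2e)=4+4\ln 2>6>2$, we have $e^{-2k^2/n}\ge(2e)^{-4k^2/n}=\left(\frac{1}{2e}\right)^{4k^2/n}$, so the $e^{-2k^2/n}$ factor costs nothing.

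The delicate point is the factor $e^{-\frac{1}{6k}-\frac{1}{6(n-k)}}$: for bounded $k$ and $n\to\infty$ it tends to $e^{-1/(6k)}<1$, while the target's $\left(\frac{1}{2e}\right)^{4k^2/n}$ tends to $1$, leaving no room (morally, the Stirling correction to $k!$ enters squared, because the Lower Matching Conjecture involves $\binom{n}{k}^2$, and this is too wasteful for small $k$). I would instead treat small $k$, say $k\le n/4$, by a direct greedy count: when a matching is built edge by edge, after $i$ edges have been chosen at most $2id$ of the $nd$ edges of $G$ are blocked, so
\[
M_k(G)\ \ge\ \frac{1}{k!}\prod_{i=0}^{k-1}(nd-2id)\ =\ \frac{d^k}{k!}\prod_{i=0}^{k-1}(n-2i)\ \ge\ \frac{(nd)^{k}}{k!}\,e^{-2k^2/n},
\]
the last step using $1-x\ge e^{-x-x^2}$ for $0\le x\le 1/2$. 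Now Stirling is applied to $k!$ only once, giving $\frac{1}{\sqrt{2\pi k}}e^{-1/(12k)}$ in place of $\frac{1}{2\pi k}e^{-1/(6k)}$, and since $\sqrt{2\pi k}\,e^{-1/(12k)}\ge\sqrt{2\pi}\,e^{-1/12}>1$ for all $k\ge1$, this dominates $\frac{1}{2\pi k}$ with room to spare; combined with $2<4\ln(2e)$ this gives the corollary for $k\le n/4$. For $k>n/4$ one has $k>\sqrt{n/6}$, hence $\ln\frac{n}{n-k}\ge\frac{k}{n}\ge\frac{1}{6k}$, so keeping the factor $\frac{n}{n-k}\ge e^{1/(6k)}$ from the full Stirling bound on $\binom{n}{k}^2$ (which I discarded above) compensates for $e^{-1/(6k)}$; in any case, for $k>n/4$ the target carries an exponentially small $\left(\frac{1}{2e}\right)^{4k^2/n}$ while the Lower Matching Conjecture bound does not, so that range is easy. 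The actual work is not any single inequality but keeping the constant $\frac{1}{2\pi k}$ and the two families of correction terms ($e^{\pm\Theta(1/k)}$ and $e^{-\Theta(k^2/n)}$) simultaneously under control across all $k<n/2$, and the case split above is the cleanest way I see to do that.
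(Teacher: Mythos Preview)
Your proof is correct and, in several respects, cleaner and more careful than the paper's own derivation.

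Both you and the paper follow the same high-level plan: take the Lower Matching Conjecture bound and estimate the three factors separately. The differences are in the execution. For the factor $\left(1-\frac{k}{nd}\right)^{nd-k}$, the paper introduces an auxiliary lemma (that $(1-1/x)^x \ge 1/e - 7/(6ex)$) and applies it twice to obtain $e^{-k}(1/(2e))^{7k^2/(6nd)}$; your single application of $1-x\ge e^{-x/(1-x)}$ gives the sharper $e^{-k}$ in one stroke. For the binomial factor, the paper writes $\binom{n}{k}\ge (n-k)^k/(k^k e^{-k}\sqrt{2\pi k})$, which tacitly uses $k!\le \sqrt{2\pi k}(k/e)^k$; as you noticed, that direction of Stirling is false, and the missing $e^{1/(12k)}$ correction, squared, is exactly the $e^{-1/(6k)}$ that cannot be absorbed when $k$ is bounded and $n\to\infty$. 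The paper does not address this; you do, via the case split.

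Your greedy count for $k\le n/4$ is a genuinely new ingredient not present in the paper: it bypasses the Lower Matching Conjecture entirely and incurs only a single Stirling correction (for $k!$ rather than for $\binom{n}{k}^2$), which the $\frac{1}{2\pi k}$ versus $\frac{1}{\sqrt{2\pi k}}$ slack comfortably absorbs. For $k>n/4$ your observation that the gap between $e^{-2k^2/n}$ and $(2e)^{-4k^2/n}$ is of order $e^{\Theta(k)}$, hence swamps the $e^{-O(1/k)}$ deficit, is correct and suffices; the alternative route you sketch via the retained $\sqrt{n/(n-k)}$ factor from Stirling also works. So your argument is sound across the full range $1\le k<n/2$, whereas the paper's argument as written has a small gap for $k=o(n^{1/3})$ (harmless for its applications, but a gap nonetheless).
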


\begin{proof}[Proof of Theorem~\ref{thm:upper_bound_k}]
For brevity, we refer to a matching of size $k$ as a~\emph{$k$-matching}.
Fix $\alpha\in(0,1]$.
Consider the graph $H=(G,\alpha,k)$.
Clearly $|R| \leq {n \choose \alpha k}^2 \leq (\frac{en}{\alpha k})^{2\alpha k}$.
For a given induced $\alpha k$-matching, we can obviously upper-bound the number of $k$-matchings that contain it by the total number of edge subsets of size $k$ that contain it, which is at most ${nd \choose k-\alpha k} \leq \left(\frac{end}{(1-\alpha)k}\right)^{(1-\alpha)k}$.
By Corollary~\ref{eq:num_k_matchings},
$|L| \geq \frac{1}{2\pi k}\left(\frac{end}{k}\right)^k \left(\frac{1}{2e}\right)^{4k^2/n}$.
Therefore, the average degree of the vertices in $L$ is at most
\[
  \frac{(\frac{en}{\alpha k})^{2\alpha k} \cdot \left(\frac{end}{(1-\alpha)k}\right)^{(1-\alpha)k}}{\frac{1}{2\pi k}\left(\frac{end}{k}\right)^k \left(\frac{1}{2e}\right)^{4k^2/n}}
  =
  \left( \frac{en}{\alpha^2kd} \right)^{\alpha k}
  \cdot \left(\frac{1}{1-\alpha}\right)^{(1-\alpha)k}
  \cdot (2e)^{4k^2/n}
   \cdot 2\pi k  .
\]
If we choose $\alpha$ such that this bound is smaller than $1$, then there must be a vertex in $L$ with no neighbors in $R$, and we are done by Lemma~\ref{lem:induced_matching}. Hence we need $\alpha$ to satisfy
\[
  \frac{en}{\alpha^2kd}
  \cdot \left(\frac{1}{1-\alpha}\right)^{(1-\alpha)/\alpha}
  \cdot (2e)^{4k/(\alpha n)}
   \cdot (2\pi k)^{1/(\alpha k)} < 1 .
\]
We now bound the terms on the left-hand side for an appropriate choice of $\alpha$.
For $\alpha>4/\sqrt{k}$  the term $(2\pi k)^{1/(\alpha k)}$ is upper bounded by 2.
For $\alpha>4k/n$ the term $(2e)^{4k/(\alpha n)}$ is upper bounded by $2e$.
The term $\left(1-\alpha\right)^{-(1-\alpha)/\alpha}$ is upper-bounded by $e$ for any $\alpha$.
Therefore, if we chose $\alpha$ that satisfies both of the above inequalities,
the average degree of the vertices in $L$ is at most
$\frac{4e^3n}{\alpha^2kd} < 81 \frac{n}{\alpha^2kd}$, which is smaller than 1 for $\alpha > 9\sqrt{\frac{n}{kd}}$.
Overall, choosing $\alpha > \max\{4/\sqrt{k} , \frac{4k}{n} , 9\sqrt\frac{n}{kd}\}$ suffices. By noting that $9\sqrt{\frac{n}{kd}} > 4/\sqrt{k}$, we get $\alpha > \max\{\frac{4k}{n} , 9\sqrt\frac{n}{kd}\}$ as stated.
\end{proof}

\paragraph{Putting all the bounds together}
Note that the bound $\max\{\frac{4k}{n} , 9\sqrt\frac{n}{kd}\}$
from Theorem~\ref{thm:upper_bound_k} is equal to $\frac{4k}{n}$ if $k > \frac{n}{(16d/81)^{1/3}}$, and $9\sqrt\frac{n}{kd}$ otherwise.
Combining this with the bound in Corollary~\ref{cor:upper_bound_high_range}
we obtain Theorem~\ref{thm:d reg upper bound}.

%%%%%%%%%%%%%%%%%%%%%%%%%%%%%%%%%%%%%%%%%%%%%%%%%%%%%%%%%%%%%%%
\subsection{Upper bounds for networks of depth larger than $2$}
%%%%%%%%%%%%%%%%%%%%%%%%%%%%%%%%%%%%%%%%%%%%%%%%%%%%%%%%%%%%%%%
A graph $G(V,E)$ is a~\emph{network} with $r$ layers of width $n$
and degree $d$, if $V$ is partitioned into $r$ independent sets
$V_1,\ldots,V_r$ of size $n$ each, such that each $(V_i,V_{i+1})$ induced
a $d$-regular bipartite graph for all $i<r$, and there are no additional
edges in $G$.

A~\emph{top-bottom} path in $G$ is a path $v_1,\ldots,v_r$ such that
$v_i\in V_i$ for all $i\leq r$, and $v_i,v_{i+1}$ are neighbors for
all $i<r$.

A set of node-disjoint top-bottom paths $p_1,\ldots,p_k$ is
called~\emph{induced} if for every two edges $e\in p_i$ and $e'\in
p_j$ such that $i\neq j$, there is no edge in $G$ connecting $e$
and $e'$.

\begin{fact}
    A set of node-disjoint top-bottom paths $p_1,\ldots,p_k$ is induced
    if and only if for every $i<r$
    it holds that $(p_1\cup \ldots \cup p_k) \cap E(V_i,V_{i+1})$ is an induced matching in $G$.
\end{fact}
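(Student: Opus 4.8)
The plan is to reduce the (somewhat unwieldy) notion of an induced family of paths to a condition that can be checked layer by layer, exploiting two structural features of the network: edges of $G$ join only \emph{consecutive} layers, and the paths $p_1,\dots,p_k$ are node-disjoint with each path using exactly one edge of every $E(V_i,V_{i+1})$. Write $M_i := (p_1\cup\dots\cup p_k)\cap E(V_i,V_{i+1})$ for $i<r$. First I would record the easy facts that each $M_i$ is a matching (by node-disjointness), and that $p_1\cup\dots\cup p_k$ is the disjoint union of $M_1,\dots,M_{r-1}$, so every edge appearing on the paths lies in exactly one layer, and each $p_j$ meets a given $M_i$ in exactly one edge.

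Next I would reformulate ``$p_1,\dots,p_k$ is induced'' as: \emph{no edge of $G$ has its two endpoints on two different paths of the family}. This is immediate from the definition: since $r\geq 2$ every path has an edge, so every vertex of a path is an endpoint of some edge of that path; hence an edge of $G$ between $p_a$ and $p_b$ with $a\neq b$ is precisely an edge connecting an edge of $p_a$ to an edge of $p_b$, and conversely.

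The key observation is then the following. Let $h$ be an edge of $G$; it lies in a unique $E(V_i,V_{i+1})$. If its endpoint $u\in V_i$ lies on a path $p_a$, then $u$ is exactly the layer-$i$ vertex of $p_a$, hence an endpoint of the unique edge $e_a\in M_i$ belonging to $p_a$; likewise, if its endpoint $v\in V_{i+1}$ lies on a path $p_b$, then $v$ is an endpoint of the edge $e_b\in M_i$ belonging to $p_b$. Thus an edge $h$ joining two \emph{distinct} paths $p_a,p_b$ automatically joins two \emph{distinct} edges $e_a\neq e_b$ of the same matching $M_i$; conversely, two distinct edges of $M_i$ joined by an edge of $G$ must lie on two distinct paths. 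With this in hand both implications are one line each. If the family is induced but some $M_i$ fails to be an induced matching, the witnessing edge joins two distinct paths, contradicting the reformulation above. If every $M_i$ is an induced matching but the family is not induced, then the witnessing edge $h$, lying in some $E(V_i,V_{i+1})$, joins two distinct edges of $M_i$, contradicting that $M_i$ is induced.

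There is no genuine obstacle here; the only point requiring care is that the ``interfering'' edge $h$ guaranteed by a failure of inducedness need not lie in the same bipartite layer as the path-edges $e,e'$ it is asked to connect, so the contradiction must be extracted in the layer $E(V_i,V_{i+1})$ \emph{of $h$ itself}, not of $e$ or $e'$ — which is precisely what the key observation arranges.
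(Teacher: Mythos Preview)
Your argument is correct. The paper states this as a \emph{Fact} without proof, so there is no ``paper's own proof'' to compare against; your write-up supplies exactly the routine verification the authors omit. The reformulation of ``induced family of paths'' as ``no edge of $G$ joins two distinct paths'' is valid (using node-disjointness and $r\geq 2$), and your key observation---that any interfering edge $h$ lives in a unique $E(V_i,V_{i+1})$ and there witnesses a violation of $M_i$ being induced, and conversely---is the right way to localize the condition. One small point you might make explicit for completeness: an edge of $G$ connecting two edges of $M_i$ is automatically forced to lie in $E(V_i,V_{i+1})$, since its endpoints must be among the endpoints of those two matching edges (all in $V_i\cup V_{i+1}$) and each $V_j$ is independent; this confirms that ``induced matching in $G$'' and ``induced matching in $G[V_i\cup V_{i+1}]$'' coincide here, which is implicit in your argument.
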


We say that a network $G$ as above is a $(k,\alpha)$-multitasker if every
set of $k$ node-disjoint top-bottom paths contains an induced subset of
size at least $\alpha k$.

\begin{theorem}
If $G$ is an $(n,\alpha)$-multitasker
then $\alpha < e(er/d)^{1-\frac1r}=O(r/d)^{1-1/r}$.
\end{theorem}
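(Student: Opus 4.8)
The plan is to mimic the counting argument used in the proof of Theorem~\ref{thm:upper_bound_n}, applied to the $(\alpha,n)$-matching graph $H = H(G,\alpha,n)$ in the depth-$r$ setting, where now $L$ is the set of families of $n$ node-disjoint top-bottom paths (the ``perfect'' path systems) and $R$ is the set of induced families of $\alpha n$ such paths. By Lemma~\ref{lem:induced_matching}, it suffices to show that for $\alpha \geq e(er/d)^{1-1/r}$ the average degree of a vertex in $L$ is strictly less than $1$, i.e. $|R| \cdot (\text{max in-degree from }R) < |L|$. So I need (i) a lower bound on the number of perfect path systems, (ii) an upper bound on the number of induced $\alpha n$-path systems, and (iii) an upper bound on how many perfect path systems extend a given induced $\alpha n$-subfamily.

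For (i): a perfect path system in a depth-$r$ network decomposes as a choice of perfect matching between each consecutive pair of layers, but with the constraint that the matchings compose into node-disjoint paths — equivalently, it is a single permutation obtained by composing $r-1$ permutations each realizable by the corresponding $d$-regular bipartite graph. The cleanest route is: the number of perfect path systems equals the number of ways to pick a perfect matching $M_1$ in $(V_1,V_2)$, then a perfect matching $M_2$ in $(V_2,V_3)$, and so on; every such $(r-1)$-tuple of perfect matchings yields a valid (automatically node-disjoint) top-bottom path system, and distinct tuples give distinct systems. Hence the count is exactly $\prod_{i=1}^{r-1} M(G_i)$ where $G_i$ is the $i$-th layer graph, and by the Schrijver lower bound in Lemma~\ref{lmm:num_perfect_matchings} this is at least $(d/e)^{n(r-1)}$. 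For (ii): an induced $\alpha n$-path system is determined by its $\alpha n$ vertices in each layer together with the matching structure, so crudely $|R| \le \left(\binom{n}{\alpha n}\right)^{r} \cdot ((\alpha n)!)^{r-1} \le (n/\alpha)^{O(\alpha n r)}$ times lower-order factors; being generous, $|R| \le (e/\alpha)^{\alpha n r} \cdot (\alpha n)^{\alpha n (r-1)} \cdot (\text{stuff})$, but the dominant term I will actually need to control is of the form $(C/\alpha)^{O(\alpha n r)}$ after absorbing the factorials. For (iii): given a fixed induced subfamily on $\alpha n$ of the columns, a perfect path system extending it is an arbitrary choice of perfect matchings on the remaining $(1-\alpha)n$ vertices in each of the $r-1$ layer-pairs, so the extension count is at most $\prod_{i=1}^{r-1} ((1-\alpha)n)! \le ((1-\alpha)n)!^{\,r-1} \le \left(\sqrt{2\pi (1-\alpha)n}\,((1-\alpha)n/e)^{(1-\alpha)n}\right)^{r-1}$ by Stirling.

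Putting these together, the average degree of $L$ is at most
\[
\frac{|R| \cdot ((1-\alpha)n)!^{\,r-1}}{(d/e)^{n(r-1)}}
\le \left(\frac{C}{\alpha}\right)^{O(\alpha n r)} \cdot \left(\frac{(1-\alpha)n/e \cdot e/d}{1}\right)^{(1-\alpha)n(r-1)} \cdot \text{lower order}.
\]
Now I take logarithms, divide by $n$, and ask for the bound to be negative. The $\binom{n}{\alpha n}$-type factors contribute $O(\alpha r \log(1/\alpha))$ per vertex, and the critical balance comes from comparing the $((1-\alpha)n)!^{r-1}$ numerator against the $(d/e)^{n(r-1)}$ denominator, which after Stirling gives roughly $(r-1)(1-\alpha)\log((1-\alpha)n/d)$ — wait, this is growing in $n$, so the crude bound $|L| \ge (d/e)^{n(r-1)}$ using only Schrijver is \emph{not} enough; I must instead bound the extension count (iii) more carefully relative to $|L|$ by pairing each layer's extension factor $((1-\alpha)n)!$ against that layer's matching-count lower bound $M(G_i) \ge (d/e)^n$, so the ratio per layer is $\frac{((1-\alpha)n)!}{(d/e)^n}$ — still bad. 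The fix, and the step I expect to be the main obstacle, is the same one implicit in Theorem~\ref{thm:upper_bound_n}: do \emph{not} bound the extension count by a full factorial; instead bound the number of perfect path systems containing a fixed induced $\alpha n$-subfamily directly using the Bregman/Minc upper bound $M(G) \le (d!)^{n/d}$ applied layer by layer to the \emph{whole} system and dividing, giving extension count $\le (d!)^{(1-\alpha)n(r-1)/d}$. Then the average degree is at most
\[
\frac{(e/\alpha)^{O(\alpha n r)} \cdot (d!)^{(1-\alpha)n(r-1)/d}}{(d/e)^{n(r-1)}}
= \left( \frac{e^{O(1)}}{\alpha^{O(1)}} \cdot \left(\frac{d!^{1/d}}{d/e}\right)^{(1-\alpha)(r-1)/(\alpha r)} \cdot \text{poly} \right)^{\alpha n r},
\]
and since $d!^{1/d} \le d/e \cdot (2\pi d)^{1/(2d)}$, the middle factor is $\le (2\pi d)^{O((1-\alpha)(r-1)/(2 d \alpha r))}$ which is at most a constant close to $1$ for the relevant range of $\alpha$. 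So everything collapses and the requirement becomes, in essence, $\frac{\text{const}}{\alpha^{1/(1-1/r)}} \cdot \frac{r}{d} < 1$, i.e. $\alpha^{r/(r-1)} > \text{const}\cdot r/d$, which rearranges to $\alpha > e(er/d)^{1-1/r}$ after choosing the constants to match. I would carry out the calculation in exactly this order — (i) Schrijver lower bound on $|L|$, (ii) crude binomial bound on $|R|$, (iii) Bregman upper bound on extensions, (iv) combine, take logs, solve for $\alpha$ — and I expect the delicate point to be verifying that all the polynomial-in-$n$ and $(2\pi d)^{1/d}$ fudge factors really are dominated once $\alpha$ is at least the claimed threshold, exactly as in the depth-$2$ proof, plus correctly tracking the exponent $1-1/r$ through the $(1-\alpha)(r-1)/(\alpha r)$ factor.
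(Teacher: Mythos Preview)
Your proposal follows exactly the paper's approach: Schrijver lower bound on $|L|$, Bregman--Minc upper bound $(d!)^{(1-\alpha)n(r-1)/d}$ on the degree of a vertex in $R$, a crude upper bound on $|R|$, and combine. Your detour through $((1-\alpha)n)!^{r-1}$ for the extension count and your self-correction to the Bregman bound lands on the right expression, and the final balance $\frac{e}{d}(e/\alpha)^{r/(r-1)} < 1/r$ is the paper's.

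There is, however, one genuine gap in step (ii). Your stated bound $|R| \le \binom{n}{\alpha n}^r \cdot ((\alpha n)!)^{r-1}$ is too weak to close the argument: the factorial contributes a factor of order $(\alpha n/e)^{\alpha n(r-1)}$, which grows with $n$ and would prevent the average $L$-degree from ever dropping below $1$. Your claim that this is ``of the form $(C/\alpha)^{O(\alpha n r)}$ after absorbing the factorials'' is simply false --- the $n$ does not go away. In your final displayed inequality you silently use $(e/\alpha)^{O(\alpha n r)}$, i.e.\ just $\binom{n}{\alpha n}^r$, but you never justify dropping the factorials. The missing observation, which the paper relies on, is that because the path system in $R$ is \emph{induced}, the subgraph of $G$ induced on the selected $\alpha n$ vertices per layer contains precisely the path edges and nothing else (this is the Fact stated just before the theorem). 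Hence the paths are recovered as the connected components of that induced subgraph, the vertex set alone determines the element of $R$, and $|R| \le \binom{n}{\alpha n}^r$ with no factorial. Once you add this one sentence, your argument is the paper's.
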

\begin{proof}
Let $H(L,R;E_H)$ be the bipartite graph in which side $L$ has a node for
each set of $n$ node-disjoint top-bottom paths in $G$, side $R$ has a
node for each induced set of $\alpha n$ node-disjoint top-bottom paths
in $G$, and $P\in L$, $P'\in R$ are adjacent iff $P'\subset P$. Let $D$
be the maximum degree of side $R$. We wish to upper-bound the average
degree of side $L$, which is upper-bounded by $D|R|/|L|$.

$|R|$ is clearly upper bounded by ${n\choose\alpha n}^r$. It is a
simple observation that $|L|$ equals $\prod_{i<r}m_i$, where $m_i$
denotes the number of perfect matchings in the bipartite graph $G[V_i\cup
V_{i+1}]$. Since this graph is $d$-regular, by
the Falikman-Egorichev proof of the Van der Waerden conjecture
(\cite{falikman1981proof}, \cite{egorychev1981solution}),
or by Schrijver's lower bound, we
have $m_i\geq(d/e)^n$ and hence $|L|\geq(d/e)^{n(r-1)}$. To upper bound
$D$, fix $P'\in R$, and let $G'$ be the network resulting by removing
all nodes and edges in $P'$ from $G$. This removes exactly $\alpha n$
nodes from each layer $V_i$; denote by $V_i'$ the remaining nodes in
this layer in $G'$. It is a straightforward observation that $D$ equals
the number of sets of $(1-\alpha)n$ node-disjoint top-bottom paths in
$G'$. Each such set decomposes into $M_1,\ldots,M_{r-1}$ such that $M_i$
is a perfect matching on $G'[V_i',V_{i+1}']$ for each $i<r$. Therefore
$D\leq\prod_{i-1}m_i'$ where $m_i'$ denotes the number of perfect
matchings in $G'[V_i',V_{i+1}']$. The latter is a bipartite graph with
$(1-\alpha)n$ nodes on each side and maximum degree $d$, and hence by the
Bregman-Minc inequality, $m_i'\leq(d!)^{(1-\alpha)n/d}$. Consequently,
$D\leq(d!)^{(1-\alpha)n(r-1)/d}$.

Putting everything together, we find
that the average degree of side $L$ is upper bounded by
\[
  \frac{D|R|}{|L|} \leq
  \frac{(d!)^{(1-\alpha)n(r-1)/d}
\cdot {n\choose\alpha n}^r}{(d/e)^{n(r-1)}} \leq
  \frac{(\sqrt{2\pi d}(d/e)^d)^{(1-\alpha)n(r-1)/d}
\cdot (\frac{e}{\alpha})^{\alpha nr}}{(d/e)^{n(r-1)}}
\]
\begin{equation}\label{eq:layers1}
  = \left((2\pi d)^{\frac{1-\alpha}{2\alpha d}}\cdot\frac{e}{d}
\left(\frac{e}\alpha\right)^{\frac{r}{r-1}}\right)^{\alpha n(r-1)}.
\end{equation}
We will show that if $\alpha \geq e(er/d)^{1-\frac1r}$ then above bound
is less than $1$, which implies side $L$ has a node of degree $0$, a
contradiction. To this end, note that for this setting of $\alpha$ we have
\begin{equation}\label{eq:layers2}
  \frac{e}{d}\left(\frac{e}\alpha\right)^{\frac{r}{r-1}} \leq \frac{1}{r},
\end{equation}
and
\[ (2\pi d)^{(1-\alpha)/(2\alpha d)} \leq
(2\pi d)^{1/(2\alpha d)} \leq (2\pi d)^{1/(2e(er)^{1-1/r}d^{1/r})} . \]
One can verify that,
\begin{fact}
For every constants $\alpha,\beta>0$, the function
$f(d)=(\alpha d)^{1/(\beta d^{1/r})}$ is maximized at $d=e^r/\alpha$.
\end{fact}
Plugging this above (and using $r\geq 2$), we obtain
\[
  (2\pi d)^{(1-\alpha)/(2\alpha d)} \leq
  e^{(2\pi r)^{1/r}/(2e^2)} \leq
  e^{\sqrt{2\pi}\cdot e^{1/e}/(2e^2)} < 1.28 < r,
\]
and plugging this with~\cref{eq:layers2} into~\cref{eq:layers1}
yields $\frac{D|R|}{|L|}<1$, as needed.
\end{proof}

%%%%%%%%%%%%%%%%%%%%%%%%%%%%%%%%%%%%%%%%%%%%%%%%%%%%%%%%%%%%%%%
\subsection{The irregular case}
%%%%%%%%%%%%%%%%%%%%%%%%%%%%%%%%%%%%%%%%%%%%%%%%%%%%%%%%%%%%%%%

Below we consider general graphs with average degree $d$. This is in contrast
to the previous section, where we considered only $d$-regular graphs.

\begin{theorem}
\label{thm:avgdeg_lowerbound}
Let $G$ be a bipartite graph with $n$ nodes on each side, average degree
$d$, and maximum degree $\Delta$. If $G$ is an $\alpha$-multitasker,
then $\alpha < O(\Delta^{\frac13}/d^{\frac23})$.
\end{theorem}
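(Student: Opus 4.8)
The plan is to reduce the irregular case to a near-regular situation by passing to a dense, almost-regular subgraph, and then run essentially the same counting argument (via the induced-matching Lemma, \Cref{lem:induced_matching}) that powered \Cref{thm:upper_bound_n} and \Cref{thm:upper_bound_k}. First I would observe that a bipartite graph with average degree $d$ and $n$ nodes on each side has at least $dn/2$ edges, and by a standard dyadic/averaging cleanup one can find a subset of the vertices on which the graph is ``almost regular'': more precisely, there is a value $d' \geq \Omega(d/\log n)$ (or $\Omega(d)$ if one is willing to pay only constants, but the $\log n$ slack is harmless here since it only affects lower-order terms) and subsets $A'\subseteq A$, $B'\subseteq B$ with $|A'|,|B'| = m$ for some $m \leq n$ such that every vertex in $G[A'\cup B']$ has degree between $d'$ and $2d'$ — here $d' \le \Delta$. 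The point of this step is that all our combinatorial estimates — counting perfect matchings of a graph, and bounding how many matchings extend a given induced matching — only need two-sided degree bounds, not exact regularity.

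Next I would set $k := m$ and work inside $G' := G[A'\cup B']$, building the auxiliary bipartite graph $H(G',\alpha,m)$ of the induced-matching Lemma. The key two inequalities are: (i) a lower bound on $|L|$, the number of perfect matchings of $G'$ — since $G'$ has minimum degree $d'$, a Schrijver-type bound gives $|L| \geq (d'/e)^{m}$ (min-degree $\delta$ graphs contain at least as many perfect matchings as $\delta$-regular ones on the same vertex set, up to the usual permanent lower bounds); and (ii) for a fixed induced matching $M'$ of size $\alpha m$, an upper bound on the number of perfect matchings of $G'$ containing $M'$. For (ii), after deleting the $2\alpha m$ endpoints of $M'$ we are left with a bipartite graph on $(1-\alpha)m$ vertices per side with maximum degree at most $2d'\le 2\Delta$, so the Bregman–Minc inequality gives at most $((2\Delta)!)^{(1-\alpha)m/(2\Delta)} \leq (\sqrt{2\pi\cdot 2\Delta}\,(2\Delta/e)^{2\Delta})^{(1-\alpha)m/(2\Delta)}$ such extensions. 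Also $|R| \leq \binom{m}{\alpha m}^2 \leq (e/\alpha)^{2\alpha m}$. Multiplying (ii) by $|R|$ and dividing by $|L|$ bounds the average degree of side $L$ by roughly
\[
  \left(\frac{e^{2}}{\alpha^{2}}\cdot\frac{(2\Delta/e)^{2}}{d'/e}\cdot(\text{poly}(\Delta)^{1/\Delta})\right)^{\alpha m}
  \;\lesssim\;\left(\frac{\Delta^{2}}{\alpha^{2}d'}\right)^{\alpha m}\!,
\]
so this is below $1$ as soon as $\alpha \gtrsim \Delta/\sqrt{d'}$. Since $d' \geq \Omega(d/\log n)$ (and if one allows the cruder $d'=\Omega(d)$ the $\log$ disappears but the exponent changes), this yields $\alpha = O\!\big(\Delta/\sqrt{d}\big)$ up to logarithmic factors; to get the stated exponent $\Delta^{1/3}/d^{2/3}$ one instead balances the cleanup step, keeping a subgraph where the ratio $\Delta_{\mathrm{loc}}/\delta_{\mathrm{loc}}$ is controlled more carefully, or — more simply — notes that a graph of average degree $d$ must contain a matching $M$ of size $\Theta(n/\Delta)$ whose induced subgraph $G[M]$ has average degree $\Theta(d)$, contract it via \Cref{lem:contraction} and apply Turán (\Cref{lem:turan_independent_set}) together with the matching-counting bound to extract the $1/3$–$2/3$ split.

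I expect the main obstacle to be the cleanup/regularization step: converting an arbitrary degree sequence with average $d$ and maximum $\Delta$ into a subgraph with tight two-sided degree control while keeping the side sizes equal (so that ``perfect matching'' still makes sense) and losing only the claimed factors. One has to be careful that deleting high-degree vertices to control $\Delta_{\mathrm{loc}}$ does not destroy too much of the edge mass, and that deleting low-degree vertices to raise $\delta_{\mathrm{loc}}$ keeps the two sides balanced — the standard trick is to alternately prune the two sides and argue the process terminates with a nonempty balanced set of the right density, which is exactly where the $\Delta$ versus $d$ trade-off enters and produces the exponents $\tfrac13$ and $\tfrac23$. Everything after that is a routine repetition of the permanent-bound bookkeeping already carried out in the proofs of \Cref{thm:upper_bound_n} and \Cref{thm:upper_bound_k}.
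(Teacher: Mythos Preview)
Your main calculation has a genuine gap. When you divide a Bregman upper bound (using maximum degree $2d'$ on $(1-\alpha)m$ vertices per side) by a Schrijver-type lower bound (using minimum degree $d'$ on $m$ vertices per side), the two do \emph{not} collapse to an $\alpha m$-th power: one gets
\[
\frac{(2d'/e)^{(1-\alpha)m}}{(d'/e)^{m}} \;=\; 2^{(1-\alpha)m}\cdot\bigl(e/d'\bigr)^{\alpha m},
\]
so after multiplying by $|R|\le(e/\alpha)^{2\alpha m}$ the average $L$-degree is roughly $\bigl(e^{3}/(\alpha^{2}d')\bigr)^{\alpha m}\cdot 2^{(1-\alpha)m}$. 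The stray factor $2^{(1-\alpha)m}$ has exponent proportional to $m$, not $\alpha m$, and forces $\alpha=\Omega(1/\log d')$ rather than anything polynomial in $1/d'$. The reason the permanent argument works in \Cref{thm:upper_bound_n} is that both bounds use the \emph{same} degree $d$, so $(d/e)^{(1-\alpha)n}/(d/e)^{n}=(e/d)^{\alpha n}$ cancels exactly; any constant-factor gap between the upper and lower degrees ruins this. (There are secondary issues too: the Schrijver bound is stated for regular graphs, and your almost-regular subgraph need not have equal sides or even a perfect matching.) Your fallback remarks about ``balancing the cleanup step'' or combining contraction with Tur\'an and ``the matching-counting bound'' are too vague to produce the exponents $1/3$ and $2/3$.

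The paper's argument avoids regularization entirely and works directly in $G$. The key idea is to choose the matching size to depend on $\alpha$: set $q=\lceil \alpha nd/(2\Delta)\rceil$. For the lower bound on $|L|$ one uses an elementary greedy count (pick an edge, delete its at most $2\Delta-1$ conflicting edges, repeat), which gives at least $((1-\alpha)nd)^{q}/q!$ matchings of size $q$; for the upper bound on extensions of a fixed induced $\alpha q$-matching one uses the crude estimate $\binom{nd}{(1-\alpha)q}$. This yields an average $L$-degree of order $\bigl(en/(\alpha^{2}dq)\bigr)^{\alpha q}$, and plugging in $q\approx \alpha nd/(2\Delta)$ makes the base $\approx 2e\Delta/(\alpha^{3}d^{2})$, which is below $1$ precisely when $\alpha\gtrsim(\Delta/d^{2})^{1/3}=\Delta^{1/3}/d^{2/3}$. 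It is the self-referential choice of $q$ (linear in $\alpha$) that converts the quadratic dependence on $\alpha$ in the base into a cubic one.
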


Note that in case $d=\Omega(\Delta)$ we get $\alpha=O({1/d}^{1/3})$.

\begin{proof}[Proof of~\cref{thm:avgdeg_lowerbound}]
Denote $q:=\lceil \alpha nd/(2\Delta) \rceil$. We use the following
lemma to lower-bound the number of matchings of size $q$ in $G$.
\begin{lemma}\label{lmm:avgdeg_matching_lowerbound}
The number of matchings of size $q$ in $G$
is at least $((1-\alpha)nd)^{q}/q!$.
\end{lemma}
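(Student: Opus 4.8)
The plan is to prove Lemma~\ref{lmm:avgdeg_matching_lowerbound} by a greedy edge-exposure argument: build a matching of size $q$ one edge at a time, and at each step count the number of "available" edges. Concretely, I would process the edges in some order and maintain a partial matching; after having chosen $i$ edges, the chosen matching touches exactly $2i$ vertices, and the edges that become forbidden are precisely those incident to one of these $2i$ vertices. Since the maximum degree is $\Delta$, at most $2i\Delta$ edges are ruled out, so the number of edges still available is at least $|E| - 2i\Delta = nd - 2i\Delta$ (using that a bipartite graph with $n$ nodes on each side and average degree $d$ has $nd$ edges). For $i \le q-1 \le \alpha nd/(2\Delta)$ this is at least $nd - 2(q-1)\Delta \ge nd - \alpha nd = (1-\alpha)nd$, so at every one of the $q$ steps there are at least $(1-\alpha)nd$ choices.

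The counting then goes as follows. The number of ordered sequences $(e_1,\dots,e_q)$ of distinct edges forming a matching is at least $((1-\alpha)nd)^q$ by the above, since each $e_{i+1}$ can be chosen in at least that many ways given $e_1,\dots,e_i$. Each unordered matching of size $q$ is counted exactly $q!$ times among these sequences, so the number of matchings of size $q$ is at least $((1-\alpha)nd)^q/q!$, which is the claimed bound. I should double-check the boundary: we need the lower bound on available edges to hold for the choice of the $q$-th edge, i.e.\ when $i = q-1$; with $q = \lceil \alpha nd/(2\Delta)\rceil$ we have $q - 1 < \alpha nd/(2\Delta)$, hence $2(q-1)\Delta < \alpha nd$ and $nd - 2(q-1)\Delta > (1-\alpha)nd$, so the bound is valid (and in fact strict, which is more than enough).

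The main subtlety — the only place one has to be slightly careful — is making sure the "available edge" count is never negative and that $q$ is small enough for the bound $(1-\alpha)nd$ to be positive and meaningful; this is exactly why the lemma is stated with $q$ tied to $\alpha nd/(2\Delta)$ rather than for arbitrary $q$. Implicitly one also uses $\alpha < 1$ (true since $\alpha \in (0,1]$, and the $\alpha=1$ case is degenerate and handled separately in the surrounding argument), and the elementary fact that the number of edges of $G$ equals $nd$ by the definition of average degree for a bipartite graph with $n$ vertices per side. No deep input is needed here — unlike the regular case, which invoked the Lower Matching Conjecture, the irregular lower bound is purely combinatorial, and this crude greedy bound is all that the subsequent density argument (bounding the average degree of side $L$ in the matching graph $H(G,\alpha,q)$ and applying Lemma~\ref{lem:induced_matching}) will require.
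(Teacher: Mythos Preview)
Your proof is correct and essentially identical to the paper's own argument: both build a size-$q$ matching greedily, observe that after $i$ edges have been selected at most $2i\Delta$ (the paper uses $i(2\Delta-1)$, then relaxes) edges are blocked, use $q-1<\alpha nd/(2\Delta)$ to conclude at least $(1-\alpha)nd$ choices are available at every step, and divide the resulting product by $q!$ to pass from ordered sequences to unordered matchings. The only cosmetic difference is that the paper phrases the process as iteratively deleting conflicting edges from the graph, whereas you keep the graph fixed and count forbidden edges directly; the arithmetic and the final bound are the same.
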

\begin{proof}
Consider the following greedy procedure:
Initialize $G_1\leftarrow G$ and $M\leftarrow\emptyset$.
For $i=1,\ldots,q$, Choose an arbitrary edge $e_i$ in $G_i$, and let $D_i$
denote the set of all edges in $G_i$ sharing an endpoint with $e_i$. Set
$M\leftarrow M\cup\{e_i\}$ and let $G_{i+1}$ be the graph resulting from
removing the edges $\{e_i\}\cup D_i$ from $G_i$.

Initially $G$ has $nd$ edges, and since the maximum degree is
$\Delta$ each iteration removes at most $2\Delta-1$ edges. Hence
for every $i=0,1,\ldots,q$, the number of edges in $G_i$ is at least
$nd-i(2\Delta-1) \geq nd-2q\Delta \geq (1-\alpha)nd$, where the last
inequality is by recalling the setting of $q$. Hence the number of
different matchings that can be realized by the algorithm above is at
least $((1-\alpha)nd)^{q}/q!$.  \end{proof}

We proceed to proving~\cref{thm:avgdeg_lowerbound}.
Consider $H(G,\alpha,q)$.
Let $r\in R$ be an induced matching of size $\alpha q$ in $G$.
Let $G^{-r}$ be the graph resulting from removing all nodes participating
in $r$, together with their incident edges, from $G$.
Note that we remove every edge that has at least one endpoint matched in $r$,
even if its other endpoint does not participate in $r$.
The degree of $r$ in $H$ equals the number of $(1-\alpha)q$-matchings in $G^{-r}$,
which is clearly upper bounded by ${nd\choose (1-\alpha)q}$, since
$G$ has $nd$ edges and $G^{-r}$ is a subgraph of $G$. Furthermore
we clearly have $|R|\leq{n\choose \alpha q}^2$, which implies
that $H$ has in total at most ${n\choose \alpha q}^2{nd\choose
(1-\alpha)q}$ edges. Combining this with the lower bound on $|L|$ given
by~\cref{lmm:avgdeg_matching_lowerbound}, we get the following upper
bound on the average degree of side $L$ in $H$: \[
  \frac{{n\choose \alpha q}^2{nd\choose
  (1-\alpha)q}}{\frac{((1-\alpha)nd)^{q}}{q!}}
  \leq \frac{(\frac{en}{\alpha q})^{2\alpha
  q}(\frac{end}{(1-\alpha)q})^{(1-\alpha)q}}{\frac{1}{\sqrt{2\pi
  q}}\cdot\left(\frac{(1-\alpha)end}{q}\right)^{q}} =
  \sqrt{2\pi q} \left(\frac{1}{1-\alpha}\right)^{(2-\alpha)q}
  \left(\frac{en}{\alpha^2dq}\right)^{\alpha q} \leq \sqrt{2\pi q}
  \left(\frac{4e^4\Delta}{\alpha^3d^2}\right)^{\alpha q},
\]
Where the final inequality is since $1+x\leq e^x$ for every $x$,
and in particular
$\left(\frac{1}{1-\alpha}\right)^{2-\alpha} =
\left(1+\frac{\alpha}{1-\alpha}\right)^{2-\alpha} \leq
(e^{\frac{\alpha}{1-\alpha}})^{2-\alpha} \leq
e^{3\alpha}$ (for $\alpha<1/2$),
and $q > \alpha nd/(4\Delta)$.

If the average degree on side $L$ is less than $1$ then there is an
isolated node in $L$, which represents a $q$-matching in $G$ that contains
no induced matching of size $\alpha q$, which contradicts $G$ being an
$\alpha$-multitasker. Suppose $\alpha > C\cdot\Delta^{1/3}/d^{2/3}$
for a sufficiently large constant $C$. Then the term
$\frac{4e^2\Delta}{\alpha^3d^2}$ is less than $\frac12$. Furthermore
the term $(2\pi q)^{1/(2\alpha q)}$ is less than $2$ as long
as $\alpha \gg \sqrt{\frac{\Delta}{nd}\log(\frac{nd}{\Delta}})$,
which holds for our setting since $\alpha \geq
\frac{\Delta^{1/3}}{d^{2/3}} \geq \left(\frac{\Delta}{nd}\right)^{1/3}
\gg \sqrt{\frac{\Delta}{nd}\log\left(\frac{nd}{\Delta}\right)}$. Hence
$L$ has average degree smaller than $1$ and the proof is finished.
\end{proof}

Note that Theorem \ref{thm:avgdeg_lowerbound} does not provide any
nontrivial bound for $\alpha$ when $\Delta$ exceeds $d^2$. It is,
however, possible to establish nearly the same upper bound provided
by this theorem  with no assumption on $\Delta$. To do so we need
the following lemma, which is proved following the approach of Pyber~\cite{Py}.

\begin{lemma}
\label{l42}
Every (bipartite) graph with $2n$ vertices
and average degree at least $d > 4 \log n$  contains
a subgraph in which the average degree is at least $b=\frac{d}{4 \log n}$
and the maximum degree is at most $2b$.
\end{lemma}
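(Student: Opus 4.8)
The plan is to extract from $G$ a subgraph that is simultaneously \emph{dense} (average degree $\Omega(d/\log n)$) and \emph{almost regular} (maximum degree at most twice the average degree), following Pyber's dyadic bucketing argument~\cite{Py}. Write $b=\tfrac{d}{4\log n}$. Since $G$ has $2n$ vertices and average degree at least $d$ it has at least $dn$ edges; if $\Delta(G)\le 2b$ we are already done with $G$ itself, so assume $\Delta(G)>2b$. First I would isolate a degree-homogeneous dense piece: partition $V(G)$ into the classes $V_j=\{v:2^j\le\deg_G(v)<2^{j+1}\}$, of which at most $\log_2(2n)+1\le 4\log n$ are nonempty, charge every edge to the class of its endpoint of larger $G$-degree, and let $V_{j^\star}$ be a class receiving the most edges, hence at least $|E(G)|/(4\log n)\ge bn$ of them. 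Let $H$ be the spanning subgraph whose edge set is exactly this charged set. Every endpoint of an edge of $H$ has $G$-degree below $D:=2^{j^\star+1}$, so $\Delta(H)<D$; and since $H$ has at most $2n$ vertices and at least $bn$ edges, its average degree is at least $b$. If $D\le 2b$, then $H$ is already the desired subgraph, so from now on we may assume $D>2b$.

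The second, and main, step is to cut $\Delta(H)$ down to the scale $2b$ while losing only a constant factor in the average degree. The tool is König's bipartite edge-colouring theorem, which writes $E(H)$ as a union of $D$ matchings $M_1,\dots,M_D$; the plan is to keep the union of a carefully chosen family of about $\lceil b\rceil$ of them, whose maximum degree is then at most $2b$. The hard part will be that, if $H$ is far from regular, a generic choice of $\Theta(b)$ matchings retains only a $\approx b/D$ fraction of the edges while spreading them over nearly as many vertices as $H$ has, so the average degree collapses to $O(1)$. To prevent this I would first locate an almost-regular dense substructure inside $H$: because the class $V_{j^\star}$ that carries all of $H$'s edges has only $O(|E(G)|/D)$ vertices, the $bn$ edges of $H$ sit near few vertices, so discarding all but $|V_{j^\star}\cap V(H)|$ of the light vertices (keeping those incident to the most edges) and then peeling off the remaining low-degree vertices produces a balanced subgraph of $H$ of average degree $\Omega(b)$ and maximum degree still below $D$; König-colouring \emph{this} graph and taking $\lceil b\rceil$ of its (now nearly perfect) matchings yields a subgraph of maximum degree at most $2b$ in which essentially every remaining vertex keeps degree $\Theta(b)$, so the average degree stays $\Omega(b)$. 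Running the whole argument with the constant $4$ chosen generously then brings the average degree up to the stated value $b$ with maximum degree at most $2b$.

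The step I expect to be the real obstacle is the calibration in the second paragraph: how aggressively to discard light vertices and peel low-degree vertices, and how many matchings to retain, so that the surviving subgraph is neither too sparse nor spread over too many vertices. This bookkeeping is exactly what Pyber's argument~\cite{Py} handles, and one could alternatively invoke that lemma verbatim; everything else — the dyadic pigeonhole, König's theorem, and the routine degeneracy-type counts bounding how many edges each cleanup destroys — is standard.
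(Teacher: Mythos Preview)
Your plan diverges from the paper's, and the calibration step you flag as the obstacle is indeed a genuine gap. After the dyadic bucketing you have $H$ with at least $bn$ edges, maximum degree below $D=2^{j^\star+1}$, and every edge touching $V_{j^\star}$; but nothing forces the colour classes in a K\"onig decomposition of (any balanced piece of) $H$ to be nearly perfect matchings. For that you would need the piece to be nearly $D$-regular, whereas your discard-and-peel cleanup only promises average degree $\Omega(b)$, as you yourself wrote. With $m$ vertices per side, $\Omega(bm)$ edges and $D$ colour classes, even the $\lceil b\rceil$ largest classes carry only $\Omega(b^2 m/D)$ edges in total, so their union has average degree $\Omega(b^2/D)\ll b$ when $D\gg b$. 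Pushing the average degree all the way up to $\Omega(D)$ before colouring would itself require an almost-regular extraction, which is essentially the lemma you are trying to prove; so the argument is circular unless you supply a genuinely new idea here.

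The paper's proof (which \emph{is} Pyber's approach~\cite{Py}) uses a completely different mechanism, not dyadic bucketing. It first passes to a subgraph $G'$ of minimum degree $\ge d/2$ with sides $A,B$, $|A|\ge|B|$, and then repeatedly applies Hall's theorem: at step $i$ it finds a minimal nonempty $A_i\subset A_{i-1}$ with $|N_{G_{i-1}}(A_i)|\le|A_i|$, extracts a perfect matching $M_i$ between $A_i$ and $B_i=N(A_i)$, deletes it, and iterates. This produces $d/2$ matchings on \emph{nested} vertex sets $A_1\supset A_2\supset\cdots$ and $B_1\supset B_2\supset\cdots$, with $n\ge|M_1|\ge|M_2|\ge\cdots\ge 1$. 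A pigeonhole on these sizes then gives a window of $2b=d/(2\log n)$ consecutive matchings with $|M_{i+2b-1}|\ge|M_i|/2$; their union has maximum degree $\le 2b$ (it is a union of $2b$ matchings) and, crucially, by nestedness lives on only $2|M_i|$ vertices while containing at least $2b\cdot|M_i|/2=b|M_i|$ edges, hence average degree $\ge b$. The nestedness is exactly what arbitrary K\"onig colour classes lack, and is what keeps the edge-to-vertex ratio from collapsing.
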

The word bipartite appears in brackets here since any graph $G$
contains a spanning bipartite subgraph  in which the average degree is at
least half of that of $G$, hence the assertion of the lemma holds
for general graphs as well, up to a factor of $2$ in the bound for
$b$.
\begin{proof}
Let $G$ be a bipartite graph with  average degree $d$. As long as
it contains a vertex of degree smaller than $d/2$ omit it.
This process must terminate with a nonempty graph, as the total
number of edges deleted during the process is smaller than $2nd/2$,
that is, smaller than the number of edges of $G$. Thus $G$ contains
a bipartite subgraph $G'$ with minimum degree at least $d/2$. Let
$A$ and $B$ be its vertex classes, where $|A| \geq |B|$.
Let
$A_1 \subset A$ be a minimal nonempty subset of $A$ (with respect to
containment) so that $|N(A_1)| \leq |A_1|.$  There is such a set,
since $|N(A)| =|B| \leq |A|$ and it contains at least $d/2$
vertices as the number of neighbors of any nonempty set is at least
$d/2$. By the minimality $|N(A_1)|=|A_1|$ since otherwise we can
delete a vertex form $A_1$ and get a smaller set satisfying the
condition. It is also clear, by minimality, that $A_1$ satisfies
Hall's condition and thus there is a matching $M_1$ saturating
$A_1$ and $N(A_1)=B_1$. Let $G_1$ be the graph obtained from
$G'$ by removing all vertices besides those in $A_1 \cup B_1 $ and
by removing the perfect matching $M_1$ from it. Then the degree of
every vertex of $A_1$ in $G_1$ is at least $d/2-1$. Let $A_2
\subset A_1$ be a minimal nonempty subset of $A_1$ satisfying
$|N_{G_1}(A_2)| \leq |A_2|$. As before, it clear that $A_2$ exists
(and contains at least $d/2-1$ elements). It is also clear as
before that $A_2$ satisfies Hall's condition and hence there is a
matching $M_2$ saturating $A_2$ and $N(A_2)=B_2$. Proceeding in
this way we get a sequence of $d/2$ matchings $M_1, M_2, M_3,
\ldots , M_{d/2}$ in $G'$ (and hence in $G$), where $M_i$ matches
the vertices of $A_i \subset A$ with those of $B_i \subset B$, and
where $A_{d/2} \subset A_{d/2-1} \subset \cdots \subset A_1$ and
$B_{d/2} \subset B_{d/2-1} \subset \cdots \subset B_1$.
Clearly  $|A_1|=|B_1 | \leq n$ and hence
$n \geq |M_1| \geq |M_2| \geq \cdots \geq |M_{d/2}| \geq 1$.
Thus there is some $i$ so that
$|M_{i+(d/2\log n)-1}| \geq |M_i|/2$. Fix such $i$ and let $H$ be
the union of the matchings $M_i,M_{i+1}, \ldots ,M_{i+(d/2\log
n)-1}$. Define $|M_i|=m$, $2b=\frac{d}{2\log n}$. Then the maximum degree
of $H$ is clearly at most $2b$, as it is the union of $2b$
matchings. The number of vertices of $H$ is $2m$ and its number of
edges is at least $(2b)|M_{i+(d/2\log n)-1}| \geq 2b(m/2)=bm$. Thus
the average degree  of $H$ is at least $b$, completing the proof.
\end{proof}
\begin{theorem}
\label{t43}
Let $G$ be a bipartite graph with $n$ vertices on each side,
and average degree
$d$. If $G$ is an $\alpha$-multitasker,
then $\alpha < O((\frac{\log n}{d})^{1/3}).$
\end{theorem}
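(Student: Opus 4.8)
The plan is to reduce to \Cref{thm:avgdeg_lowerbound} by passing to a subgraph with bounded maximum degree. First I would dispose of the trivial range: if $d \le 4\log n$, then $(\log n/d)^{1/3} = \Omega(1)$, so the asserted bound holds vacuously because always $\alpha \le 1$. So assume from now on that $d > 4\log n$, which is exactly the hypothesis needed to invoke \Cref{l42}.

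By \Cref{l42}, $G$ contains a subgraph $G'$ of average degree at least $b := d/(4\log n)$ and maximum degree at most $2b$. Reading off the construction in the proof of \Cref{l42} for the bipartite case, $G'$ is the union of nested matchings between a set $A_i$ and a set $B_i$ with $|A_i| = |B_i| =: m \le n$; in particular $G'$ is itself a balanced bipartite graph with $m$ vertices on each side. The point is that $G'$ inherits the multitasking property of $G$: any matching $M$ in $G'$ is a matching in $G$ of size $k := |M| \le m \le n$, so since $G$ is an $\alpha$-multitasker there is an induced matching $M' \subseteq M$ of $G$ with $|M'| \ge \alpha k$; and since $G'$ is a subgraph of $G$, the matching $M'$ is a fortiori induced in $G'$ as well. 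Hence $G'$ is an $\alpha$-multitasker.

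Now I would simply apply \Cref{thm:avgdeg_lowerbound} to $G'$. Writing $d'$ for its average degree, we have $d' \ge b$ while its maximum degree is at most $2b \le 2d'$; thus $G'$ lies in the regime $\Delta = O(d')$, where \Cref{thm:avgdeg_lowerbound} gives $\alpha < O((d')^{-1/3})$. Since $d' \ge b = d/(4\log n)$, this is exactly $\alpha < O((\log n/d)^{1/3})$, as desired.

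The essentially only thing requiring care is the verification that the subgraph produced by \Cref{l42} can be taken to be a \emph{balanced} bipartite graph, and that the $\alpha$-multitasker property descends to subgraphs; both are immediate from the definitions and from inspecting the proof of \Cref{l42}. The quantitative heart of the argument has already been carried out in \Cref{thm:avgdeg_lowerbound}, so I expect no further obstacle beyond these bookkeeping points.
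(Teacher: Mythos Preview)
Your proposal is correct and follows essentially the same approach as the paper, which simply cites \Cref{l42} and \Cref{thm:avgdeg_lowerbound} in one line without spelling out the details. The bookkeeping you add---handling the trivial range $d\le 4\log n$, observing that the subgraph from \Cref{l42} is balanced bipartite, and checking that the $\alpha$-multitasker property descends to subgraphs---fills in precisely the details the paper leaves implicit.
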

\begin{proof}
By Lemma \ref{l42} $G$ contains a subgraph with average degree
$b \geq d/(4 \log n)$ and maximum degree at most $2b$. The result
thus follows from Theorem \ref{thm:avgdeg_lowerbound}.
\end{proof}

%Putting everything together, we obtain Theorem~\ref{thm:upper bound avgdeg}:
%\begin{proof}[Proof of Theorem~\ref{thm:upper bound avgdeg}.]
%The proof is identical to that of Theorem~\ref{thm:avgdeg_lowerbound}, except that
%in the current setting we must satisfy $q\leq k$. We therefore put $q=\min\{k,\frac{\alpha n d}{2\Delta}\}$.
%Carrying out the calculations, we get the upper bound $\alpha = O(\max\{\frac{\Delta^{1/3}}{d^{1/3}},\sqrt\frac{n}{kd}\})$. Using Lemma~\ref{l42} as above, we get $\alpha = O(\max\{(\frac{\log n}{d})^{1/3},\sqrt\frac{n}{kd}\})$. Bound $O(\frac{\log n}{d})^{1/3})$ is tighter than $O(\sqrt\frac{n}{kd}\})$ when $k\geq\frac{n}{(d\log^2n)^{1/3}}$, which yields Theorem~\ref{thm:upper bound avgdeg}.
%\end{proof}

A similar reasoning  gives  the following.
\begin{theorem}
\label{t44}
Let $G$ be a bipartite graph with $n$ vertices on each side,
and average degree $d=\Omega(n)$. If $G$ is an $\alpha$-multitasker,
then $\alpha < O((\frac{1}{n})^{1/2}).$
\end{theorem}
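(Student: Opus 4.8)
The plan is to mimic the argument of Theorem~\ref{thm:avgdeg_lowerbound}, but to replace the crude ``every iteration of the greedy matching-builder destroys at most $2\Delta-1$ edges'' step --- which is wasteful in the dense regime --- with a much stronger counting input coming from the Szemer\'edi regularity lemma. Concretely, suppose $G$ is an $\alpha$-multitasker with $d=\Omega(n)$ and suppose for contradiction that $\alpha \geq C/\sqrt n$ for a large constant $C$. I would set $q := \alpha n$ (up to rounding), consider the $(\alpha,q)$-matching graph $H(G,\alpha,q)$, and bound the average degree of side $L$ by $D|R|/|L|$ exactly as before. The two easy bounds carry over verbatim: $|R| \le \binom{n}{\alpha q}^2 \le (e/\alpha q \cdot n)^{2\alpha q}$, and for a fixed induced matching $r \in R$ its degree $D$ in $H$ is at most the number of ways to extend it to a $q$-matching, which is at most $\binom{nd}{(1-\alpha)q} \le \left(\tfrac{end}{(1-\alpha)q}\right)^{(1-\alpha)q}$. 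The whole game is therefore to prove a sufficiently strong \emph{lower} bound on $|L|$, the number of $q$-matchings in $G$.

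The regularity lemma is the tool for this. Applying it to $G$ (say with parameter $\varepsilon$ a small constant), we obtain a partition of $A\cup B$ into a bounded number of parts so that almost all pairs are $\varepsilon$-regular; since $G$ has $\Omega(n^2)$ edges, a constant fraction of the ``mass'' of edges lies in regular pairs of density bounded below by a constant $\delta = \delta(\varepsilon)$. In such a dense regular pair one can greedily build a matching of linear size $\Theta(n)$, and moreover \emph{count}: the number of matchings of size $t$ inside a single $\varepsilon$-regular pair of density $\ge \delta$ between parts of size $m$ is at least $(c m)^t / t!$ for $t \le c'm$, because at each of the $t$ greedy steps the number of available edges is still $\ge c m^2$ (regularity guarantees the edge density stays bounded below after deleting $o(m)$ vertices). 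Summing/multiplying such estimates over a family of disjoint regular pairs (or just using one sufficiently large pair, since $d=\Omega(n)$ forces a pair of density $\Omega(1)$ between parts of size $\Omega(n)$) gives
\[
  |L| \;\ge\; \frac{(cn)^q}{q!} \;\ge\; \frac{1}{\sqrt{2\pi q}}\left(\frac{ecn}{q}\right)^q,
\]
for an absolute constant $c>0$ and all $q \le c' n$. Note this is the crucial improvement over Lemma~\ref{lmm:avgdeg_matching_lowerbound}: there the base of the exponential was $(1-\alpha)nd = \Theta(n^2)$, but that strength was spent fighting the $\binom{nd}{(1-\alpha)q}$ term which is also of order $(nd)^{(1-\alpha)q}$; here we instead keep the base at $\Theta(n)$ but we will see the $nd$ in the numerator is tamed differently.

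Plugging these three bounds into $D|R|/|L|$, the dominant factor works out (after the usual cancellations, using $q=\alpha n$ so that $nd/q = d/\alpha = \Omega(n/\alpha)$ and $n/(\alpha q) = 1/\alpha^2$) to roughly $\left( \frac{C' }{\alpha^{2}} \cdot \alpha \right)^{\alpha q}$-type expression --- the point being that the $d$-dependence enters as a factor polynomial in $d/n = \Theta(1)$ and the surviving obstruction is a factor of order $1/\alpha^{2\alpha q} \cdot (\text{const})^{\alpha q}$, together with a $\sqrt{2\pi q}$ correction that is killed once $\alpha q = \alpha^2 n \gg \log n$, i.e.\ once $\alpha \gg \sqrt{(\log n)/n}$. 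Choosing $C$ large enough forces this whole quantity below $1$; then side $L$ of $H$ has an isolated vertex, i.e.\ a $q$-matching with no induced sub-matching of size $\alpha q$, contradicting the $\alpha$-multitasker hypothesis. Hence $\alpha = O(1/\sqrt n)$.

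The main obstacle is the middle step: extracting from the regularity partition a clean, quantitatively usable lower bound of the form $(cn)^q/q!$ on the number of $q$-matchings, with $c$ an absolute constant independent of $n$ (it will depend on $\varepsilon$, hence on $\delta$, but that is fine). One has to be a little careful that the greedy count survives removing up to $2q = 2\alpha n$ vertices from a regular pair --- this is exactly where $\varepsilon$-regularity (as opposed to mere density) is needed, and it forces $\alpha$ to be below some small constant, which is harmless since we are proving $\alpha = O(1/\sqrt n)$ anyway. Everything else (the $|R|$ and $D$ estimates, Stirling, the application of Lemma~\ref{lem:induced_matching}) is routine and parallels Theorem~\ref{thm:avgdeg_lowerbound} line for line.
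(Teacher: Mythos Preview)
Your proposal has a genuine gap in the arithmetic, and the approach as written does not go through.

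The problem is the mismatch between your lower bound on $|L|$ and your upper bound on $D$. You bound $D \le \binom{nd}{(1-\alpha)q}$, whose base is $nd=\Theta(n^2)$, while your regularity-based count gives $|L|\ge (cn)^q/q!$, whose base is only $\Theta(n)$. With $q=\alpha n$, one has
\[
\frac{D}{|L|}\;\gtrsim\;\sqrt{2\pi q}\cdot\left(\frac{end}{(1-\alpha)q}\right)^{(1-\alpha)q}\Big/\left(\frac{ecn}{q}\right)^{q}
\;=\;\sqrt{2\pi q}\cdot\left(\frac{d}{c(1-\alpha)}\right)^{(1-\alpha)q}\left(\frac{q}{ecn}\right)^{\alpha q},
\]
and since $d=\Theta(n)$ and $q/n=\alpha$, the $\alpha n$-th root of the main factor is $\Theta\!\big((d)^{1-\alpha}\alpha^{\alpha}\big)=\Theta(n)$ for small $\alpha$. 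Hence $D/|L|$ is of order $n^{\Theta(\alpha n)}$, and multiplying by $|R|\le (e/\alpha^2)^{2\alpha^2 n}$ cannot rescue it. Your sentence ``the $d$-dependence enters as a factor polynomial in $d/n=\Theta(1)$'' is simply false: the $d$ sits inside $D$ with exponent $(1-\alpha)q$, not $\alpha q$, and it never cancels against anything in $|L|$. In Theorem~\ref{thm:avgdeg_lowerbound} this cancellation \emph{did} occur precisely because the greedy lower bound on $|L|$ also had base $(1-\alpha)nd$; by ``improving'' the base to $cn$ you have thrown away exactly the factor that was doing the work.

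The paper's proof is both shorter and structurally different: it invokes a result of Pyber, R\"odl and Szemer\'edi that any bipartite graph with average degree $\Omega(n)$ contains a $d'$-\emph{regular} bipartite subgraph with $d'=\Omega(n)$, and then applies Theorem~\ref{thm:upper_bound_n} to that subgraph. The point is that in the regular case both sides of the count are controlled by permanent inequalities (Schrijver for the lower bound on $|L|$, Bregman--Minc for the upper bound on $D$), and these are what produce the tight $O(1/\sqrt{d'})=O(1/\sqrt n)$ exponent. If you want to salvage your direct-counting route, you would need to replace the crude $\binom{nd}{(1-\alpha)q}$ bound on $D$ by something of Bregman--Minc strength, and the lower bound on $|L|$ by something of Schrijver strength---at which point you are essentially reproving the regular case inside a regular pair, and it is cleaner to quote the black-box reduction as the paper does.
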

\begin{proof}
As proved in~\cite{PRS} using the regularity lemma of Szemer\'edi,
$G$ contains a $d$-regular bipartite graph with $d=\Omega(n)$.
The result thus follows from our upper bound for regular graphs as stated in \Cref{thm:d reg upper bound}.
\end{proof}

%%%%%%%%%%%%%%%%%%%%%%%%%%%%%%%%%%%%%%%%%%%%%%%%%%%%%%%%%%%%%%%
\section{Constructions of Good Multitaskers}
%%%%%%%%%%%%%%%%%%%%%%%%%%%%%%%%%%%%%%%%%%%%%%%%%%%%%%%%%%%%%%%

It is easy to design arbitrarily large $1$-regular $1$-multitaskers by simply taking disjoint edges,
and $2$-regular $0.5$-multitaskers by taking a cycle of length $n \equiv 0 \pmod 4$).
More generally, one can obtain a $d$ -regular $1/d$-multitaskers by taking $n/d$ disjoint copies of the bipartite clique $K_{d,d}$.
In fact, it is easy to see that any $d$-regular graph is a $1/2d$-multitasker using the greedy algorithm
that given a matching takes in each step an edge in the matching and removes at most $2d-1$ edges that are in conflict with it,
and repeats as long as possible.
The challenge is to design multitaskers achieving $\alpha>0$ that is an absolute constant (independent of $d,k$, and $n$), where both $k$ and $d$ are as large as possible.

%%%%%%%%%%%%%%%%%%%%%%%%%%%%%%%%%%%%%%%%%%%%%%%%%%%%%%%%%%%%%%%
\subsection{Several simple constructions}
%%%%%%%%%%%%%%%%%%%%%%%%%%%%%%%%%%%%%%%%%%%%%%%%%%%%%%%%%%%%%%%
How can we lower bound the multitasking capability of a network?
It turns out that a simple idea is to contract edges in a given matching and
look for large independent sets in the resulting contracted graph.
We first exemplify this idea when $G$ is a forest.

\begin{lemma}\label{lem:forest}
Let $G$ be a forest. Then $G$ is a $1/2$-multitasker. In other words, if $M$ is a matching in $G$, then $M$ contains an induced matching $M'$ of size at least $|M|/2$.
\end{lemma}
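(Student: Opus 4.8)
The plan is to use the contraction-plus-independent-set strategy suggested just before the lemma. Given a matching $M$ in the forest $G$, I would pass to the subgraph $G[M]$ (the graph induced on the endpoints of $M$) and contract every edge of $M$, obtaining the graph $\widetilde G[M]$ on $|M|$ vertices as in \Cref{lem:contraction}. An independent set in $\widetilde G[M]$ corresponds exactly to a set of edges of $M$ no two of which are joined by an edge of $G[M]$, i.e., to an induced submatching $M' \subseteq M$; so it suffices to show $\widetilde G[M]$ contains an independent set of size at least $|M|/2$. By \Cref{lem:turan_independent_set} it is enough to show the average degree of $\widetilde G[M]$ is at most $1$, and by \Cref{lem:contraction} the average degree of $\widetilde G[M]$ is at most $2 d_{avg}-2$ where $d_{avg}$ is the average degree of $G[M]$. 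Hence the whole lemma reduces to showing $d_{avg} \le 3/2$, equivalently that $G[M]$ has at most $\tfrac32 |M|$ edges; actually, a cleaner target is to show $G[M]$ itself, being a forest on $2|M|$ vertices, has at most $2|M|-1$ edges, which already gives $d_{avg} < 1 < 3/2$ — wait, that bound on edges is too weak since $2|M|-1$ edges on $|M|$ contracted vertices could give average degree close to $2$. Let me reorganize: the right route is to bound the number of edges of $\widetilde G[M]$ directly.

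So the key step is: \emph{$\widetilde G[M]$ has at most $|M|-1$ edges}. This holds because contracting edges of a forest yields a forest: $G[M]$ is a forest (subgraph of a forest), and contracting an edge of a forest keeps it acyclic, so $\widetilde G[M]$ is a forest on $|M|$ vertices, hence has at most $|M|-1$ edges and average degree less than $2$. Unfortunately average degree $<2$ only yields an independent set of size $> |M|/3$ via Turán, not $|M|/2$. To get the sharper $1/2$ bound I would instead use the fact that a forest is bipartite: $\widetilde G[M]$ is a forest, hence bipartite, hence has an independent set of size at least half its vertices, namely $\ge |M|/2$. That gives the claim. (Equivalently: a forest on $|M|$ vertices has an independent set of size at least $\lceil |M|/2\rceil$ because each connected component, being a tree, is bipartite and contributes at least half of its vertices to the larger side.)

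Putting it together: (1) restrict to $G[M]$ and contract $M$ to get $\widetilde G[M]$ on $|M|$ vertices; (2) observe that since $G$ is a forest, $G[M]$ is a forest and contractions preserve acyclicity, so $\widetilde G[M]$ is a forest and in particular bipartite; (3) take the larger side of a bipartition of $\widetilde G[M]$, of size $\ge |M|/2$; it is an independent set; (4) translate this independent set back into an induced matching $M' \subseteq M$ with $|M'| \ge |M|/2$. The only point needing a little care is step (2)–(4): verifying that contracting matching edges of an acyclic graph cannot create a cycle (a cycle in $\widetilde G[M]$ would pull back to a closed walk, hence a cycle, in $G[M]$), and verifying that non-adjacency in $\widetilde G[M]$ of two contracted vertices $v_{e_1}, v_{e_2}$ is precisely the statement that no edge of $G$ joins $e_1$ to $e_2$, which is the definition of $M'$ being induced. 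I do not expect any real obstacle here; the main thing is to state the contraction/acyclicity bookkeeping cleanly rather than to overcome a difficulty.
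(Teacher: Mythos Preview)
Your proposal is correct and follows essentially the same route as the paper: contract the matching edges, observe that the contracted graph is still a forest (hence bipartite), take an independent set of size at least $|M|/2$, and pull it back to an induced matching. The paper's proof is terser (it simply asserts that a forest on $|M|$ vertices contains an independent set of size $|M|/2$), while you spell out the bipartiteness reason and the acyclicity-under-contraction check, but the argument is the same.
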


\begin{proof}
Consider an arbitrary matching $e_1,...,e_{|M|}$ in $F$. Contract every edge $e_i \in M$ to a single vertex $v_i$. Since $G$ is a forest, the resulting graph induced on the contracted edges $G[v_1,...,v_{|M|}]$ is a forest, hence it contains an independent set $I$ of size $|M|/2$. The edges corresponding to the vertices in $I$ form an induced matching contained in $M$ of size at least $|M|/2$.
\end{proof}

\begin{remark}\label{rem:alpha_half}
Note that $\alpha \leq 1/2$ holds for any graph which contains a path of length $3$,
as it contains a matching of size $2$ whose largest induced matching has size $1$.
\end{remark}

\begin{remark}
A similar argument also extends to the case where one is concerned with collections of disjoint induced $r$-paths instead of matchings. One simply contracts paths instead of edges of the matcing in the proof of Lemma~\ref{lem:forest}. It is also not hard to generalize the result above to the weighted case, where the edges in the matching have nonnegative weights. We omit the details.
\end{remark}

The argument above can be generalized to minor-closed graph families. For example, we have the following result:
\begin{lemma}\label{lem:planar}
Every planar bipartite graph is a $1/4$-multitasker.
\end{lemma}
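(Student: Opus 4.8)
The plan is to reuse the contraction argument from Lemma~\ref{lem:forest} verbatim, replacing the structural input ``a forest has an independent set on at least half of its vertices'' by the corresponding statement for planar graphs. First I would fix an arbitrary matching $M=\{e_1,\dots,e_{|M|}\}$ in $G$, pass to $G[M]$ (the subgraph induced by the endpoints of $M$), and contract each $e_i$ to a single vertex $v_i$, obtaining a graph $\widetilde G[M]$ on exactly $|M|$ vertices. The reason for first restricting to $G[M]$ is that $\widetilde G[M]$ is then obtained from the planar graph $G$ by deleting vertices and edges and afterwards contracting edges, so it is a \emph{minor} of $G$ and hence planar. One subtlety to flag explicitly: $\widetilde G[M]$ is in general \emph{not} bipartite, since contracting a matching edge identifies an $A$-vertex with a $B$-vertex; so one genuinely has to argue inside the class of planar graphs (which is minor-closed) rather than planar bipartite graphs (which is not).

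Next I would invoke the Four Color Theorem: every planar graph on $m$ vertices is $4$-colorable and therefore has an independent set of size at least $m/4$. Applied to $\widetilde G[M]$, this yields an independent set $I\subseteq\{v_1,\dots,v_{|M|}\}$ with $|I|\ge|M|/4$. Finally, exactly as in Lemma~\ref{lem:forest}, the set $M'=\{e_i : v_i\in I\}$ is a matching contained in $M$, and the independence of $I$ in $\widetilde G[M]$ says precisely that for distinct $v_i,v_j\in I$ there is no edge of $G$ joining an endpoint of $e_i$ to an endpoint of $e_j$; hence $M'$ is an induced matching and $|M'|\ge|M|/4$. Since $M$ was an arbitrary matching, $G$ is a $1/4$-multitasker.

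There is essentially no obstacle here beyond citing the Four Color Theorem, and the only point requiring a little care is the one noted above: because $\widetilde G[M]$ need not be bipartite, the bound $|E|\le 2m-4$ for planar bipartite graphs is unavailable, and applying Tur\'an's bound (Lemma~\ref{lem:turan_independent_set}) with only $|E|\le 3m-6$ would give merely a $1/6$-multitasker. If one preferred to avoid the Four Color Theorem, the elementary $5$-coloring of planar graphs still yields a $1/5$-multitasker and $6$-degeneracy a $1/6$-multitasker, but the stated constant $1/4$ genuinely uses $4$-colorability. The same scheme, with ``planar'' replaced by an arbitrary proper minor-closed family $\mathcal F$ and $1/4$ replaced by the best independence ratio available in $\mathcal F$, proves the analogous statement for minor-closed families mentioned in the text.
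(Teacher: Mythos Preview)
Your proposal is correct and follows essentially the same approach as the paper: contract the matching edges, observe that the resulting graph is planar (as a minor of $G$), and apply the Four Color Theorem to extract an independent set of size $|M|/4$ corresponding to an induced matching. Your write-up is in fact more careful than the paper's two-line sketch, explicitly flagging that $\widetilde G[M]$ need not be bipartite and that minor-closure (rather than the subclass of planar bipartite graphs) is what makes the argument go through.
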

\begin{proof}
The proof is similar to Lemma~\ref{lem:forest}. For a matching $e_1,...,e_{|M|}$, the graph obtained by contracting every matching in $M$ is planar. By the four-color Theorem, it has an independent set of size at least $1/4$, concluding the proof.
\end{proof}

We note that the bound $\alpha \geq 1/4$ is tight for bipartite planar graphs. To see this consider the hypercube $H$ over $8$ vertices. It can be seen that $H$ contains a matching of size $4$ that does not contain any induced matching of size greater than $1$, as is demonstrated in Figure~\ref{fig:hypercube}.

Lemmas~\ref{lem:forest} and~\ref{lem:planar} deal with the setting $k=n$, i.e.~they work for matchings of any size, while posing a strict constant bound on the average degree ($d<2$ in the case of forest, and $d<6$ in the planar case). Next we see how to obtain different trade-offs between $k$ and $d$, while keeping $\alpha$ constant. We start with the optimal $\alpha=1/2$,
and prove that
for $k < \Omega(\log_{d}n)$ there exists a $(k, 1/2)$-multitasker.

\begin{theorem}\label{thm:large girth}
Fix $d \in \N$, and let $n \in \N$ be sufficiently large.
There exists a graph that is $(k, 1/2)$-multitasker for all $k \leq s$,
with $s = \Omega(\log_{d}(n))$.
\end{theorem}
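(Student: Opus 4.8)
The plan is to exhibit a graph whose girth is large compared to $k$, and then argue via the contraction trick of Lemma~\ref{lem:forest}. The key idea is that if $G$ has girth greater than $2k$, then for every matching $M$ with $|M| \leq k$, the graph $G[M]$ obtained by restricting to the $2|M|$ matched vertices is a forest: any cycle in $G[M]$ would have length at most $2|M| \leq 2k$, contradicting the girth bound. Consequently, after contracting the $|M|$ edges of $M$, the resulting graph $\widetilde{G}[M]$ on $|M|$ vertices is still a forest (contraction cannot create a cycle in a forest), so by Lemma~\ref{lem:turan_independent_set} (or simply because forests are $2$-colorable) it has an independent set of size at least $|M|/2$, which lifts back to an induced matching $M' \subseteq M$ with $|M'| \geq |M|/2$. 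Thus $G$ is a $(k, 1/2)$-multitasker for every $k$ below half the girth.

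So it suffices to construct, for each fixed $d$ and each sufficiently large $n$, a graph on roughly $2n$ vertices with girth $\Omega(\log_d n)$ and (say) degrees bounded by something depending only on $d$ — in fact we just need the graph to be nonempty with the right number of vertices; the degree does not even enter the multitasking argument above. Standard constructions supply this: for instance, one can take a $d$-regular bipartite graph of girth $g = \Omega(\log_d n)$, whose existence follows either from an explicit algebraic construction (e.g.\ Ramanujan graphs, or the incidence graphs of generalized polygons for small $g$) or from a probabilistic deletion argument — a random $d$-regular bipartite graph on $n+n$ vertices has only $n^{O(1)}$ short cycles of length $\leq c\log_d n$ in expectation for a small constant $c$, and deleting one edge per short cycle destroys them all while leaving a subgraph on $\Theta(n)$ vertices with girth $> c \log_d n$. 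Relabeling, we obtain a bipartite graph $G$ with $n$ vertices on each side and girth $g = \Omega(\log_d n)$; set $s = \lfloor g/2 \rfloor = \Omega(\log_d n)$.

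Combining the two parts: for every $k \leq s$, every matching $M$ in $G$ of size $k$ has $G[M]$ acyclic, hence contains an induced matching of size $\geq k/2$, so $G$ is a $(k,1/2)$-multitasker for all $k \leq s = \Omega(\log_d n)$, as claimed. The only genuinely substantive step is securing the girth bound $\Omega(\log_d n)$ with the correct vertex count; everything else is the forest/contraction observation already used in Lemma~\ref{lem:forest}. The Moore bound shows $\Omega(\log_d n)$ girth is the best one can hope for at degree $d$, so this construction is essentially optimal in the $k$-versus-$d$ trade-off for $\alpha = 1/2$. (One minor point to be careful about: after the deletion argument the graph need not be regular, but since the multitasking argument only used acyclicity of $G[M]$, regularity is irrelevant; if a regular construction is preferred, an explicit algebraic family can be cited instead.)
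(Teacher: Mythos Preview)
Your proposal is correct and follows essentially the same route as the paper: take a $d$-regular bipartite graph of girth $g = \Omega(\log_d n)$, observe that for any matching $M$ with $2|M| < g$ the induced subgraph $G[V(M)]$ is a forest, and invoke Lemma~\ref{lem:forest}. Your vertex-count argument for acyclicity (any cycle in $G[V(M)]$ would have length at most $2|M|$) is in fact crisper than the paper's terse one-line phrasing.
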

\begin{proof}
It is well known there are (explicit) $n$-vertex $d$-regular bipartite graphs of girth $g=\Omega(\log_{d} n)$. Since any edge set of size $g-1$ is a forest, the statement follows from Lemma~\ref{lem:forest}.
\end{proof}

Next, we show that for small constants $\alpha$, we may achieve a significant increase in $k$
by showing existence of a $(O(n/d^{1+4 \alpha}), \alpha)$-multitaskers for any $0<\alpha<1/5$.
\begin{theorem}\label{thm:G n d over n}
Fix $d \in \N$, let $n \in \N$ be sufficiently large, and suppose $\alpha < 1/5$.
There exists a $(k,\alpha)$-multitasker with $n$ vertices on each size, average degree $d$,
for all $k\leq \Omega(n/d^{1+4\alpha})$.
\end{theorem}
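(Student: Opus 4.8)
The plan is to build the graph as a disjoint union of small dense "blocks," each of which is forced to have good multitasking behavior simply because it is too small to contain a large matching, combined with the observation that a matching of size $k$ in the whole graph splits across the blocks and must concentrate many of its edges in a single block if $k$ is not too large relative to the number of blocks. Concretely, I would take $G$ to be $t := n/m$ vertex-disjoint copies of a bipartite graph $H$ with $m$ vertices on each side and average degree $d$, where $m$ is chosen as a function of $d$ and $\alpha$ (roughly $m = \Theta(d^{1+4\alpha})$, so that $k \le \Omega(n/d^{1+4\alpha}) = \Omega(t)$). The point of choosing $H$ this size is the following: I want every matching $M'$ that lives entirely inside one copy of $H$ and has size $|M'|$ up to some threshold to already contain an induced matching of size $\alpha|M'|$, \emph{and} I want the threshold to be at least a constant fraction of $m$. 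A clean way to get the second property for free is to ensure $|M'| \le m$ always, and argue about induced matchings inside $H$ directly.

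The combinatorial core is a counting argument in the spirit of Lemma~\ref{lem:induced_matching} applied inside a single block $H$: I need that for every $j \le $ (threshold), every $j$-matching in $H$ contains an induced matching of size $\alpha j$. I would instantiate this by a first-moment / probabilistic deletion argument — take a random subset of the $j$-matching of density $\alpha$ and bound the expected number of "conflicts" (pairs of chosen edges joined by a third edge), using that $H$ has bounded average degree $d$ so the number of such conflicting triples incident to a given edge is $O(d^2)$; this forces $\alpha^2 \cdot (\text{conflicts}) < \alpha j/2$, i.e. a bound of the form $\alpha \lesssim 1/(jd^2/m)$-ish, and substituting $j \le m$ and $m = \Theta(d^{1+4\alpha})$ makes this compatible with any $\alpha < 1/5$. (The exponent $1+4\alpha$ and the constant $1/5$ will drop out of optimizing this inequality; I would not grind the arithmetic here, but the $4\alpha$ in the exponent strongly suggests a $d^{4\alpha}$ slack term of exactly this kind.) One can take $H$ to be a suitable $d$-regular bipartite graph, or even a random one, and I expect either a direct union-bound over all $j$-matchings or an alteration argument to close it.

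Having established that each block $H$ is a $(j,\alpha)$-multitasker for all $j$ up to its full size $m$, the global argument is the pigeonhole step: given any matching $M$ in $G$ of size $k \le \Omega(t)$, write $M = \bigsqcup_{i=1}^t M_i$ where $M_i$ is the part inside the $i$-th block. Since $k \le c\,t$ for a small constant $c$, by averaging... no, rather: in each block $M_i$ has size at most $m$, so by the block property each $M_i$ contains an induced matching $M_i'$ with $|M_i'| \ge \alpha|M_i|$; and since the blocks are vertex-disjoint, $M' := \bigsqcup_i M_i'$ is an induced matching in $G$ with $|M'| \ge \alpha \sum_i |M_i| = \alpha k$. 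Thus the constraint $k \le \Omega(n/d^{1+4\alpha})$ is needed \emph{only} to guarantee $|M_i| \le m$ for every $i$ — actually even that is automatic, so the real role of the bound on $k$ is more subtle and tied to where the $d^{4\alpha}$ slack is consumed in the block-level counting (i.e.\ the block size $m$ cannot be taken larger than $\sim d^{1+4\alpha}$ while keeping the per-block first-moment bound below $1$, and the number of blocks is $n/m$, which is exactly the claimed bound on $k$ up to constants). The main obstacle, and the step I would spend the most care on, is precisely the block-level counting: getting the first-moment (or alteration) bound to yield the constant $1/5$ rather than some smaller constant, and making sure the threshold up to which $H$ is a good multitasker really does reach $m$ (so that an adversarial matching concentrated in one block is still handled). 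If the clean first-moment bound only gives $\alpha < 1/6$ or similar, I would switch to a more careful deletion argument that removes one endpoint-edge per conflicting triple, or restrict $H$ to be high-girth-ish locally so that the number of conflicting triples per edge is genuinely $\Theta(d^2)$ with a good constant.
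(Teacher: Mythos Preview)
Your disjoint-blocks construction has a genuine gap at the block level. The global step (taking an induced matching $M_i' \subseteq M_i$ in each block and unioning) is fine, but it requires each block $H$ to be a $(j,\alpha)$-multitasker for \emph{every} $j \le m$ --- i.e.\ an $\alpha$-multitasker in its own right --- with constant $\alpha$. Since $H$ has average degree $d$ on $m$ vertices per side, this is exactly what the upper bounds of \Cref{sec:upper bounds} forbid: if $H$ is $d$-regular, \Cref{thm:upper_bound_n} produces a perfect matching in $H$ with no induced sub-matching larger than $O(m/\sqrt d)$, and since $m = \Theta(d^{1+4\alpha})$ is a constant while $k_{\max} = \Omega(n/m) \to \infty$, this bad matching of size $m$ lies well within the range you must handle. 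Your first-moment sketch exhibits the same obstruction arithmetically: in a block of average degree $d$, each matching edge is joined by an edge of $H$ to $\Theta(d)$ other matching edges, independently of $m$, so alteration yields only $\alpha = O(1/d)$. There is no $m$-dependence to exploit, and the $d^{4\alpha}$ slack you are hoping for never materializes from this calculation.

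The paper's route is different in kind. It takes (the bipartite double cover of) a single graph with a \emph{local sparsity} property: every induced subgraph on at most $s = O(n/d^{1+4\alpha})$ vertices has average degree at most $\tfrac12(\tfrac{1}{\alpha}-1)$, a constant depending only on $\alpha$ and not on $d$. Such graphs of global average degree $d$ exist by a result cited from~\cite{feige2016generalized}, and this is precisely where both the exponent $1+4\alpha$ and the threshold $\alpha < 1/5$ originate. Given that, for any matching $M$ with $2|M| \le s$ the subgraph $G[M]$ already has average degree $O(1/\alpha)$, so contracting $M$ (\Cref{lem:contraction}) and applying Tur\'an (\Cref{lem:turan_independent_set}) yields an induced matching of size $\Omega(\alpha |M|)$. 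The essential difference from your plan is that the degree bound holds for \emph{every} small vertex set, including one chosen adversarially as the support of a matching; in your block construction, a matching concentrated inside a single block sees the full average degree $d$, and no amount of per-block counting can overcome that.
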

\begin{proof}
It is known (see, e.g.,~\cite{feige2016generalized})
that for sufficiently large $n$, there exist an $n$-vertex graph $G(V,E)$ with average degree $d$
such that every subgraph of $G$ of size $s \leq O(n/d^{1+4 \alpha})$
has average degree at most $\frac{1}{2}(\frac{1}{\alpha}-1)$.
Define a bipartite graph $H=(A \cup B,E_H)$ such that $A$ and $B$ are two copies of $V$,
and for $a \in A$ and $b \in B$ we have $(a,b) \in E_H$ if and only if $(a,b) \in E$.
We get that the average degree of $H$ is $d$, and for any two $A' \subseteq A$ and $B' \subseteq B$
such that $|A'| = |B'| \leq s/2$, the average degree of $H[A' \cup B']$ is at most $\frac{1}{\alpha}-1$.
Consider a matching $M$ of size $s/2$ in $H$. By Lemma~\ref{lem:contraction}, if we contract all edges of the matching, we get a graph of average degree at most $\frac{2}{\alpha}-1$. By Lemma~\ref{lem:turan_independent_set}, such a graph contains an independent set of size at least $\frac{1}{2}\alpha|M|$, which corresponds to a large induced matching contain in $M$. This concludes the proof of the theorem.
\end{proof}

%%%%%%%%%%%%%%%%%%%%%%%%%%%%%%%%%%%%%%%%%%%%%%%%%%%%%%%%%%%%%%%
\subsection{Regular multitaskers with large  $\alpha(n)$}
%%%%%%%%%%%%%%%%%%%%%%%%%%%%%%%%%%%%%%%%%%%%%%%%%%%%%%%%%%%%%%%

The following theorem shows that if we consider only task sets,
i.e.~matchings, of size~\emph{exactly} $n$,
then there are $d$-regular graphs with
$\alpha(n)=\Omega(1/\sqrt{d\log d})$. This nearly matches our upper bound
$O(1/\sqrt d)$ stated in \Cref{thm:d reg upper bound}.

\begin{theorem}\label{thm:d reg alpha = 1/sqrt(d log d)}
There is an absolute constant $c$ such that, for every large enough $d$,
there exists a $d$-regular $G$ such that every perfect matching in $G$
contains an induced matching of size at least $cn/\sqrt{d\log d}$.
\end{theorem}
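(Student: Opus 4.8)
The plan is to use the probabilistic method: take a random $d$-regular bipartite graph $G$ on $n+n$ vertices (say, via the union of $d$ independent uniformly random perfect matchings between $A$ and $B$, or the configuration model), and show that with positive probability every perfect matching $M$ in $G$ contains an induced matching of size at least $cn/\sqrt{d\log d}$. By \Cref{lem:induced_matching}, it suffices to prove that the expected number of ``bad'' perfect matchings — those containing no induced matching of size $\alpha n$ — is less than the expected total number of perfect matchings, or more directly, that with high probability \emph{every} perfect matching $M$ has the property that the graph $\widetilde G[M]$ obtained by contracting $M$ in $G[M]$ has small enough average degree that \Cref{lem:turan_independent_set} yields an independent set of size $\alpha n$. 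Concretely, if $\widetilde G[M]$ has average degree $D$, Tur\'an's bound gives an independent set of size $n/(D+1)$, which corresponds to an induced matching inside $M$ of that size; so I want to show that for the right $\alpha=c/\sqrt{d\log d}$, no perfect matching $M$ induces a subgraph $G[M]$ that is ``locally dense'' enough to defeat this.

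The heart of the argument is a first-moment / union-bound estimate over pairs (perfect matching $M$, offending vertex subset $S\subseteq M$ of size $\alpha n$ that spans too many edges). First I would reformulate: $M$ fails to contain an induced matching of size $\alpha n$ iff for \emph{every} subset $S$ of $\alpha n$ edges of $M$, the $2\alpha n$ endpoints of $S$ span at least one edge of $G$ outside $M$ — equivalently, iff the contracted graph on the $n$ contracted vertices has independence number $<\alpha n$, which by Tur\'an forces average degree $\gtrsim 1/\alpha$, i.e. $\gtrsim n/\alpha$ edges among the contracted vertices, i.e. $\gtrsim n/\alpha$ edges of $G$ each joining two distinct edges of $M$. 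So if $M$ is bad then $G[M]$ contains at least $\Omega(n/\alpha)$ ``chord'' edges (edges of $G$ with both endpoints matched by $M$ but not equal to an edge of $M$). I would then union-bound over all perfect matchings $M$ of $G$ (there are at most $(d!)^{n/d}$ of them by the Bregman--Minc bound in \Cref{lmm:num_perfect_matchings}, but more usefully I can bound the number of \emph{candidate} matchings combinatorially as at most $n!$, or work inside the random model directly) the probability that such a collection of $\Omega(n/\alpha)$ chords is present. In the random $d$-regular model, the probability that a fixed set of $t$ specified potential chord-edges all appear is roughly $(d/n)^{t}$ up to lower-order factors, and the number of ways to choose a perfect matching together with $t=\Theta(n/\alpha)$ chords among its endpoints is at most $n!\cdot\binom{\binom{2n}{2}}{t}\le n!\,(n^2)^t$. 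Multiplying, the expected number of bad matchings is at most roughly $n!\cdot (n^2\cdot d/n)^{t} = n!\,(nd)^{t}$ — which is far too big, so the naive bound fails and this is exactly the point where I must be more careful: the $t$ chords must all lie within the $2\alpha n$ endpoints of a single small vertex set, not spread over all $2n$ vertices.

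The fix — and the main obstacle — is to do the counting \emph{locally}: condition on the contracted vertex set $S$ of size $\alpha n$, so there are only $\binom{2n}{2\alpha n}\le (e/\alpha)^{2\alpha n}$ choices, and within $S$ one needs $\Omega(n/\alpha)$ chords among only $O(\alpha n)$ contracted vertices. The probability that $\ge t$ of the $\binom{2\alpha n}{2}\le 2\alpha^2 n^2$ possible chord-edges are present in the random graph is, by a Chernoff bound (\Cref{lmm:chernoff}) with mean $\approx 2\alpha^2 n^2\cdot d/n = 2\alpha^2 n d$, at most $\exp(-\Omega(t))$ once $t = \Theta(n/\alpha) \gg \alpha^2 n d$, i.e. once $\alpha^3 d = o(1)$; pushing the constants, $\alpha \asymp 1/\sqrt{d\log d}$ is precisely the threshold where $t=\Theta(n/\alpha)=\Theta(n\sqrt{d\log d})$ beats $\log$ of the number of choices, $2\alpha n\log(e/\alpha) + (\text{number of matchings on }S)$. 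Here one must be careful that fixing $S$ does not yet fix the matching on $S$; but once $S$ is fixed, $M$ restricted to $S$ is one of at most $(\alpha n)!\le (\alpha n)^{\alpha n}$ perfect matchings of $S$, contributing another factor absorbed by the $\exp(-\Omega(t))$ term when $t/\alpha \gg \log(1/\alpha)$, which again holds for $\alpha \asymp 1/\sqrt{d\log d}$. The remaining care is that the random-$d$-regular model has mild dependencies between edge-appearance events; I would handle this either by switching to the union-of-$d$-matchings model where each matching is independent and the appearance of a chord in a fixed position has probability $\le d/(n-O(n))\cdot$(bounded factor), and note chords in disjoint positions are negatively associated, so the upper-tail Chernoff bound of \Cref{lmm:chernoff} still applies, or by invoking a standard contiguity result between the configuration model and the uniform $d$-regular model. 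Assembling: $\mathbb E[\#\text{bad perfect matchings}] \le \sum_{S}\binom{}{} (\alpha n)^{\alpha n}\exp(-\Omega(n/\alpha)) < 1$ for a suitable absolute constant $c$ in $\alpha = c/\sqrt{d\log d}$ and $d$ large, so some $d$-regular $G$ has no bad perfect matching, proving the theorem. The only genuinely delicate calibration is matching the exponent $n/\alpha$ of the Chernoff saving against the $\alpha n\log(1/\alpha)$ entropy of choosing $S$ and the matching on it — this is what forces the extra $\sqrt{\log d}$ over the $1/\sqrt d$ upper bound.
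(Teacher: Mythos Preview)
Your proposal has a genuine gap: the random $d$-regular bipartite graph is the wrong object, and the union-bound calculation you sketch cannot close. In a uniformly random $d$-regular bipartite graph, take $M$ to be any fixed perfect matching (e.g.\ one of the $d$ constituent matchings in the union-of-matchings model). The contracted graph $\widetilde G[M]$ then has $n$ vertices and is essentially a random graph of average degree $\approx 2(d-1)$; its independence number is $\Theta\!\bigl(\tfrac{n\log d}{d}\bigr)$ with high probability, so the largest induced matching inside this particular $M$ has size only $\Theta\!\bigl(\tfrac{n\log d}{d}\bigr)$. Thus a random $d$-regular $G$ achieves at best $\alpha(n)=\Theta(\log d/d)$ --- exactly the expander bound of \Cref{thm:expander} --- and \emph{not} $\Omega(1/\sqrt{d\log d})$. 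No first-moment argument over bad matchings can succeed when a bad matching exists almost surely.

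The logical slip that hides this is in your ``fix''. Saying $M$ is bad means every $\alpha n$-subset of $M$ contains at least \emph{one} chord; it does not produce a subset $S$ of $\alpha n$ contracted vertices carrying $\Omega(n/\alpha)$ chords. Your Tur\'an step gives $\gtrsim n/\alpha$ edges in the contracted graph \emph{on all $n$ vertices}, but that graph always has $\approx n(d-1)$ edges regardless of whether $M$ is bad, so the constraint is vacuous. There is no localization, and hence no small-probability event to union-bound against.

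The paper's construction is completely different and essentially structural. One takes $d\approx n/2$ and builds $G$ so that most of it is complete bipartite (between $A_1,B_2$ and $A_2,B_1$) while a sparse random piece $G_1$ of degree $2\sqrt{n\log n}$ sits between $A_1$ and $B_1$, with no edges between $A_2$ and $B_2$. The sizes are rigged so that \emph{every} perfect matching is forced to use at least $2\sqrt{n\log n}$ edges of $G_1$. One then shows, by a direct Chernoff/union bound on $G_1$ alone, that every $(2\sqrt{n\log n})\times(2\sqrt{n\log n})$ subgraph of $G_1$ has average degree $O(\log n)$; contracting those matching edges and applying \Cref{lem:turan_independent_set} yields an induced matching of size $\Omega(\sqrt{n/\log n})=\Omega(n/\sqrt{d\log d})$. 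The point you are missing is that to beat $\log d/d$ one must engineer the graph so that every perfect matching is forced through a \emph{sparse} region; a homogeneous random regular graph has no such region.
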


\begin{proof}
We prove for the setting $d=\frac12n+\sqrt{n \log n}$;
$n$ can then be made larger (while keeping $d$ fixed) by taking disjoint copies. The
construction of $G$ is as follows: Let $A,B$ be the bipartition,
and partition $A$ into $A_1\cup A_2$ and $B$ into $B_1\cup
B_2$, such that $|A_1|=|B_1|=\frac12n+\sqrt{n\log n}$ and
$|A_2|=|B_2|=\frac12n-\sqrt{n\log n}$. The bipartite graphs between $A_1$
and $B_2$ and between $A_2$ and $B_1$ are complete, and there are no
edges between $A_2$ and $B_2$. Let $G_1'$ be a random bipartite graph on
$A_1,B_1$ in which each edge is present independently with probability
$32\sqrt{\frac{\log n}{n}}$. By~\cref{lem:kfactor}
$G_1'$ contains a $(2\sqrt{n\log n})$-regular spanning subgrpah $G_1$,
which we add to $G$. This completes the random construction of $G$,
which is clearly $(\frac12n+\sqrt{n \log n})$-regular.

Next, we argue that with high probability, each subgraph of $G_1$
with $2\sqrt{n\log n}$ nodes on each side has average degree at most
$32\log n$. Clearly it suffices to prove this for $G_1'$. Indeed,
for such a given subgraph of $G_1'$, the expected number of edges is
$32\sqrt{n}\log^{1.5}n$ and hence by the Chernoff bound (\cref{lmm:chernoff}), the probability
to exceed $64\sqrt{n}\log^{1.5}n$ edges (or equivalently average degree
$32\log n$) is at most $\exp(-\frac{32\sqrt{n}\log^{1.5}(n)}{3})) < n^{-10\sqrt{n\log n}}$. There are at
most ${\frac12n+\sqrt{n\log n} \choose 2\sqrt{n\log n}}^2 \leq (\frac{en}{2\sqrt{n\log n}})^{4\sqrt{n\log n}} < n^{2\sqrt{n\log n}}ֿ$,
so by a union bound, the desired property holds with probability $1-o(1)$.

Assume henceforth this event occurs, that is, each subgraph of $G_1$
with $2\sqrt{n\log n}$ nodes on each side has average degree at most
$32\log n$. Consider a perfect matching $M$ in $G$. It
must intersect $G_1$ on at least $2\sqrt{n\log n}$ edges. Let $H$ be
the auxiliary graph whose nodes are these edges of $M\cap G_1$, and two
nodes are neighbors if the corresponding edges of $M$ are connected
in $G_1$. By the above property of $G_1$, $H$ has average degree at
most $400\log n$ and hence by \cref{lem:turan_independent_set}
it contains an independent set of size at least
$\frac{1}{200}\sqrt{n/\log n}$, which correspond to an induced matching
of this size, contained in $M$.  \end{proof}

\begin{lemma}\label{lem:kfactor}
Let $G(V_1,V_2;E)$ be a random bipartite graph with $|V_1|=|V_2|=n$,
in which each edge is present with independently probability
$p=32\sqrt{\frac{\log n}{n}}$. Then, with high probability,
$G$ contains a $(2\sqrt{n\log n})$-regular spanning subgraph.
\end{lemma}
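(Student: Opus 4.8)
The plan is to invoke the classical max-flow characterization of when a bipartite graph contains a $k$-regular spanning subgraph ($k$-factor), and then to check that the resulting condition holds with high probability for our random $G$. Set $k:=2\sqrt{n\log n}$ (rounded to an integer, which is immaterial), and note that $p=32\sqrt{\log n/n}$ gives $np=16k$. Applying the max-flow min-cut theorem to the standard network (a source feeding each vertex of $V_1$ with capacity $k$, the edges of $G$ oriented $V_1\to V_2$ with capacity $1$, and each vertex of $V_2$ feeding a sink with capacity $k$), one obtains that $G$ has a $k$-factor if and only if
\[
  e(A,B)\ \ge\ k\bigl(|A|+|B|-n\bigr)\qquad\text{for all }A\subseteq V_1,\ B\subseteq V_2 .
\]
Since any $k$-factor with $k\ge1$ uses all vertices, it is automatically spanning, so it suffices to prove that this family of inequalities holds w.h.p.

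First I would fix $A,B$ and write $a:=|A|$, $b:=|B|$, $m:=a+b-n$; the inequality is vacuous unless $m\ge1$, so assume $1\le m\le\min(a,b)$. An elementary computation shows $ab\ge mn$ (for fixed $a+b=n+m$, the quantity $ab=a(n+m-a)$ is a downward parabola on the feasible interval $a\in[m,n]$, and equals $mn$ at both endpoints), hence $\mathbb{E}[e(A,B)]=abp\ge mnp=16km$. Since $e(A,B)$ is a sum of $ab$ independent $\mathrm{Bernoulli}(p)$ variables, \Cref{lmm:chernoff} gives $\Pr[e(A,B)<km]\le\Pr[e(A,B)<\tfrac{1}{16}\,abp]\le\exp(-c\,abp)$ for an absolute constant $c>0$.

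Next I would run a union bound over all pairs $(A,B)$, organized by whether $\min(a,b)$ is small or large. If $\min(a,b)\le n/2$, say $a\le n/2\le b$, then $m\ge1$ together with $b\le n$ forces $b\in[n+1-a,n]$, only $a$ possibilities, so there are at most $a\binom{n}{a}^2\le e^{O(a\log n)}$ such pairs; meanwhile $abp\ge a\cdot\tfrac n2\cdot p=\Omega\!\bigl(a\sqrt{n\log n}\bigr)$, which dominates $O(a\log n)$ since $\sqrt{n\log n}\gg\log n$, so these pairs contribute at most $\sum_{a\ge1}e^{-\Omega(a\sqrt{n\log n})}=o(1)$ to the failure probability. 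If instead $\min(a,b)>n/2$, there are at most $4^n$ pairs, but now $abp\ge(n/2)^2p=\Omega\!\bigl(n^{3/2}\sqrt{\log n}\bigr)$, and since $n^{3/2}\sqrt{\log n}\gg n$ the product $4^n\cdot e^{-\Omega(n^{3/2}\sqrt{\log n})}$ is $o(1)$. Adding the two (symmetric) cases shows the whole family of inequalities holds with probability $1-o(1)$, which finishes the proof.

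I expect the main obstacle to be precisely this union bound: a naive attempt to make the failure probability beat $\binom{n}{a}\binom{n}{b}$ uniformly fails, because for a single pair Chernoff only yields $e^{-\Theta(abp)}$ and $abp$ can be as small as $\Theta(\sqrt{n\log n})\ll n$. What rescues the argument is the trade-off noted above — whenever the number of relevant pairs is genuinely exponential in $n$, both $a$ and $b$ are large and $\mathbb{E}[e(A,B)]$ is quadratic in $n$, so Chernoff wins comfortably; and whenever $\mathbb{E}[e(A,B)]$ degrades to order $\sqrt{n\log n}$, one of $a,b$ is small and there are only $n^{O(1)}$ pairs. An alternative route, bypassing the $k$-factor theorem, is to extract $k$ edge-disjoint perfect matchings greedily: since $np\gg\log n$, w.h.p.\ all degrees of $G$ are $(1\pm o(1))np$ and $G$ has robust Hall-type expansion, so after deleting any $j<k$ perfect matchings the remainder still satisfies Hall's condition; this requires a similar, and similarly delicate, union bound over potential Hall violators.
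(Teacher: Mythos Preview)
Your proposal is correct and follows essentially the same route as the paper: both invoke the same bipartite $k$-factor criterion (yours stated for $A,B$, the paper's for the complements $\bar X,\bar Y$, which are equivalent), reduce to a Chernoff deviation for $e(A,B)$, and finish with a case-split union bound. The case splits differ cosmetically (you split on $\min(a,b)\le n/2$; the paper on $|Y|\le n-\sqrt{n/\log n}$), but the trade-off you exploit --- many pairs $\Rightarrow$ huge $\mathbb E[e(A,B)]$; small $\mathbb E[e(A,B)]\Rightarrow$ few pairs --- is exactly the same mechanism.
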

\begin{proof}
By the well known  criterion for containing a factor (see, e.g.,
\cite{LP}, Theorem 2.4.2), $G$ contains a
subgraph as required iff for every $X\subset V_1$ and $Y\subset V_1$,
\begin{equation}
\label{eq:kfactor}
|X| + |Y| + \frac{e(\bar X,\bar Y)}{2\sqrt{n\log n}} \geq n,
\end{equation}
where $e(\bar X, \bar Y)$ denotes the number of edges between $\bar
X=V_1\setminus X$ and $\bar Y=V_2\setminus Y$.
We can restrict attention to $X,Y$ such that $|X|+|Y|\leq n$, as
otherwise~\cref{eq:kfactor} holds trivially. Observe that
\begin{equation}\label{eq:kfactor_expectation}
\E[e(\bar X,\bar Y)]=32\sqrt\frac{\log n}{n}(n-|X|)(n-|Y|)
\end{equation}
and that if we plug
\begin{equation}\label{eq:kfactor_deviation}
e(\bar X,\bar Y)\geq\frac{1}{2}\E[e(\bar X,\bar Y)]
\end{equation}
in the LHS of~\cref{eq:kfactor} then the desired inequality holds,
since by~\cref{eq:kfactor_expectation},
\[
  |X| + |Y| + \frac{e(\bar X,\bar Y)}{2\sqrt{n\log n}} \geq
  |X| + |Y| + \frac{8(n-|X|)(n-|Y|)}{n} =
  8n -7(|X|+|Y|) + \frac{|X||Y|}{n} \geq
  n,
\]
having used $|X|+|Y|\leq n$. Hence
it suffices to show that~\cref{eq:kfactor_deviation} occurs for all
$X,Y$ with high probability.
Assume w.l.o.g.~$|X|\leq|Y|$, which implies $|X|\leq\frac{1}{2}n$. We
consider two cases:
\begin{itemize}
\item $|Y|\leq n-\sqrt{\frac{n}{\log n}}$.
Then we have $(n-|X|)(n-|Y|)\geq\frac12n\sqrt{\frac{n}{\log n}}$,
hence by~\cref{eq:kfactor_expectation} $\E[e(\bar X,\bar Y)]\geq16n$,
and by the Chernoff bound (\cref{lmm:chernoff}),
  \[
    \Pr\left[e(\bar X,\bar Y)\geq\frac{1}{2}\E[e(\bar X,\bar Y)]\right] \geq
   1-\expֿ\left(-\frac18\E[e(\bar X,\bar Y)]\right) \geq
    1-\exp\left(-2n\right).
 \]
  Taking a union bound over at most $4^n$ choices of $X,Y$,
\cref{eq:kfactor_deviation} holds for
all such $X,Y$ with probability $1-o(1)$.
  \item $|Y| > n-\sqrt{\frac{n}{\log n}}$.
  Our assumption $|X|+|Y|<n$ implies in particular that $|Y|\leq n-1$,
  and together with $|X|\leq\frac12n$ we get $(n-|X|)(n-|Y|)\geq\frac12n$.
  Hence by~\cref{eq:kfactor_expectation} $\E[e(\bar X,\bar Y)]\geq16\sqrt{n\log n}$,
 and by the Chernoff bound (\cref{lmm:chernoff}),
  \[
   \Pr\left[e(\bar X,\bar Y)\geq\frac{1}{2}\E[e(\bar X,\bar Y)]\right] \geq
    1-\expֿ\left(-\frac18\E[e(\bar X,\bar Y)]\right) \geq
    1-\exp\left(-2\sqrt{n\log n}\right).
  \]

Noting that the current case assumption together with $|X|+|Y| < n$
implies $|X|<\sqrt{\frac{n}{\log n}}$, we have at most ${n \choose
\sqrt{n/\log n}}^2 \leq n^{2\sqrt{n/\log n}} = 4^{\sqrt{n\log
n}}$ choices for $X,Y$. Taking a union bound over these,
\cref{eq:kfactor_deviation} holds for all such $X,Y$ with probability
$1-o(1)$.
\end{itemize}
A final union bound over the two cases implies that~\cref{eq:kfactor}
holds for all $X,Y$ simultaneously with probability $1-o(1)$.
\end{proof}

%%%%%%%%%%%%%%%%%%%%%%%%%%%%%%%%%%%%%%%%%%%%%%%%%%%%%%%%%%%%%%%
\subsection{Construction of $d$-regular multitaskers based on expanders}
%%%%%%%%%%%%%%%%%%%%%%%%%%%%%%%%%%%%%%%%%%%%%%%%%%%%%%%%%%%%%%%
In this section we show how to construct multitaskers with multitasking capacity $\Theta_d(\log d/d)$.
This is done based on construction of bipartite spectral expanders. Namely, we have the following result:
\begin{theorem}\label{thm:expander}
    Fix $d \in \N$, and let $n \in \N$ be sufficiently large.
    There exists a $d$-regular bipartite graph with $n$ vertices on each side,
    with $\alpha > \Omega(\frac{\log d}{d}).$
\end{theorem}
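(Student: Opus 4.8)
The plan is to take $G$ to be a good bipartite spectral expander — concretely, a $d$-regular bipartite graph on $n+n$ vertices whose nontrivial adjacency eigenvalues are all bounded by $\lambda = O(\sqrt d)$ (a Ramanujan graph, or any near-optimal explicit expander, suffices; such graphs exist for all large $n$). The key point is that in such a graph, small vertex sets do not span many edges, so after contracting any matching the resulting graph is still sparse, and Turán's bound (\Cref{lem:turan_independent_set}) together with \Cref{lem:contraction} delivers a large induced matching. I would prove the quantitative statement that $G$ is a $(k,\Omega(\log d/d))$-multitasker for \emph{every} $k \le n$, which is what the theorem asserts when phrased as an $\alpha$-multitasker.

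First I would record the expander mixing lemma: for disjoint $S \subseteq A$, $T \subseteq B$, one has $\bigl| e(S,T) - \tfrac{d}{n}|S||T| \bigr| \le \lambda \sqrt{|S||T|}$. Now fix a matching $M$ of size $k$, let $G[M]$ be the subgraph on the $2k$ matched vertices, with $S$ the matched vertices in $A$ and $T$ those in $B$, so $|S|=|T|=k$. By the mixing lemma, $e(G[M]) = e(S,T) \le \tfrac{d k^2}{n} + \lambda k$, hence the average degree of $G[M]$ is at most $\tfrac{2dk}{n} + 2\lambda$. By \Cref{lem:contraction}, contracting the $k$ edges of $M$ yields a graph $\widetilde G[M]$ on $k$ vertices with average degree at most $\tfrac{4dk}{n} + 4\lambda - 2$. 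The trouble is that the term $\tfrac{4dk}{n}$ is $\Theta(d)$ when $k = \Theta(n)$, which would only give $\alpha = \Omega(1/d)$, not $\Omega(\log d/d)$ — so the crude mixing bound alone is not enough for large $k$, and this is where the main work lies.

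To extract the extra $\log d$ factor I would instead bound the number of induced matchings of a given size more cleverly, using the counting/matching-graph framework of \Cref{lem:induced_matching} rather than a one-shot contraction argument; alternatively, and more in the spirit of this paper, one can combine the expander structure with a probabilistic deletion argument: independently keep each edge of $M$ with probability $p = \Theta(\log d/d)$, and show via the mixing lemma that the expected number of ``conflicts'' (pairs of kept edges of $M$ joined by a third edge of $G$) is an $o(1)$ fraction of the $pk$ kept edges, then delete one endpoint of each conflict. The conflict count between kept edges is governed by the number of length-$3$ paths among matched vertices, which the expander mixing lemma again controls; the key inequality is that for an expander the number of such paths is $O(d \cdot e(G[M]) \cdot )$-type, and the $\log d$ gain comes from choosing $p$ at the threshold where the expected number of surviving kept edges, $pk$, dominates the expected number of conflicts, $p^2 \cdot (\text{paths})$. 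I expect the main obstacle to be carefully bounding the number of length-$3$ paths through a small set using only the second-eigenvalue bound (one needs something like $\sum_{v} d_S(v) d_T(v)$ over $v$, controlled by expansion), and verifying that the constant hidden in $\lambda = O(\sqrt d)$ interacts correctly so that the bound is nontrivial for all $k \le n$ simultaneously — possibly requiring a case split on whether $k \lesssim n/d$ (where even the naive greedy bound $\alpha \ge 1/(2d)$ already beats $\log d/d$ up to constants, or the girth/forest argument of \Cref{lem:forest} applies to short matchings) versus $k \gtrsim n/d$ (where the expansion estimate is the binding one).
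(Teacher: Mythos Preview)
Your setup is right and matches the paper: take a $d$-regular bipartite $(n,d,\lambda)$-expander with $\lambda = o(d)$, apply the mixing lemma to bound $e(S,T)$ for the matched vertices, and split into cases according to $k$. For small $k$ (the paper cuts at $k \le n/\log d$, and with $\lambda = O(\sqrt d)$ your cut at $n/d$ would also work but is unnecessarily conservative) the mixing lemma gives directly that $G[M]$ has average degree $O(dk/n + \lambda)$, and contraction plus Tur\'an already yield an induced matching of size $\Omega(|M|\log d/d)$. That part of your sketch is fine.

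The gap is in the large-$k$ case. Your random-deletion proposal does not produce the extra $\log d$. If you keep each edge of $M$ independently with probability $p$, the expected number of kept edges is $pk$ and the expected number of surviving conflicts is $p^2 E$, where $E$ is the number of edges of the contracted graph $\widetilde G[M]$. For $k=n$ one has $E = (d-1)n$, so after deleting one endpoint from each conflict you retain at most $pn - p^2(d-1)n$ edges; this is maximized at $p = \Theta(1/d)$ and gives only $\Theta(n/d)$. There is no threshold at $p = \Theta(\log d/d)$: random thinning followed by conflict-removal is essentially a probabilistic restatement of Tur\'an's bound and cannot beat $n/(d_{\mathrm{avg}}+1)$ on a graph whose global average degree is genuinely $\Theta(d)$. (Your other suggestion, invoking the $H(G,\alpha,k)$ framework of \Cref{lem:induced_matching}, goes in the wrong direction: that lemma exhibits matchings with \emph{no} large induced submatching and is a tool for upper bounds on $\alpha$, not lower bounds.)

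The missing idea, which is what the paper does, is to apply the mixing lemma not once but at \emph{every step} of a greedy process, following Alon--Krivelevich--Sudakov. After contracting $M$, repeatedly pick a minimum-degree vertex, add it to the independent set, and delete it together with its neighbors. If $b n$ vertices currently remain, the mixing lemma says the induced subgraph has average degree at most $4bd + 2\lambda$, so one removes at most $4bd + 2\lambda + 1$ vertices in that step. This yields the recurrence $a_{i+1} = (1 - 4d/n)a_i - (2\lambda+1)$, whose solution shows the process survives for at least $\tfrac{n}{4d}\ln\bigl(1 + \tfrac{kd}{n(\lambda/2 + 1/4)}\bigr)$ steps. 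For $k \ge n/\log d$ and $\lambda \le d^{0.9}$ this is $\Omega(n\log d / d) \ge \Omega(|M|\log d/d)$. The logarithm arises precisely because the deletion rate \emph{shrinks} as the graph shrinks --- a feature invisible to any one-shot contraction-plus-Tur\'an or random-thinning argument.
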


We will prove the theorem by showing that if $d$ is large enough constant, and $G$ is a $(n,d,\lambda)$-expander with
$\lambda \leq O(d^{0.9})$, then every matching $M$ in $G$ contains an induced matching of size at least $\frac{|M| \log d}{16d}$.
The proof is similar to a result due to Alon, Krivelevich and Sudakov regarding large independent sets in subgraphs of psuedo-random graphs \cite{alon1999list}.
Given a bipartite $d$-regular graph $G=(A \cup B,E)$ with $|A|=|B|=n/2$,
let $\lambda_1 \geq \lambda_2 \geq...\geq \lambda_n$ be the $n$ eigenvalues of the adjacency matrix of $G$.
It is known that $|\lambda_1|=|\lambda_n|=d$. We let $\lambda$ denote the largest eigenvalue (in absolute value) excluding $\lambda_1, \lambda_n$.
Such $G$ is called a $(n,d,\lambda)$-expander.
We use the following variation of the expander mixing lemma for bipartite $d$-regular graphs:

\begin{lemma}\label{lem:mixing}
Given a bipartite $d$-regular graph $G=(A,B,E)$ with $|A|=|B|=n/2$ we have for every $S\subseteq A$ and $T\subseteq B$,
$$|e(S,T)-\frac{|S||T|d}{n/2}| \leq \lambda \sqrt{|S||T|}.$$
\end{lemma}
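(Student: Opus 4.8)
The plan is to run the standard spectral proof of the expander mixing lemma, taking care that in the bipartite $d$-regular setting the adjacency matrix has \emph{two} extremal eigenvalues, $d$ and $-d$, rather than one. Write the $n\times n$ adjacency matrix of $G$ in block form $M=\left(\begin{smallmatrix}0&N\\N^{T}&0\end{smallmatrix}\right)$, where $N$ is the $(n/2)\times(n/2)$ biadjacency matrix recording the edges between $A$ and $B$, and fix an orthonormal eigenbasis $v_{1},\dots,v_{n}$ of $M$ with eigenvalues $\lambda_{1}=d\ge\lambda_{2}\ge\dots\ge\lambda_{n-1}\ge\lambda_{n}=-d$. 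Since $G$ is $d$-regular and bipartite one may take $v_{1}=\tfrac{1}{\sqrt n}\mathbf{1}$ and $v_{n}=\tfrac{1}{\sqrt n}(\mathbf{1},-\mathbf{1})$ (all ones on $A$, minus all ones on $B$), and every remaining eigenvalue satisfies $|\lambda_{i}|\le\lambda$ by the definition of $\lambda$.

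Next I would encode $e(S,T)$ as a bilinear form: letting $x\in\R^{n}$ be the indicator vector of $S$ (supported on the $A$-coordinates) and $y\in\R^{n}$ the indicator vector of $T$ (supported on the $B$-coordinates), one has $x^{T}My=e(S,T)$. Expanding $x=\sum_{i}a_{i}v_{i}$ and $y=\sum_{i}b_{i}v_{i}$ in the eigenbasis gives $e(S,T)=\sum_{i=1}^{n}\lambda_{i}a_{i}b_{i}$. Computing the inner products against $v_{1}$ and $v_{n}$ one finds $a_{1}=a_{n}=|S|/\sqrt n$ and $b_{1}=-b_{n}=|T|/\sqrt n$, so the two extremal terms contribute
\[
  \lambda_{1}a_{1}b_{1}+\lambda_{n}a_{n}b_{n}
  = d\cdot\frac{|S||T|}{n}+(-d)\cdot\Bigl(-\frac{|S||T|}{n}\Bigr)
  = \frac{|S||T|\,d}{n/2},
\]
which is exactly the main term of the claimed estimate. (This is the only place the bipartite case differs from the familiar statement: the eigenvector of eigenvalue $-d$ reinforces rather than cancels the contribution of the all-ones eigenvector, which is why the main term is $|S||T|d/(n/2)$ and not $|S||T|d/n$.)

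Finally I would bound the remaining sum: subtracting the main term yields $e(S,T)-\frac{|S||T|d}{n/2}=\sum_{i=2}^{n-1}\lambda_{i}a_{i}b_{i}$, and since $|\lambda_{i}|\le\lambda$ throughout this range, Cauchy--Schwarz gives
\[
  \Bigl|\sum_{i=2}^{n-1}\lambda_{i}a_{i}b_{i}\Bigr|
  \le\lambda\sum_{i=2}^{n-1}|a_{i}b_{i}|
  \le\lambda\Bigl(\textstyle\sum_{i}a_{i}^{2}\Bigr)^{1/2}\Bigl(\textstyle\sum_{i}b_{i}^{2}\Bigr)^{1/2}
  =\lambda\,\|x\|_{2}\,\|y\|_{2}=\lambda\sqrt{|S||T|},
\]
using $\|x\|_{2}^{2}=|S|$ and $\|y\|_{2}^{2}=|T|$. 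This is precisely the asserted inequality. There is no genuine obstacle here — the computation is routine; the only points requiring a moment's attention are the bookkeeping of the two extremal eigenvalues described above and, if one wants to be scrupulous about the eigenbasis, the standard fact that $-d$ is an eigenvalue of a $d$-regular bipartite graph, realized by the reflected all-ones vector.
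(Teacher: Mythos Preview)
Your proof is correct and is the standard spectral argument for the bipartite expander mixing lemma. The paper itself does not supply a proof of this lemma at all --- it is simply quoted as a known ``variation of the expander mixing lemma for bipartite $d$-regular graphs'' and then used --- so there is nothing to compare your argument against; you have filled in exactly the routine verification the authors chose to omit.
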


Using Lemma ~\ref{lem:mixing} we have the following result:
\begin{lemma}
Let $A'\subseteq A,B' \subseteq B$ with $|A'|=|B'|=an$. Then $$\frac{|e(A',B')|}{2an} \le ad + \lambda/2.$$ In particular, the average degree of $G(A',B')$ is at most $2ad+\lambda.$
\end{lemma}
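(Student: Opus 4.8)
The plan is to read both claims off directly from the bipartite expander mixing lemma (Lemma~\ref{lem:mixing}); no further machinery is needed, and in particular the probabilistic/spectral content has already been packaged into that lemma. Concretely, I would apply Lemma~\ref{lem:mixing} with $S=A'$ and $T=B'$. Since $|A'|=|B'|=an$, the lower-bound side of the mixing inequality is irrelevant, and the upper-bound side gives
\[
  |e(A',B')| \;\leq\; \frac{|A'|\,|B'|\,d}{n/2} + \lambda\sqrt{|A'|\,|B'|}
  \;=\; 2a^2 n d + \lambda a n .
\]
Dividing through by $2an$ yields $\frac{|e(A',B')|}{2an} \leq ad + \lambda/2$, which is the first assertion.

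For the ``in particular'' statement, I would note that the bipartite subgraph on $A'\cup B'$ has exactly $|A'|+|B'| = 2an$ vertices and $|e(A',B')|$ edges, because $G$ is bipartite with sides $A\supseteq A'$ and $B\supseteq B'$ so every edge incident to $A'\cup B'$ that stays inside it lies in $e(A',B')$. Hence its average degree equals $\tfrac{2|e(A',B')|}{2an} = \tfrac{|e(A',B')|}{an}$, which by the bound just derived is at most $2ad+\lambda$. This deterministic estimate is exactly what will later be fed (together with Lemma~\ref{lem:contraction} and Lemma~\ref{lem:turan_independent_set}) into the proof of Theorem~\ref{thm:expander}.

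There is essentially no obstacle here: the proof is a one-line substitution into Lemma~\ref{lem:mixing}. The only points demanding any care are bookkeeping the factor $n/2$ in the statement of the mixing lemma (it arises because each side of the bipartition has $n/2$ vertices), and correctly converting ``number of edges'' into ``average degree'' for a graph on $2an$ vertices, i.e.\ dividing $2|e(A',B')|$ by the vertex count $2an$.
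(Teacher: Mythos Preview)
Your proposal is correct and is exactly the argument the paper intends: the lemma is stated immediately after Lemma~\ref{lem:mixing} with the phrase ``Using Lemma~\ref{lem:mixing} we have the following result,'' and your substitution $S=A'$, $T=B'$ together with the edge-to-average-degree conversion is precisely that derivation.
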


We first need the following Lemma.
\begin{lemma}\label{lem:iteration}
Let $A'\subseteq A,B' \subseteq B$ with $|A'|=|B'|=an$. Suppose $G(A',B')$ contains a perfect matching.
If $\lambda=o(d)$, the $G(A',B')$ contains an induced matching of size at least
\begin{equation*}
  \frac{n}{4d}\ln\left(1+\frac{md}{n(\lambda/2+1/4)}\right) .
\end{equation*}
\end{lemma}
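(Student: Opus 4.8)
The plan is to contract the given perfect matching and reduce to finding a large \emph{independent set} in the resulting graph; the gain over the trivial Tur\'an bound $\Omega(n/d)$ (\Cref{lem:turan_independent_set}) will come from the fact that \Cref{lem:mixing} controls the edge density of the contracted graph not merely globally but \emph{at every scale}. Concretely, let $M=\{(a_1,b_1),\dots,(a_m,b_m)\}$ be a perfect matching of $G(A',B')$, so $m=|A'|=|B'|=an$, and let $\widetilde G$ be obtained from $G(A',B')$ by contracting every edge of $M$: its vertices are $v_1,\dots,v_m$ with $v_i$ representing $(a_i,b_i)$, and $v_i\sim v_j$ in $\widetilde G$ exactly when $(a_i,b_j)\in E$ or $(a_j,b_i)\in E$ (the only possibilities, since $G$ is bipartite). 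An independent set $\{v_{i_1},\dots,v_{i_t}\}$ of $\widetilde G$ corresponds to the sub-matching $\{(a_{i_\ell},b_{i_\ell}):\ell\le t\}\subseteq M$, which is an induced matching of $G(A',B')$; so it suffices to find an independent set of size at least $\tfrac{n}{4d}\ln\!\big(1+\tfrac{md}{n(\lambda/2+1/4)}\big)$ in $\widetilde G$.

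The first step is a density bound that is uniform over all subsets. Fix $S\subseteq V(\widetilde G)$ with $|S|=s$ and put $A_S=\{a_i:v_i\in S\}\subseteq A'$ and $B_S=\{b_i:v_i\in S\}\subseteq B'$, so $|A_S|=|B_S|=s$. Each edge of $\widetilde G$ inside $S$ is charged injectively to an edge of $G$ between $A_S$ and $B_S$ other than one of the $s$ matching edges $(a_i,b_i)$; hence, by \Cref{lem:mixing} (recall $|A|=|B|=n/2$),
\[
  e_{\widetilde G}(S)\ \le\ e_G(A_S,B_S)-s\ \le\ \frac{|A_S|\,|B_S|\,d}{n/2}+\lambda\sqrt{|A_S|\,|B_S|}-s\ =\ \frac{2s^2 d}{n}+(\lambda-1)\,s ,
\]
so $\widetilde G[S]$ has average degree at most $\tfrac{4sd}{n}+2\lambda-2$, and in particular a vertex of degree at most $\tfrac{4sd}{n}+2\lambda-2$.

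The second step extracts the independent set greedily: set $H_0=\widetilde G$ and $I=\emptyset$, and while $H_j\ne\emptyset$ pick a minimum-degree vertex $v_j$ of $H_j$, add it to $I$, and let $H_{j+1}=H_j-v_j-N_{H_j}(v_j)$. Then $I$ is independent, and writing $s_j:=|V(H_j)|$ and $g(s):=\tfrac{4sd}{n}+2\lambda-1$, the density bound gives $s_{j+1}\ge s_j-1-\big(\tfrac{4s_jd}{n}+2\lambda-2\big)$, hence $g(s_{j+1})\ge g(s_j)\big(1-\tfrac{4d}{n}\big)$. Since $s_0=m$ and the process ends with $s_J=0$ (so $g(s_J)=2\lambda-1$), iterating gives $\big(1-\tfrac{4d}{n}\big)^{|I|}\le g(0)/g(m)$, and therefore, using $-\ln(1-x)\le x/(1-x)$,
\[
  |I|\ \ge\ \frac{\ln\!\big(g(m)/g(0)\big)}{-\ln\!\big(1-\tfrac{4d}{n}\big)}\ \ge\ \Big(\tfrac{n}{4d}-1\Big)\ln\!\Big(1+\tfrac{4md/n}{2\lambda-1}\Big)\ \ge\ \tfrac{n}{4d}\,\ln\!\Big(1+\tfrac{md}{n(\lambda/2+1/4)}\Big).
\]
Pulling $I$ back through the contraction produces the required induced matching inside $M\subseteq G(A',B')$.

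The delicate point is the final chain of inequalities: the discrete recursion for $s_j$ loses a factor $1-O(d/n)$ relative to the clean value $\tfrac{n}{4d}\ln\!\big(1+\tfrac{4md/n}{2\lambda-1}\big)$, and one must check that this loss --- which is absorbed into the gap between $\lambda/2-1/4$ and $\lambda/2+1/4$ --- still leaves the stated bound valid. This holds in the regime in which the lemma is applied, namely $\lambda=o(d)$ (which for large $d$ also ensures $2\lambda-1>0$) together with $n$ large relative to $d$, since then $\tfrac{n}{4d}-1=(1-o(1))\tfrac{n}{4d}$ while $\ln\!\big(1+\tfrac{4md/n}{2\lambda-1}\big)$ exceeds $\ln\!\big(1+\tfrac{4md/n}{2\lambda+1}\big)$ by a bounded-below factor. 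Everything else --- the contraction, the independent-set/induced-matching correspondence, the injective charging, and the greedy selection --- is routine.
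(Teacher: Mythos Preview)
Your proof is correct and follows essentially the same route as the paper's: contract the perfect matching, use the mixing lemma (\Cref{lem:mixing}) to bound the average degree of every induced subgraph of the contracted graph, and extract an independent set greedily via the resulting linear recurrence. Your treatment of the recurrence is in fact slightly more careful than the paper's --- where the paper passes from $(1-4d/n)^i$ to $e^{-4di/n}$ in the wrong direction, you correctly use $-\ln(1-x)\le x/(1-x)$ to obtain the prefactor $n/(4d)-1$ and then absorb the discrepancy with $n/(4d)$ into the gap between $2\lambda-1$ and $2\lambda+1$; that last absorption is a little informal, but valid in the asymptotic regime $\lambda=o(d)$ (so $\lambda\ge 2\sqrt{d-1}-o(1)>1/2$) with $n$ large, which is exactly where the lemma is invoked.
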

\begin{proof}
Set $m=an$ and let $M=\{e_1,...,e_m\}$ be perfect matching. Contract all edges in $M$ and call the resulting graph $G_1$, The average degree of $G_1$ is at most $4ad +2 \lambda$.
Pick a vertex $v$ of minimal degree in $G_1$, add it to a set $I$ (initialized to be the empty set) and repeat the process for $G_2=G_1 \setminus \{v  \cup N(v)\}$, where $N(v)$ is the set of neighbors of $v$. Continue iteratively with the above algorithm, until no vertices are left. The crucial observation is that for any $i$ for which $G_i$ is nonempty, if $G_i$ contains $b n$ vertices, then it has average degree at most $4bd + 2 \lambda$. Consider the sequence defined by the recurrence relation
$$a_0=m, a_{i+1}=a_i-\left(4d \frac{a_i}{n}+2 \lambda +1\right)=\left(1-\frac{4d}{n}\right)a_i-(2\lambda+1), \forall i \geq 0.$$
By the definition of our iterative procedure, the cardinality of the graph remaining after $i$ iterations is at least $a_i$. Solving the recurrence above we get that,
$$a_i=\left(1-\frac{4d}{n}\right)^i\left(m+\frac{n(\lambda/2+1/4)}{d}\right)-\frac{n(\lambda/2+1/4)}{d}.$$
It follows that $$a_i \geq e^{-(4d/n)i}\left(m+\frac{n(\lambda/2+1/4)}{d}\right)-\frac{n(\lambda/2+1/4)}{d}.$$
The size of $|I|$ is larger than the smallest index $i$ for which $a_i \leq 0$. Therefore $$|I| \geq \frac{n}{4d}\ln\left(1+\frac{md}{n(\lambda/2+1/4)}\right).$$
The set of edges that corresponds to vertices in $I$ is an induced matching. This concludes the proof.
\end{proof}

Observe that Lemma~\ref{lem:iteration} implies that every $d$-regular bipartite graph with $\lambda \ll d$ and two equal sides contains an induced matching of size $O(\frac{n\log d}{d}).$ We are not aware of a previous proof of this fact.
We can now prove Theorem~\ref{thm:expander}:

\begin{proof}[Proof of Theorem~\ref{thm:expander}]
Suppose first that $|M| \geq n/\ln (d)$. By Lemma~\ref{lem:iteration} $M$ contains an induced matching of size at least
$$\frac{n}{4d}\ln\left(1+\frac{|M|d}{n(\lambda/2+1/4)}\right),$$ which is at least
$$\frac{n}{4d}\ln\left(1+\frac{d}{((\ln d)\lambda/2+1/4)}\right).$$ By our assumptions on $d, \lambda$, we get that
$|M|$ contains an induced matching of size at least $$\frac{n}{8d}\ln\left(1+\frac{d}{\lambda/2+1/4}\right).$$
On the other hand, if $|M| \leq n/\ln (d)$, then the graph induced on $M$, $G[M]$ has average degree at most $2d/\log d+\lambda$.
Therefore, by Lemma~\ref{lem:contraction}, if we contract all edges in $M$ the resulting graph which we denote by $G'[M]$ has average degree at most $4d/ \log d + 2\lambda$. Therefore, $G'[M]$ contains an independent set of size at least $\ell:=\frac{|M|}{4d/ \log d + 2\lambda+1}$. As we assume $\lambda=o(d^{0.9}),$ we have that in this case $M$ contains an induced matching of size at least $\ell$. It is easy to verify that in both cases, we get that $M$ contains an induced matching of size at least $\frac{|M|\log d}{16d}$, concluding the proof.
\end{proof}

Remark: an alternative way to establish the existence of $d$-regular multitaskers with $\alpha(n)=\Omega(\log d/d)$ is take any bipartite $d$-regular graph $G$ of girth at least 7 (e.g., graphs avoiding cycles of length smaller than $7$). Given an arbitrary matching $M$ in $G$, contracting the edges of $M$ results with triangle free graph. As such graphs are known to have an independent set of size $\Omega(\frac{\log d}{d}n)$ it immediately follows that $\alpha=\Omega(\frac{\log d}{d}).$ The drawback of this construction compared to our construction is that $d$ must be sublinear in $n$ in graphs of girth $7$, whereas in the expander based construction, $d$ can be of order $n$. The advantage of the girth construction is that it readily generalizes to depth $r$ networks by simply taking an $r$-partite graph of girth at least $3r+1$. This implies that we can have $\alpha=\Omega(\frac{\log d}{d}n)$ also in networks of depth $r$ so long as we are willing to have $d$ that is sublinear in $n$.

%%%%%%%%%%%%%%%%%%%%%%%%%%%%%%%%%%%%%%%%%%%%%%%%%%%%%%%%%%%%%%%
\subsection{The irregular case}
%%%%%%%%%%%%%%%%%%%%%%%%%%%%%%%%%%%%%%%%%%%%%%%%%%%%%%%%%%%%%%%

We complement the results above
by providing graphs with average degree $\log \log n$ that are $\alpha$-multitasker
for $\alpha>0$ being a constant independent of $n$.
We start with the following somewhat surprising lower bound.

\begin{theorem}\label{t51}
There exists a bipartite graph $G$ with $n$ vertices in each vertex
class and average degree  at least $\frac{1}{8} \log \log n$
which is a $\frac{1}{20}$-multitasker.
That is, for any integer $k=1,\dots, n$, any matching of size $k$ in $G$ contains an induced
matching of size at least $k/20$.
\end{theorem}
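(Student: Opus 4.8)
For a matching $M$ in $G$, let $H(M)$ be the \emph{conflict graph} on the edge set of $M$ in which two edges are joined whenever some third edge of $G$ connects them; an induced sub-matching of $M$ is exactly an independent set of $H(M)$, so it suffices to guarantee that every $H(M)$ has an independent set of size $\ge |M|/20$. The generic tool used throughout the paper --- contract $M$ in $G[V(M)]$ and apply \Cref{lem:contraction} and \Cref{lem:turan_independent_set} --- only gives an independent set of size $\ge |M|/(2d_M-1)$, where $d_M$ is the average degree of $G[V(M)]$; this is useless once $d_M$ is super-constant, and in a $d$-regular graph the perfect matching $M$ has $G[V(M)]=G$, so $d_M=d$. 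Hence the construction must be \emph{highly irregular}, and moreover designed so that no matching $M$ can ``expose'' a dense (near-regular) induced subgraph $G[V(M)]$. The convenient target is therefore the \emph{light matching edge} property: for every matching $M$ the graph $G[V(M)]$ contains an edge $(u,v)\in M$ with $\deg_{G[V(M)]}(u)+\deg_{G[V(M)]}(v)\le c$ for an absolute constant $c$. If this holds, then iteratively picking such an edge, adding it to the induced matching and deleting it together with the at most $c$ edges of $M$ conflicting with it shows $H(M)$ is $c$-degenerate, hence has an independent set of size $\ge |M|/(c+1)$; taking $c=19$ yields a $\tfrac1{20}$-multitasker. (Crucially, unlike ``$d_M$ bounded for all $M$'', the light-edge property does \emph{not} force $G$ to have bounded average degree, since only one light edge per matchable induced subgraph is demanded.)

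\textbf{Reduction to a gadget.} It suffices to build a single bipartite \emph{gadget} $F$ on $N$ vertices per side that is a $\tfrac1{20}$-multitasker with average degree at least $\Omega(\log\log N)$. Then take $G$ to be the disjoint union of $n/N$ copies of $F$: it has $n$ vertices per side, the same average degree, and since the copies are vertex-disjoint an induced matching chosen inside each copy is induced in $G$; stitching these together shows $G$ is again a $\tfrac1{20}$-multitasker, while $\log\log n\ge\log\log N$ gives the degree bound. Note the freedom to take arbitrarily many copies means $F$ only needs a \emph{constant} average degree $D$ (then $n$ can be chosen so that $\tfrac18\log\log n \le D$). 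For matchings of size up to roughly $\mathrm{girth}(F)/2$ the light-edge property is automatic, since $F[V(M)]$ is then a forest and \Cref{lem:forest} applies; the whole difficulty is in the larger matchings.

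\textbf{The main obstacle: constructing and analyzing $F$.} This is the step I expect to be hard. The construction I would attempt is a layered one with $t=\Theta(\log\log N)$ scales: a bounded-degree, large-girth bipartite graph at each scale, glued so that every high-degree vertex is ``shadowed'' by sparse structure at a finer scale, and so that in any matching $M$ the edges living at the coarsest scale can be peeled off through light edges, reducing $M$ to a matching supported at finer scales; the light-matching-edge property then follows by induction on the number of scales. Making the gluing precise requires simultaneously ensuring that (i) the average degree genuinely climbs to $\Theta(\log\log N)$ (so the layers must share vertices, not be vertex-disjoint); (ii) no matching $M$ of any size can concentrate on a dense, nearly-regular induced subgraph $G[V(M)]$ --- this is where one needs a union bound over all matchings of a given size together with girth-based edge counting of the type used in the large-girth constructions and in \Cref{lmm:chernoff}; and (iii) each peeling step increases the conflict-degree of the selected edge only by a bounded additive amount, so that the final constant is $19$ rather than something growing with $t$. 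The verification of (ii) for \emph{every} matching at \emph{every} scale simultaneously, with the geometry of the gluing chosen to keep the additive loss in (iii) bounded, is the technical heart of the theorem.
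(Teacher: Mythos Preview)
Your proposal has both an error and a genuine gap. The error is in the ``reduction to a gadget'': building an $F$ on $N$ vertices per side with average degree $\Omega(\log\log N)$ that is a $\tfrac1{20}$-multitasker \emph{is} the original problem, so nothing is reduced; and your follow-up remark that a constant-degree gadget suffices (``then $n$ can be chosen so that $\tfrac18\log\log n\le D$'') trivializes the theorem. The statement is meant for all large $n$, so the average degree must genuinely grow with $n$; disjoint copies of a constant-degree graph still have constant average degree, and bounding $n$ by $2^{2^{8D}}$ removes all content.

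The gap is that you never actually construct anything. Your light-matching-edge property is a valid sufficient condition, and your instinct that the graph must be highly irregular is correct, but the entire substance of the theorem is the existence of such a graph; you only sketch a layered construction and concede that verifying it is ``the technical heart'' without supplying an argument or even a candidate for the union bound you mention. The paper takes a different and much shorter route: it invokes a known construction (from~\cite{Al}, following~\cite{PRS}) with sides $A,B$ in which every $A$-vertex has degree exactly $M=\tfrac18\log\log n$, every non-isolated $B$-vertex has degree $>1000M$, and --- the decisive structural property --- every subgraph of average degree $\ge 10$ contains a vertex of degree $\ge 1000M$. Given a matching of size $k$, the induced subgraph $H$ on its $2k$ endpoints has at most $Mk$ edges (count from the $A$-side). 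While $H$ has average degree $\ge 10$, remove a $B$-vertex of degree $\ge 1000M$ together with its matched partner; since $H$ had only $Mk$ edges, at most $k/1000$ pairs are removed before the average degree drops below $10$. The surviving matching of size $\ge 0.999k$ now lives in a subgraph of average degree $\le 10$, and the contraction/Tur\'an step you yourself describe finishes: the conflict graph has average degree $\le 18$, so there is an induced matching of size $\ge 0.999k/19>k/20$. Thus the missing idea is not a light edge in every matching, but the opposite extremity: every moderately dense subgraph contains an \emph{extremely heavy} vertex, which lets one peel down to bounded average degree at negligible cost to the matching.
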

The constant $1/20$ above can be improved, as we show
in \Cref{t52}. We first present a short proof without trying
to optimize the constants.
Note that in view of Theorem
\ref{t43} if the average degree is significantly bigger than $\log n$
then the graph cannot be an $\Omega(1)$ multitasker. It will be
interesting to decide whether or not the $\Omega(\log \log n)$ lower bound
for the
average degree above can be improved to $\Theta(\log n)$.
\begin{proof}
We use the following result, proved in \cite{Al}, Theorem 2.1,
using the method of \cite{PRS}:
\vspace{0.1cm}

\noindent
For every positive integer $M$ and all sufficiently large
$n>n_0(M)$
there exists a bipartite graph $G$
with vertex classes $A$ and $B$, satisfying the following
properties.
\vspace{0.1cm}

\noindent
(i) $|B| \leq |A| =n$.
\vspace{0.1cm}

\noindent
(ii) Every vertex of $A$ has degree $M$ and every vertex of $B$ has
degree larger than $1000 M$.
\vspace{0.1cm}

\noindent
(iii) Every subgraph of $G$ with average degree at least $10$
contains
a vertex of degree at least $1000M$.
\vspace{0.2cm}

\noindent
By examining the proof in \cite{Al} it is not difficult to check
that it works for $M=\frac{1}{8} \log_2 \log_2 n$.
\vspace{0.2cm}

\noindent
Let $G$ be the above graph, with $M=\frac{1}{8} \log_2 \log_2 n$,
after adding to $B$ isolated vertices to make its cardinality
equals that of $A$. Consider now an arbitrary matching
in $G$, and let $a_1b_1, a_2b_2, \ldots ,a_kb_k$ be its edges,
where $a_i \in A$ and $b_i \in B$. Let $H$ be the induced subgraph
of $G$ on the $2k$ vertices $a_i,b_j$. Note that every vertex
$a_i$ has degree at most $M$ in $H$, as this is its degree in $G$,
by property (ii) of $G$. Thus $H$ has at most $Mk$ edges.
As long as the average degree in $H$ is at least $10$, it contains
a vertex $b_i$ of degree at least $1000M$, by property (iii).
In this case we omit $b_i$ and the vertex matched to it $a_i$.
Note that this process cannot omit more than $k/1000 $ pairs of
vertices, as
the total number of edges in $H$ is at most $Mk$. Thus this process
terminates with a matching of size at least $0.999k$ so that the
average degree of its vertices is at most $10$. Consider the graph
$F$ whose vertices are the edges of this matching, where two are
adjacent iff there is an edge connecting them. Then the average
degree in this graph is at most $18$, and hence it contains an
independent set of size at least $0.999k/19 >k/20$. This gives an
induced matching of the required size, completing
the proof.
\end{proof}

Next we show that the constant $1/20$ above can be improved to
nearly $1/3$.
\begin{theorem}\label{t52}
For any fixed small $\epsilon>0$ and large $n$ there
exists a bipartite graph $G'$ with $n$ vertices in each vertex
class and average degree  $(1-o(1))\frac{ \log \log n}{4 \log (10/\epsilon)}$
which is a $(\frac{1}{3}-\epsilon)$-multitasker.
That is, for any integer $s = 1,\dots, n$, any matching of size $s$ in $G$ contains an induced
matching of size at least $(\frac{1}{3}-\epsilon)s$.
\end{theorem}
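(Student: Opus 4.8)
The plan is to re-run the argument from the proof of \Cref{t51} essentially verbatim, but feeding it a sharper instance of the construction of \cite{Al} (Theorem~2.1 there, which uses the method of \cite{PRS}): one in which the two slack constants appearing in that proof (the ``$10$'' in property~(iii) and the ``$1000$'' in properties~(ii)--(iii)) are tuned in terms of $\epsilon$. Concretely, I would first establish the following variant: for a parameter $M$ and all sufficiently large $n>n_0(M,\epsilon)$ there is a bipartite graph $G$ with vertex classes $A,B$, $|B|\le|A|=n$, such that (i)~every vertex of $A$ has degree exactly $M$; (ii)~every vertex of $B$ has degree larger than $D:=(10/\epsilon)M$; and (iii)~every subgraph of $G$ of average degree at least $2+\delta$ contains a vertex of degree at least $D$, where $\delta=\delta(\epsilon)>0$ is a small constant (taking $\delta=\epsilon$ suffices below). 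Padding $B$ with isolated vertices so that $|B|=|A|=n$ produces a graph of average degree exactly $M$.

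Given such a $G$, the rest follows the template of \Cref{t51}. Fix an arbitrary matching with edges $a_1b_1,\dots,a_sb_s$ ($a_i\in A$, $b_i\in B$), and let $H=G[\{a_i,b_i:i\le s\}]$; by~(i), $H$ has at most $Ms$ edges. As long as the current graph has average degree at least $2+\delta$, property~(iii) supplies a vertex of degree at least $D$; since by~(i) every $A$-vertex has degree at most $M<D$ in every subgraph, that vertex must be some $b_i$, and we delete it together with its mate $a_i$. Each deletion removes at least $D$ edges, so at most $Ms/D=(\epsilon/10)s$ pairs are ever deleted, and the process terminates with a sub-matching $M'$ of size $s':=|M'|\ge(1-\epsilon/10)s$ whose induced subgraph $H'=G[M']$ has average degree below $2+\delta$.

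Now contract the $s'$ edges of $M'$ inside $H'$. By \Cref{lem:contraction} the resulting graph has average degree below $2(2+\delta)-2=2+2\delta$, so by Tur\'an's bound (\Cref{lem:turan_independent_set}) it has an independent set of size at least $s'/(3+2\delta)$, which corresponds to an induced matching contained in $M'\subseteq M$ of that size. Since $s'\ge(1-\epsilon/10)s$, a small enough choice of $\delta$ (in terms of $\epsilon$; $\delta=\epsilon$ works) gives $\frac{s'}{3+2\delta}\ge(\tfrac13-\epsilon)s$. As the matching was arbitrary, $G$ is a $(\tfrac13-\epsilon)$-multitasker, and its average degree is $M$.

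The only genuinely new work, and the main obstacle, is the first step: opening up the recursion of \cite{Al}/\cite{PRS} and verifying that it can simultaneously deliver (a)~the near-$2$ average-degree threshold in~(iii), (b)~the $B$-to-$A$ degree ratio $10/\epsilon$, and (c)~a value of $M$ as large as $(1-o(1))\frac{\log\log n}{4\log(10/\epsilon)}$. This amounts to re-examining how the per-level parameters of that construction degrade as the threshold is driven down toward $2$ and the ratio up toward $10/\epsilon$, and checking that one can still afford $(1-o(1))\frac{\log\log n}{4\log(10/\epsilon)}$ levels; everything after that is the bookkeeping above, differing from \Cref{t51} only in the constants plugged in.
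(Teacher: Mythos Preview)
The arithmetic after your black box is fine, but the black box itself is essentially the whole theorem, and the paper does not establish it in the uniform form you posit. You need that every (balanced) subgraph of average degree $\ge 2+\delta$ contains a vertex of subgraph-degree $\ge D$; this is driving the threshold in (iii) down to the tree threshold, and it is not just ``re-examining how the per-level parameters degrade''. What the paper actually proves (its Claim~2) is only a scale-dependent version: for each layer index $i$ and each $s\le(10/\epsilon)|B_i|$, a subgraph with $s$ vertices in $A$ and $s$ vertices in $\bigcup_{j<i}B_j$ has fewer than $(2+\epsilon/4)s$ edges. The restriction to $\bigcup_{j<i}B_j$ is the point, and it is precisely why the paper must perform a case split on $s$ and first peel off the matching edges landing in the ``wrong'' layers before any sparsity bound applies. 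Your uniform (iii) would have to subsume all scales at once, and you give no argument that the layered construction delivers this.

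Consequently the paper's $\tfrac13$ arises by a different mechanism than yours. Given a matching of size $s$, it fixes the scale $i$ with $(10/\epsilon)|B_{i+1}|<s\le(10/\epsilon)|B_i|$. The edges with $B$-endpoint in $B_i$ already form an induced matching (each $a\in A$ has a unique neighbor in $B_i$); if there are $\ge s/3$ of them, done. Otherwise $\ge 2s/3$ edges remain with $B$-endpoints in $\bigcup_{j<i}B_j$, where Claim~2 supplies the near-$2$ local sparsity, and a separately enforced high-girth property (Claim~1: $G'$ is obtained from $G$ by deleting one edge from every short cycle) lets the paper extract an independent set of size nearly $\tfrac12$ of those $2s/3$ edges, rather than the $\tfrac{1}{3+2\delta}$ you get from Tur\'an. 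So the paper's $\tfrac13$ is $\min(\tfrac13,\tfrac12\cdot\tfrac23)$, whereas yours would be $1/(3+o(1))$; the paper compensates for the weaker, layered density statement with a girth-based independent-set extraction that your outline does not have and does not need---but only because you have assumed the harder uniform density statement instead of proving it.
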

\begin{proof}
Define $T=\frac{10}{\epsilon}$, $t=\frac{1}{4} \log_T \log_2 n$. Let
$A$ be a set of $n$ vertices, and for each $1 \leq i \leq t$, let
$B_i$ be a set of
$$
\frac{n}{2^{\sqrt {\log n} T^i}}
$$
vertices, where  all sets $A,B_i$ are pairwise disjoint. Let $B$ be
the union of all sets $B_i$ together with $n-\sum_{i=1}^t |B_i|$
additional isolated vertices.  $A$ and $B$ are the two vertex
classes of a bipartite graph
$G$. Each vertex $a \in A$ has one random neighbor in
each set $B_i$ $(1 \leq i \leq t)$, where all choices are uniform
and independent. Thus the degree of every vertex $a \in A$ is
exactly $t$ and hence this is also the average degree of $G$. Our
graph $G'$ will be a spanning subgraph of $G$ obtained by deleting
an edge from each short cycle. We first observe that with high
probability $G$ does not contain too many short cycles.
\vspace{0.2cm}

\noindent
{\bf  Claim 1:}\, With high probability, the number of
cycles of length at most
$10/\epsilon$ in $G$ is $o(n)$.
\vspace{0.2cm}

\noindent
{\bf Proof:}\,
Note, first, that by construction, for every $m$ and every
collection of $m$ potential
edges between the vertices of $G$, the probability that all these
are indeed edges of $G$ is at most
$$
(\frac{1}{|B_t|} )^m =(\frac{1}{n^{1-o(1)}})^m.
$$
(For some such collections of edges, for example ones that contain
at least two neighbors of some $a \in A$ in the same set $B_i$,
the probability is zero, but for any collection the above upper
bound applies). Thus, the probability that there exists
a cycle of length at most $\frac{10}{\epsilon}$ in $G$ is smaller than
$$
\sum_{s=2}^{5/\epsilon} n^{2s} (\frac{1}{n^{1-o(1)}})^{2s}
< 2n^{10 \cdot o(1)/\epsilon} = n^{o(1)}.
$$
The assertion of the claim follows from Markov's Inequality.
\vspace{0.2cm}

\noindent
{\bf Claim 2:}\, The following holds with high probability. For
every $1 <i  \leq t$ and every $s$ satisfying
$s \leq \frac{10}{\epsilon} |B_i|$, the number of edges in any induced
subgraph of $G$ with $s$ vertices in $A$ and $s$ vertices
in $\cup_{j=1}^{i-1} B_j$ is smaller than $(2+\epsilon/4)s$.
\vspace{0.2cm}

\noindent
{\bf Proof:}\,
By the choice of parameters,
$$
\frac{|B_{i-1}|}{|B_i|} =2^{\sqrt {\log n} T^i (1-1/T)}
=(\frac{n}{|B_i|})^{1-\epsilon/10}.
$$
Therefore,
$$
\frac{1}{|B_{i-1}|} =\frac{1}{|B_i|} (\frac{|B_i|}{n})^{1-\epsilon/10}
$$
$$
=(\frac{1}{|B_i|})^{\epsilon/10} (\frac{1}{n})^{1-\epsilon/10}
\leq (\frac{10}{s\epsilon})^{\epsilon/10} (\frac{1}{n})^{1-\epsilon/10}.
$$
Therefore, the probability that there is a subgraph of $G$
with $s$ vertices in $A$, $s$ vertices in $\cup_{j=1}^{i-1} B_j$
and at least $(2+\epsilon/4)s$ edges is at most the following:
$$
{n \choose s}^2 {{s^2} \choose {(2+\epsilon/4)s}}
(\frac{1}{|B_{i-1}|})^{(2+\epsilon/4)s}
\leq
(\frac{en}{s})^{2s} (\frac{es}{2})^{(2+\epsilon/4)s}
(\frac{1}{|B_{i-1}|})^{(2+\epsilon/4)s}
$$
$$
\leq (\frac{en}{s})^{2s} (\frac{es}{2})^{(2+\epsilon/4)s}
[(\frac{10}{s\epsilon})^{\epsilon/10}
(\frac{1}{n})^{1-\epsilon/10}]^{(2+\epsilon/4)s}
$$
$$
\leq
[e^2 (\frac{e}{2})^3 (\frac{10}{\epsilon})^{\epsilon/10(2+\epsilon/4)}
n^{2-(1-\epsilon/10)(2+\epsilon/4)} s^{\epsilon/4 -\epsilon/10(2+\epsilon/4)}]^s
$$
$$
\leq [32(\frac{s}{n})^{\epsilon/20-\epsilon^2/40}]^s
< (\frac{32}{2^{\sqrt {\log n}}})^s,
$$
where here we used the fact that $s \leq \sum_{i=1}^t |B_i|
< \frac{n}{2^{\sqrt
{\log n}}}$.

Summing over  all possible values of $s$ and $i$ we get
$t \cdot O(2^{-\sqrt {\log n}})=o(1)$, completing the proof of the claim.

Fix a  graph $G$ satisfying the assertions of Claims 1 and 2. Let
$G'$ be a graph obtained from $G$ by removing an arbitrary edge
from each cycle of length at most $10/\epsilon$ in $G$. Then $G'$ has
$n$ vertices in each vertex class, and has average degree
$(1-o(1))\frac{\log \log n}{4 \log T}$. To complete the proof we
show that for every $s$, every matching $M$ of size $s$ in $G'$
contains an induced matching (induced in $G'$) of size at least
$(\frac{1}{3}-\epsilon)s$.  We consider two possible cases.
\vspace{0.1cm}

\noindent
{\bf Case 1:}\, $s \leq \frac{10}{\epsilon} |B_t|$.  If
$M$ contains at least $s/3$ edges with endpoints in $B_t$, then
these edges form an induced matching, since every vertex of $A$ has
at most one neighbor in $B_t$ (exactly one neighbor in $G$ and
hence at most one in $G'$). Thus in this case there is an induced
matching of size at least $s/3$. If not, then omit all the edges
containing a vertex in $B_t$. Let $F$ be the following auxiliary
graph. Its vertices are the $g \geq 2s/3$ remaining edges of  the
matching and two are connected if there is an edge of $G'$
connecting the corresponding edges. We have to show that
$F$ contains an independent set on nearly half its vertices.
As Claims 1 and 2 hold, the girth of $F$ is at least
$5/\epsilon$ and for any $p$, any set of $p$ of its vertices spans at
most $(1+\epsilon/4)p$ edges. Construct an independent set in $F$ as
follows. As long as it contains a vertex of degree at most $1$ put
it in the independent set and omit it and its unique neighbor (if
the degree was $1$) from $F$. Suppose that this process stops with
$q$ vertices (hence the independent set so far has at least
$(g-q)/2$ vertices). If $q=0$ we are done, as the independent set
has at least $s/3$ vertices. Otherwise, in the induced subgraph of
$F$ on the remaining $q$ vertices the minimum degree is at least
$2$ and the average degree is at most $2+\epsilon/2$. Hence it contains
at most $\epsilon q/2$ vertices of degree at least $3$. Omit these
vertices. The remaining graph is a union of paths and cycles, which may
contain odd cycles, but all cycles in it are of length at least
$5/\epsilon$. Therefore this part contains an independent set of size
at least $\frac{1}{2}(1-\epsilon/5)(1-\epsilon/2)q$ which together with the
$(g-q)/2$ vertices obtained in the initial process supply an
independent set of size at least
$$
\frac{2s}{3}\frac{1}{2}(1-\epsilon/2)(1-\epsilon/5)
>(1/3-\epsilon)s,
$$
as needed.
\vspace{0.1cm}

\noindent
{\bf Case 2:}\, $s > \frac{10}{\epsilon} |B_t|$.  Note that $s \leq
\sum_{i=1}^t |B_i| =(1+o(1))|B_1|< \frac{10}{\epsilon} |B_1|$.
Choose $i$ so that
$$
\frac{10}{\epsilon} |B_{i+1}|< s \leq \frac{10}{\epsilon} |B_i|.
$$
Thus  $1 \leq i <i+1 \leq t$.
Note, first, that the number
of edges of $M$ containing a vertex from $\cup_{j>i} B_j$ is
at most $\cup_{j>i} |B_j| =(1+o(1))|B_{i+1}| \leq
(1+o(1))\frac{\epsilon}{10} s$. Omit the edges of the matching containing these
vertices and proceed as before. If there are at least, say,
$(1/3-\epsilon)s$ edges of the matching containing a vertex from
$B_i$ (a condition that holds automatically if $i=1$),
these edges form an induced matching and the desired result
follows. Else omit these edges and construct the graph $F$ whose
vertices are the remaining edges of the matching (there are at
least $2s/3$ of them), where two are adjacent iff there is an edge
of $G'$ connecting them. This graph has girth at least $5/\epsilon$ and
for every $p$, any set of $p$ of its vertices spans at most
$(1+\epsilon/4)p$ edges. Thus it contains an independent set on
at least a fraction of $\frac{1}{2}(1-\epsilon/2)(1-\epsilon/5)$ of its vertices,
completing the proof for this case and hence also the proof of
the theorem.
\end{proof}
\vspace{0.2cm}

\noindent
{\bf Remark:}\, The graph $G'$ constructed in the proof of Theorem
\ref{t52} does not have a perfect matching, and in fact
has many isolated vertices in the set $B$. It is easy to modify it
and construct a bipartite graph $G''$ which is a
$(\frac{1}{4}-\epsilon)$-multitasker with average degree
$\Omega(\log \log n)$ and contains a perfect matching. Indeed,
the construction of $G$ implies that with high probability each
vertex in $B'= \cup_{i=1}^t B_i$  has degree (much) bigger than
$t$, which is the degree of each vertex of $A$. Therefore, by
Hall's Theorem, $G$ contains a matching saturating all vertices of
$B'$. When constructing $G'$ from $G$ by omitting an edge from
each short cycle, keep all edges of this matching (by simply
omitting an edge not in this matching from each short cycle).
Now add a perfect matching from the vertices in
$B-B'$ (that is, the isolated vertices in $B$) to the unsaturated
vertices in $A$. The resulting graph, call it $G''$, contains a
perfect matching. In addition, it is a
$(\frac{1}{4}-\epsilon)$-multitasker. To see this note that all newly added
edges form an induced matching in $G''$, as their $B$-vertices are
of degree $1$. Thus if at least $1/4$ of the edges of a given
matching $M$ are among the new edges, we get an induced
matching of size at least $|M|/4$. Otherwise, at least
$3|M|/4$ of the edges of $M$ belong to the graph $G'$, and hence
contain an induced matching of size at least
$(\frac{1}{3}-\epsilon)\frac{3|M|}{4} > (\frac{1}{4}-\epsilon)|M|$.
\vspace{0.2cm}

\noindent
Finally we mention the following result for graphs with average
degree $\Theta(\log n)$.
\begin{proposition}
\label{p53}
There exists an absolute positive constant $c$ and
a bipartite graph $G$ with $n$ vertices in each vertex
class and average degree  at least $2 \log(2n)$
which is a $c/\sqrt{\log k}$ multitasker, that is, for any integer
$k$, any matching of size $k$ in $G$ contains an induced
matching of size at least $c k/\sqrt{\log k}$.
\end{proposition}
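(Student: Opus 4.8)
The plan is to follow the same two-layer recipe used in \Cref{t51} and \Cref{t52}: first exhibit a bipartite graph whose edge count on every vertex-induced subgraph is controlled, so that after contracting any matching the auxiliary conflict graph is sparse, and then apply a bound on independent sets in sparse (here, locally sparse) graphs. The target $\alpha(k)=\Omega(1/\sqrt{\log k})$ is weaker than a constant, so the degree we can afford is correspondingly larger — we aim for average degree $\Theta(\log n)$. Concretely, I would take $G$ to be a random (or explicit) bipartite graph with each vertex of $A$ choosing roughly $\log(2n)$ neighbours, designed so that it contains no short cycles and so that every induced subgraph on $s$ vertices spans at most $(1+o(1))s$ edges, i.e.\ the contracted conflict graph is essentially a forest plus few extra edges and has girth growing with $n$.

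The key steps, in order, are: (1) Construct $G$ with vertex classes of size $n$, average degree $\geq 2\log(2n)$, girth $\Omega(\log\log n)$ (or at least super-constant), and the property that any vertex-induced subgraph with $s$ vertices has at most $(1+\epsilon)s$ edges — this is the same kind of ``locally sparse'' control as in Claim~2 of \Cref{t52}, obtained via a first-moment/union-bound argument over all subgraph sizes $s$. (2) Given an arbitrary matching $M$ of size $k$, form the auxiliary conflict graph $F$ on the $k$ edges of $M$, with two edges adjacent in $F$ iff some edge of $G$ connects them; by step~(1) and \Cref{lem:contraction}, $F$ has at most $(2+o(1))k$ edges, hence average degree at most $4+o(1)$, and moreover $F$ inherits the property that every $p$-vertex subset spans $O(p)$ edges and that $F$ has no short cycles. (3) Bound the independence number of $F$ from below by $ck/\sqrt{\log k}$. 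Here the naive Turán bound (\Cref{lem:turan_independent_set}) only gives $\Omega(k)$, which is actually \emph{stronger} than $ck/\sqrt{\log k}$ — so in fact for \emph{this} range the proposition follows immediately, and the interesting content must be that the \emph{degree} is allowed to be as large as $\Theta(\log n)$ while keeping $F$ this sparse. I would therefore double-check the intended reading: the point is that with average degree $2\log(2n)$ one cannot keep $F$ of bounded average degree for \emph{all} $k$ (since a $k$-matching can hit a dense spot), but one \emph{can} keep the local density of $F$ at $O(\sqrt{\log k})$, via an argument tailored to how matching edges distribute among the ``layers'' $B_i$ of a layered construction as in \Cref{t52}, scaled up by a factor of $\sqrt{\log k}$ in the degree budget.

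The main obstacle will be step~(1)/(3): getting the construction to have average degree $\Theta(\log n)$ — genuinely larger than the $\Theta(\log\log n)$ of \Cref{t51},\Cref{t52} — while still forcing every $k$-matching's conflict graph to have independence number $\Omega(k/\sqrt{\log k})$. The layered construction of \Cref{t52} uses $t=\Theta(\log\log n)$ layers precisely because each layer must be polynomially smaller than the previous one; to reach $\Theta(\log n)$ layers one must allow the layer sizes $|B_i|$ to shrink only by sub-polynomial factors, which weakens the union bound in Claim~2 and forces the local-density guarantee to degrade from $(2+\epsilon)s$ to roughly $(2+\Theta(\sqrt{\log k}))s$ — i.e.\ the conflict graph $F$ becomes a graph of average degree $O(\sqrt{\log k})$ rather than $O(1)$, whence \Cref{lem:turan_independent_set} yields exactly the claimed $\Omega(k/\sqrt{\log k})$ induced matching. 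Pinning down the precise layer-size schedule that balances the union bound against this density loss, and verifying the resulting computation, is where the real work lies; everything else is a direct transcription of the machinery already developed for \Cref{t52}.
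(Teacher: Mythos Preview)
Your high-level plan---layered construction, control the edge count on every induced subgraph, contract the matching, apply Tur\'an---is indeed the template the paper has in mind (the paper omits the proof, saying only that it is ``similar to the previous one, using the assertion and proof of Theorem 2.1 in \cite{Al}''). But there is a real gap at the one place you flag as ``where the real work lies'', and your guess for how it resolves is off.

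You assert that by taking $\Theta(\log n)$ layers with sub-polynomially shrinking sizes, the Claim~2 union bound from \Cref{t52} degrades from $(2+\epsilon)s$ to $(2+\Theta(\sqrt{\log k}))s$. This is reverse-engineered from the target, and the natural schedule does \emph{not} give it. With $t=\Theta(\log n)$ layers $|B_i|=n/2^i$ and one random neighbour per layer, an adversary can place the $k$ matched $B$-vertices greedily into the smallest $\approx\log k$ layers (filling each $B_i$ with $i\geq \log(n/k)$); then every matched $A$-vertex has in expectation one neighbour in each of these $\log k$ layers, so the induced subgraph on the $2k$ matched vertices has $\Theta(k\log k)$ edges. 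No Claim~2-style union bound can beat this expectation, so your route as written yields only $\alpha(k)=\Omega(1/\log k)$, a $\sqrt{\log k}$ factor short.

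What the paper actually points to is the \Cref{t51} mechanism rather than the \Cref{t52} one: the graph from \cite{Al} has the structural property that every subgraph whose average degree exceeds a certain threshold must contain a vertex of degree $\Omega(M)$. For $M=\Theta(\log n)$ this threshold is no longer the constant $10$; examining the proof in \cite{Al} shows it grows like $\sqrt{\log s}$ on $s$-vertex subgraphs. One then runs the pruning argument of \Cref{t51}: repeatedly delete a vertex of degree $\geq cM$ (together with its matched partner) while the average degree exceeds $C\sqrt{\log k}$; since the total edge count is at most $Mk$, at most $k/c$ pairs are removed, and what remains has average degree $O(\sqrt{\log k})$, whence Tur\'an finishes. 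Your proposal never invokes this high-degree-vertex property---which is precisely the content of ``the assertion and proof of Theorem 2.1 in \cite{Al}''---and without it the direct density bound you sketch cannot reach $\sqrt{\log k}$.
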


The proof is similar to the previous one, using the assertion and
proof of Theorem 2.1 in \cite{Al}. We omit the details.

%%%%%%%%%%%%%%%%%%%%%%%%%%%%%%%%%%%%%%%%%%%%%%%%%%%%%%%%%%%%%%%
\section{Conclusions}
%%%%%%%%%%%%%%%%%%%%%%%%%%%%%%%%%%%%%%%%%%%%%%%%%%%%%%%%%%%%%%%
The limited ability to perform multiple tasks at the same time is one of the most salient and defining
characteristics of human cognition. Despite this fact, parallel processing capabilities of neural systems
remain largely unexplored.
We have considered a new multitasking measure for parallel architectures that is aimed at providing quantitative measures for such capabilities.
%This measure requires that any set of independent tasks (a matching) contains a large subset that can be multitasked (an induced matching).
%Our graph theoretic measure generalizes and extends recent works in cognitive neuroscience. The graph-theoretic model that we study is rather general and we believe it may be
%applicable to designing parallel architectures that arise in diverse settings.
We established an inherent tradeoff between the density of the network and its multitasking capacity that holds for every graph that is sufficiently dense.
This tradeoff is rather general and it applies to regular graphs, to irregular graphs and to layered networks of depth greater than $2$. We have also obtained quantitative
insights. For example, we have shown that our upper bound on multitasking capacity is tight for regular graphs and tasks sets of size $n$, provided evidence that interference increases as depth increases from $2$ to $r>2$ and demonstrated that irregular graphs allow for better multitasking than regular graphs for certain edge densities. Our findings are also of interest to
recent effort in cognitive neuroscience to pinpoint the reason for the stark limitations people experience in multiasking control demanding tasks. While our graph-theoretical model is very far from modeling real biological networks, it appears that establishing multitasking limitations for such simple models is necessary before we can address more complicated settings.

We have also considered network architectures that reduce interference and found that networks with pseudorandom properties (locally sparse, spectral expanders, graphs with high girth)
have good multitasking capabilities. Interestingly, previous works have documented the benefits of random and pseudorandom architectures in deep learning, Hopfield networks and other settings \cite{arora2014provable,valiant2000circuits,komlos1988convergence}. Whether there is an underlying cause for these results remains an interesting direction for future research.

Our work is still limited in several aspects.
First, our model is graph-theoretic in nature, focusing exclusively on the adjacency structure of tasks and does not consider many parameters that emerge in biological and artificial parallel architectures. Second, we do not address tasks of different weights (assuming all tasks have the same weights), stochastic and probabilistic interference (we assume interference occurs with probability 1) and the exact implementation of the functions that compute the tasks represented by edges.In sum, while we hope that we have convinced the reader that our graph theoretic approach already captures interesting issues of multitasking, and entails nontrivial observations, to achieve a greater realism and predictive value, one will need to go beyond the graph theoretic structure and consider other parameters that arise in neural networks.

%Our mathematical treatment of upper bounding the multitasking capacity is necessary, as it is impractical to rule out the existence of dense networks with good multitasking properties by simulations that examine all possible network architectures. It would be of interest to demonstrate the utility of our proposed networks with good multitasking parameters in a real or simulated parallel architectures. It can also be worthwhile to examine how our multitasking measures apply to more complicated scenarios than the one discussed here such as multitask learning.

To summarize, the work we have presented here takes an important step towards laying the foundations for a deeper understanding of the factors that affect the tension between
efficiency of representation, and flexibility of processing in neural network architectures. We hope
that this will help inspire a parallel proliferation of efforts to further explore this area.
%%%%%%%%%%%%%%%%%%%%%%%%%%%%%%%%%%%%%%%%%%%%%%%%%%%%%%%%%%%%%%%

\bibliographystyle{alpha}
\bibliography{multi}

\newcommand{\etalchar}[1]{$^{#1}$}
\begin{thebibliography}{MDO{\etalchar{+}}16}

\bibitem[ABGM14]{arora2014provable}
Sanjeev Arora, Aditya Bhaskara, Rong Ge, and Tengyu Ma.
\newblock Provable bounds for learning some deep representations.
\newblock In {\em ICML}, pages 584--592, 2014.

\bibitem[AGS85]{amit1985storing}
Daniel~J Amit, Hanoch Gutfreund, and Haim Sompolinsky.
\newblock Storing infinite numbers of patterns in a spin-glass model of neural
  networks.
\newblock {\em Physical Review Letters}, 55(14):1530, 1985.

\bibitem[AKS99]{alon1999list}
Noga Alon, Michael Krivelevich, and Benny Sudakov.
\newblock List coloring of random and pseudo-random graphs.
\newblock {\em Combinatorica}, 19(4):453--472, 1999.

\bibitem[Alo13]{Al}
Noga Alon.
\newblock A note on degenerate and spectrally degenerate graphs.
\newblock {\em Journal of Graph Theory}, 72(1):1--6, 2013.

\bibitem[AMS12]{alon2012nearly}
Noga Alon, Ankur Moitra, and Benny Sudakov.
\newblock Nearly complete graphs decomposable into large induced matchings and
  their applications.
\newblock In {\em Proceedings of the forty-fourth annual ACM symposium on
  Theory of computing}, pages 1079--1090. ACM, 2012.

\bibitem[BLM93]{birk1993uniform}
Yitzhak Birk, Nathan Linial, and Roy Meshulam.
\newblock On the uniform-traffic capacity of single-hop interconnections
  employing shared directional multichannels.
\newblock {\em IEEE Transactions on Information Theory}, 39(1):186--191, 1993.

\bibitem[Bre73]{bregman1973some}
Lev~M Bregman.
\newblock Some properties of nonnegative matrices and their permanents.
\newblock In {\em Soviet Math. Dokl}, volume~14, pages 945--949, 1973.

\bibitem[CK85]{chlamtac1985broadcasting}
Imrich Chlamtac and Shay Kutten.
\newblock On broadcasting in radio networks--problem analysis and protocol
  design.
\newblock {\em IEEE Transactions on Communications}, 33(12):1240--1246, 1985.

\bibitem[Csi14]{csikvari2014lower}
P{\'e}ter Csikv{\'a}ri.
\newblock Lower matching conjecture, and a new proof of schrijver's and
  gurvits's theorems.
\newblock {\em arXiv preprint arXiv:1406.0766}, 2014.

\bibitem[Ego81]{egorychev1981solution}
Gregory~P. Egorychev.
\newblock The solution of van der waerden's problem for permanents.
\newblock {\em Advances in Mathematics}, 42(3):299--305, 1981.

\bibitem[Fal81]{falikman1981proof}
Dmitry~I Falikman.
\newblock Proof of the van der waerden conjecture regarding the permanent of a
  doubly stochastic matrix.
\newblock {\em Mathematical Notes}, 29(6):475--479, 1981.

\bibitem[FSGC14]{feng2014multitasking}
Samuel~F Feng, Michael Schwemmer, Samuel~J Gershman, and Jonathan~D Cohen.
\newblock Multitasking versus multiplexing: Toward a normative account of
  limitations in the simultaneous execution of control-demanding behaviors.
\newblock {\em Cognitive, Affective, \& Behavioral Neuroscience},
  14(1):129--146, 2014.

\bibitem[FW16]{feige2016generalized}
Uriel Feige and Tal Wagner.
\newblock Generalized girth problems in graphs and hypergraphs.
\newblock 2016.

\bibitem[KP88]{komlos1988convergence}
J{\'a}nos Koml{\'o}s and Ramamohan Paturi.
\newblock Convergence results in an associative memory model.
\newblock {\em Neural Networks}, 1(3):239--250, 1988.

\bibitem[KPR{\etalchar{+}}17]{kirkpatrick2017overcoming}
James Kirkpatrick, Razvan Pascanu, Neil Rabinowitz, Joel Veness, Guillaume
  Desjardins, Andrei~A Rusu, Kieran Milan, John Quan, Tiago Ramalho, Agnieszka
  Grabska-Barwinska, et~al.
\newblock Overcoming catastrophic forgetting in neural networks.
\newblock {\em Proceedings of the National Academy of Sciences}, page
  201611835, 2017.

\bibitem[LP09]{LP}
L{\'a}szl{\'o} Lov{\'a}sz and Michael~D Plummer.
\newblock {\em Matching theory}, volume 367.
\newblock American Mathematical Soc., 2009.

\bibitem[MC89]{mccloskey1989catastrophic}
Michael McCloskey and Neal~J Cohen.
\newblock Catastrophic interference in connectionist networks: The sequential
  learning problem.
\newblock {\em Psychology of learning and motivation}, 24:109--165, 1989.

\bibitem[MDO{\etalchar{+}}16]{Musslick2016a}
Sebastian Musslick, Biswadip Dey, Kayhan Ozcimder, Mostofa Patwary, Ted~L
  Willke, and Jonathan~D Cohen.
\newblock {Controlled vs. Automatic Processing: A Graph-Theoretic Approach to
  the Analysis of Serial vs. Parallel Processing in Neural Network
  Architectures}.
\newblock In {\em {Proceedings of the 38th Annual Meeting of the Cognitive
  Science Society (CogSci)}}, pages 1547--1552, August 2016.

\bibitem[Nei67]{neisser1967}
Ulrich Neisser.
\newblock {\em Cognitive psychology}.
\newblock Appleton-Century-Crofts, New York, 1967.

\bibitem[PRS95]{PRS}
L{\'a}szl{\'o} Pyber, Vojtech Rodl, and Endre Szemer{\'e}di.
\newblock Dense graphs without 3-regular subgraphs.
\newblock {\em Journal of Combinatorial Theory, Series B}, 63(1):41--54, 1995.

\bibitem[Pyb85]{Py}
Laszlo Pyber.
\newblock Regular subgraphs of dense graphs.
\newblock {\em Combinatorica}, 5(4):347--349, 1985.

\bibitem[Sch98]{schrijver1998counting}
Alexander Schrijver.
\newblock Counting 1-factors in regular bipartite graphs.
\newblock {\em Journal of Combinatorial Theory, Series B}, 72(1):122--135,
  1998.

\bibitem[Val00]{valiant2000circuits}
Leslie~G Valiant.
\newblock {\em Circuits of the Mind}.
\newblock Oxford University Press on Demand, 2000.

\end{thebibliography}

%%%%%%%%%%%%%%%%%%%%%%%%%%%%%%%%%%%%%%%%%%%%%%%%%%%%%%%%%%%%%%%
\appendix
%%%%%%%%%%%%%%%%%%%%%%%%%%%%%%%%%%%%%%%%%%%%%%%%%%%%%%%%%%%%%%%
\section{Appendix: Bounds on the number of $k$-matchings}\label{section:num_k_matching}
%%%%%%%%%%%%%%%%%%%%%%%%%%%%%%%%%%%%%%%%%%%%%%%%%%%%%%%%%%%%%%%
In this section we derive Corollary~\ref{eq:num_k_matchings} from Lemma~\ref{lmm:num_k_matchings}, which states that $M_k(G)  \geq {n\choose k}^2\left(1-\frac{k}{nd}\right)^{nd-k} \left(\frac{kd}{n}\right)^k$. We now bound the first two terms from below. For the first term (the binomial coefficient), we have
\[
  {n\choose k} =
  \frac{n(n-1)\ldots(n-k+1)}{k!} \geq
  \frac{(n-k)^k}{k^ke^{-k}\sqrt{2\pi k}} =
\]
\[
  \left(1-\frac{k}{n}\right)^k\cdot\left(\frac{en}{k}\right)^k\cdot\frac{1}{\sqrt{2\pi k}} \geq
  \left(\frac{1}{2e}\right)^{k^2/n}\cdot\left(\frac{en}{k}\right)^k\cdot\frac{1}{\sqrt{2\pi k}},
\]
where we use fact that $\left(1-\frac{k}{n}\right)^k \geq \left(\frac{1}{2e}\right)^{k^2/n}$
for all $k<n/2$. For the second term, we first use the following:
\begin{lemma}\label{lmm:approx1e}
For every $x\geq2$,  $(1-\frac{1}{x})^x \geq \frac{1}{e}-\frac{7}{6ex}$.
\end{lemma}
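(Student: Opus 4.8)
The plan is to pass to logarithms and control the error in the approximation $(1-1/x)^x\approx e^{-1}$ via the power series of $\ln(1-t)$. For $x\ge 2$ we have $t:=1/x\in(0,\tfrac12]$, so the series $\ln(1-t)=-\sum_{k\ge1}t^k/k$ converges, and
\[
x\ln\!\left(1-\tfrac1x\right) = -x\sum_{k\ge1}\frac{1}{k x^{k}} = -1 - \sum_{k\ge2}\frac{1}{k x^{k-1}} =: -1-u,
\]
where $u=u(x)\ge0$. Exponentiating and using $e^{-u}\ge 1-u$, we get $(1-1/x)^x=e^{-1}e^{-u}\ge e^{-1}(1-u)=\frac1e-\frac{u}{e}$. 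Hence it suffices to prove the elementary estimate $u\le\frac{7}{6x}$, since then $(1-1/x)^x\ge\frac1e-\frac{7}{6ex}$, which is the claim.

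To bound $u$, I would split off the leading term and dominate the remaining tail by a geometric series: replacing each denominator factor $k\ge3$ by $3$,
\[
u = \frac{1}{2x} + \sum_{k\ge3}\frac{1}{k x^{k-1}} \le \frac{1}{2x} + \frac13\sum_{k\ge3}\frac{1}{x^{k-1}} = \frac{1}{2x} + \frac{1}{3x(x-1)},
\]
the geometric sum being valid since $1/x<1$. For $x\ge2$ we have $x-1\ge1$, so $\frac{1}{3x(x-1)}\le\frac{1}{3x}$, giving $u\le\frac{1}{2x}+\frac{1}{3x}=\frac{5}{6x}\le\frac{7}{6x}$, as needed. (The gap between $\tfrac56$ and $\tfrac76$ is comfortable slack, so a cruder tail bound would also work.)

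I do not anticipate a genuine obstacle: the argument is a two-line calculus estimate, and the only point needing minor care is the geometric tail bound, where using $x\ge2$ to replace $\tfrac1{x-1}$ by a constant keeps everything clean and avoids any case analysis. A fully self-contained alternative would be to set $\phi(x)=x\ln(1-1/x)+1+\tfrac{7}{6x}$, check $\phi(x)\to0$ as $x\to\infty$, and verify $\phi'\le0$ on $[2,\infty)$ by a direct derivative computation; but I would prefer the series approach above as it is shorter and makes the source of the constant $\tfrac76$ transparent.
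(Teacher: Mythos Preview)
Your proposal is correct and follows essentially the same approach as the paper: expand $\ln(1-1/x)$ as a power series, bound the tail $\sum_{k\ge3}\frac{1}{kx^{k-1}}$ by a geometric series (the paper gets the slightly weaker $\tfrac{2}{3x}$, you get $\tfrac{1}{3x}$), and finish with $e^{-u}\ge1-u$. The only cosmetic difference is that the paper bounds $u\le\tfrac{7}{6x}$ inside the exponent before applying $e^{-u}\ge1-u$, whereas you apply that inequality first and then bound $u$; neither order matters.
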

\begin{proof}
$\sum_{k=3}^\infty\frac{1}{kx^k} \leq \frac{1}{3x^2}\sum_{k=1}^\infty\frac{1}{x^k} \leq \frac{2}{3x^2}$ since $x\geq2$, and hence
$
  \left(1-\frac{1}{x}\right)^x =
  e^{x\ln(1-\frac{1}{x})}  =
  e^{x(-\frac{1}{x}-\frac{1}{2x^2}-\frac{1}{3x^3}\ldots)} \geq
  e^{x(-\frac{1}{x}-\frac{7}{6x^2})} =
  \frac{1}{e}\cdot e^{-\frac{7}{6x}} \geq
  \frac{1}{e}\left(1-\frac{7}{6x}\right)
$.
\end{proof}
Now we may bound,
\begin{align*}
\left(1-\frac{k}{nd}\right)^{nd-k} &\geq \left(1-\frac{k}{nd}\right)^{nd} & \\
&\geq \left(\frac{1}{e}-\frac{7k}{6end}\right)^k & \text{by Lemma~\ref{lmm:approx1e}} \\
&= \frac{1}{e^k}\left(1-\frac{7k}{6nd}\right)^k & \\
&\geq \frac{1}{e^k}\left(\frac{1}{e}-\frac{49k}{36end}\right)^{\frac{7k^2}{6nd}} & \text{by Lemma~\ref{lmm:approx1e}} \\
&\geq \frac{1}{e^k}\left(\frac{1}{2e}\right)^{\frac{7k^2}{6nd}} & ,
\end{align*}
%Furthermore, by noting that $(1-\frac{1}{x})^x \geq \frac{1}{e} - \frac{e}{10x}$ for all $x > 2$, we have
where for the last inequality we use $\frac{49k}{36end} < \frac{1}{2e}$,
which follows from the assumption that $k < n/2 \leq nd/4$ (for $d \geq 2$).
Plugging both bounds into Lemma~\ref{lmm:num_k_matchings}, we get
\begin{eqnarray*}
  M_k(G) & \geq &
    {n\choose k}^2\left(1-\frac{k}{nd}\right)^{nd-k} \left(\frac{kd}{n}\right)^k \\
  & \geq & \left(\frac{1}{2e}\right)^{\frac{2k^2}{n}}\cdot\left(\frac{en}{k}\right)^{2k}\cdot\frac{1}{2\pi k} \times \frac{1}{e^{k}}\left(\frac{1}{2e}\right)^{\frac{7k^2}{6nd}} \cdot \left(\frac{kd}{n}\right)^k \\
  & = &
  \left(\frac{end}{k}\right)^k \cdot \left(\frac{1}{2e}\right)^{\frac{k^2}{n}\left(2+\frac{7}{6d}\right)} \cdot \frac{1}{2\pi k} \geq
  \left(\frac{end}{k}\right)^k \cdot \left(\frac{1}{2e}\right)^{4k^2/n} \cdot \frac{1}{2\pi k}.
\end{eqnarray*}

 %\begin{remark}
%It is possible that the constant $1/16$ can be improved. However, we did not make any attempt to do it.
%\end{remark}

\end{document}